\newtheoremstyle{custom}
                {\topsep}
                {\topsep}
                {\itshape}
                {}
                {\bfseries}
                {}
                {\newline}
                {}
\theoremstyle{custom}
\newtheorem{theorem}{Theorem}
\newtheorem{definition}{Definition}
\newtheorem{lemma}{Lemma}
\newtheorem{proposition}{Proposition}
\newtheorem{assumption}[theorem]{Assumption}
\let\oldproofname=\proofname
\renewcommand{\proofname}{\rm\bf{\oldproofname}}
\newcolumntype{P}[1]{>{\centering\arraybackslash}p{#1}}
\tikzset{
  font={\fontsize{11pt}{11}\selectfont}}
\renewenvironment{abstract}
 {\small
  \begin{center}
  \bfseries \abstractname\vspace{-.5em}\vspace{0pt}
  \end{center}
  \list{}{%
    \setlength{\leftmargin}{4mm}
    \setlength{\rightmargin}{\leftmargin}%
  }%
  \item\relax}
 {\endlist}
\title{The Optimal Size and Progressivity \\ of Old-Age Social Security}
\author{Francisco Cabezon\footnote{Email: jfcabezon@princeton.edu, Address: 20 Washington Rd, Princeton, NJ 08540} \\ Princeton University}
\begin{document}

\maketitle

\begin{abstract}
Almost every public pension system shares two attributes: earning deductions to finance benefits, and benefits that depend on earnings. This paper analyzes theoretically and empirically the trade-off between social insurance and incentive provision faced by reforms to these two attributes. First, I combine the social insurance and the optimal linear-income literature to build a model with a flexible pension contribution rate and benefits' progressivity that incorporates inter-temporal and inter-worker types of redistribution and incentive distortion. The model is general, allowing workers to be heterogeneous on productivity and retirement preparedness, and they exhibit present-focused bias. I then estimate the model by leveraging three quasi-experimental variations on the design of the Chilean pension system and administrative data merged with a panel survey. I find that taxable earnings respond to changes in the benefit-earnings link, future pension payments, and net-of-tax rate, which increases the costs of reforms. I also find that lifetime payroll earnings have a strong positive relationship with productivity and retirement preparedness, and that pension transfers are effective in increasing retirement consumption. Therefore, there is a large inter-worker redistribution value through the pension system. Overall, there are significant social gains from marginal reforms: a 1\% increase in the contribution rate and in the benefit progressivity generates social gains of 0.08\% and 0.29\% of the GDP, respectively. The optimal design has a pension contribution rate of 17\% and focuses 42\% of pension public spending on workers below the median of lifetime earnings.

\end{abstract}
\vspace{1cm}
JEL: G5, H2, H3, H6, I3, J1, J4.

\newpage

\section{Introduction}

At the core of the contract between the government and workers in a public pension system is how much workers are forced to contribute from their active earnings and how much they receive as a function of those earnings when they retire. Consequently, almost every public pension system shares two attributes: earning deductions to finance benefits, and benefits that depend on earnings. Population aging is forcing governments to revisit pension system designs making the welfare analysis to reforms of these two characteristics a crucial input of public policy. Is it socially desirable to increase public pension size by increasing earning deductions? Is it socially desirable to change the relationship between benefits and lifetime earnings, to make them more or less progressive?

The trade-off between incentive provision and income redistribution is at the core of the answer to these questions (\cite{DIAMOND_framework_ss_analysis}). An increase in the pension contribution rate transfers income from active to retired periods, while a change in benefits' progressivity transfers retirement income across workers with different history of lifetime earnings. Therefore, reforms to these two parameters do income redistribution across time and across workers at the cost of distorting workers' incentives to generate earnings, which is costly. The social desirability of these reforms must weigh the distributional gains with the behavioral distortion costs. 
[empirical challenge: they have not been estimated before]
Incorporating these elements in the analysis is theoretically and empirically challenging. For this reason, the social desirability of increasing the pension contribution rate and benefits' progressivity remains important, but open questions.

In this paper, I overcome these challenges and evaluate the social desirability of reforming benefits' progressivity and pension contribution rate from payroll earnings. First, I leverage in the social insurance (\cite{baily1978some} and \cite{chetty2006general}) and the optimal linear-income literature (\cite{sheshinski1972optimal} among others) to build a model that incorporates inter-temporal and inter-worker types of redistribution, and    distortion. From the model, I build expressions for the social gains of reforming the pension contribution rate and benefits' progressivity, which I then estimate empirically using administrative and survey data on the Chilean pension system.

To estimate these expressions empirically, it is necessary to first estimate the behavioral responses to the reforms. Thus, I first decompose the responses into three elements: the response of taxable earnings to future pension benefit, the earnings-benefit link, and the net-of-tax rate. Then, I estimate each of these elements separately using three sources of quasi-experimental variation. 

Once I estimate the responses, I proceed to estimate the inter-temporal and inter-worker gains of reform to the earning deductions rates and the progressivity of the pension system. To measure these effects, I first  show that the inter-temporal and inter-worker gains of reform can be estimated as function of the covariance between lifetime earnings and active consumption and the consumption drop at retirement, and the marginal propensity to consume from pension benefit. Then, I estimate these statistics empirically by using a panel survey merged with the administrative data. 

My estimations provide two sets of results. First, the behavioral responses to the pension contribution and benefits' progressivity are economically significant and increase the costs of the reforms. Second, I find that workers are bad-prepared for retirement, retirement preparedness has a strong correlation with lifetime earnings, and an increase in pension benefits increases retirement consumption. This second set of findings shows a positive value for inter and intra-worker redistribution through the pension system. When these two sets of results are put together, the redistribution and behavioral distortion trade-off is solved in favor of the redistributional gains of reforms, making it socially desirable for an utilitarian central planner to increase benefits' progressivity and the pension contribution rate.

My paper makes three contributions. The first contribution is to provide a framework to analyze the welfare effects of changes in the pension contribution rate and benefits' progressivity, while incorporating behavioral bias, and heterogeneity on productivity and retirement preparation. The framework is simple enough to distinguish between the redistribution value generated by redistribution across workers and time, and the cost of externalities and internalities generated by incentive distortion to generate taxable income (\cite{farhi_behavioral_earnings}). From this general framework, I provide a lower bound on the social gains of reforms as a function of moments that are estimable in the data.

The second contribution is to provide empirical evidence that workers are forward-looking to incentives, responding before retirement to incentives generated by retirement benefits (income and substitution). Specifically, I find that, in response to an unconditional increase in future pension benefits, workers reduce their pre-retirement taxable earnings. Similarly, workers reduce their taxable earnings if the relationship between earnings and future pension is reduced. These findings are novel and add to the incipient literature on the link between taxable earnings and pension benefit (\cite{french_pen_cont_link}). In a similar analysis, I also find that earnings deductions, like payroll taxes, also have a negative effect on gross taxable earnings. Overall, I provide evidence that the pension system is distorting, which is relevant to its design.

The third contribution is to measure the redistributional role that the pension system serves. I find that consumption at retirement is strongly correlated with lifetime payroll earnings. Furthermore, this strong relationship is driven by two forces. First, active life consumption increases strongly with lifetime payroll earnings, making the latter a good tagging of workers’ productivity. Second, the consumption drop at retirement strongly decreases with lifetime earnings, i.e., low-income workers are less prepared for retirement, suffering more from it. These findings add to the literature on heterogeneous retirement preparedness and pension systems design (\cite{kolsrud2021retirement}). Additionally, I provide causal evidence that retirement consumption increases with retirement benefits, with a marginal propensity to consume from retirement earnings of around 0.8. Overall, these findings support the effectiveness of providing social insurance against lifetime productivity and retirement preparedness by doing interworker redistribution through the pension system.

I do my analysis in the context of the Chilean pension system, whose simplicity allows me to isolate the different components of pension system's income redistribution and incentive distortion trade-off. In Chile, old-age pensions have two sources: a forced defined contribution plan and a government PAYGO defined benefit plan. Workers are forced to save 10\% of their income which is invested until retirement, and the government complements that self-funded pension with a subsidy that phases out with the self-funded pension. The whole system is defined by three parameters, the mandatory saving rate and two parameters that govern the design of the subsidy. Figure \ref{chilean_design} shows the subsidy design, which is defined by the minimal pension and largest pension with subsidy. This simplicity facilitates the empirical estimation of the behavioral responses to reforms of benefits' progressivity and the pension contribution rate. 

I structure the paper in three parts: in the first part, I introduce the analytical framework, then I estimate its elements empirically, and finally, I put together the different estimated elements to evaluate the welfare gains of reforming the pension contribution rate and benefits' progressivity.   

In the first part of the paper, I embed a pension system with flexible pension contribution rate and benefits' progressivity into a two-period life-cycle model, where workers exhibit present-focused bias, and they have heterogeneous productivity and access to saving instruments. These three workers' characteristics play different roles in the model. The heterogeneous productivity generates dispersion in lifetime earnings and active life consumption (\cite{saez2001elasticities}), while heterogeneous access to saving instruments rationalizes the heterogeneity on consumption drop at retirement, i.e, retirement preparation (\cite{kolsrud2021retirement}). These two characteristics creates the inter-temporal and inter-worker value for redistribution of reforms. 

The present-focused bias, which is fundamental for the existence of pension systems, has two important consequences on the results. First, in addition to fiscal externality, common to welfare program analysis, behavioral responses also generate an internality on workers welfare (\cite{farhi_behavioral_earnings}). Second, the timing of retirement benefits consumption has an impact on welfare. For example, if a worker consumes all the retirement benefit while active (through borrowing), then to increase retirement benefit does not have an inter-temporal redistribution value. 

From the model I derive two optimality conditions, one for each pension system parameter (contribution rate and progressivity). This conditions can be interpreted as the incentive distortion and income redistribution trade-off. I build a lower bound for these conditions, which I estimate empirically in the second part of the paper.

In the empirical part of the paper, I first focus on the cost of incentive distortion generated by reforms. Reforms distort workers' incentive to produce taxable earnings. An increase in the pension contribution rate delays the reception of taxable earnings, while an increase in benefits progressivity reduces the benefit-contribution link. With the model, I show that these responses are defined by three elasticities: that of taxable earnings to payroll taxes, to taxes on mandatory pension savings, and future pension payment. I use three quasi-experimental variation and admin data to estimate them. 

To estimate the response of taxable earnings to future pension payments, I use the fact that forced pension savings are invested in the capital market, being exposed to idiosyncratic market returns. The investment strategy of these savings depends on age. Specifically, as worker ages, their savings are discretely transferred from more risky investments (stock heavy) to safer ones (bond heavy). I leverage this age-dependent investment strategy and the shock to stocks' returns during the Global Financial Crisis to build a shock to pension savings that depends on date of birth by month. I find that, due to the financial crisis and this age-dependent strategy, workers born less than a year apart have over 25\% difference in their pension savings return obtained in the last ten years before retirement. Then, I use this shock to pension savings as an instrument for future pension payments. In a second stage, I use the instrumented future pension benefit to estimate the effect of future pension benefit on pre-retirement taxable earnings. I found that the elasticity of taxable earnings with respect to future pension payments is around 0.11. 

To estimate the response of taxable earning response to the benefit-contribution link, I use the introduction of the subsidy to self-funded pension in 2008. This subsidy phases-out with mandatory pension savings, thus created an implicit tax on the pension savings and, thus, its introduction reduced the benefit-contribution link for future recipients. To estimate the elasticity, I use a diff-diff design, where the first difference is the given by the timing of the subsidy introduction, and the second one is given the recipient status. To exogenous assign the treatment status, I use the savings accumulated by the worker before the subsidy introduction jointly with a discontinuity on the subsidy design in a first-stage. I find that workers respond the the benefit-earnings link with an elasticity of around 0.22, which is similar, but smaller than the one estimated in the context of a reform to the benefits-earnings link in Poland estimated by \cite{french_pen_cont_link}.

To estimate the final elasticity, that of taxable earnings to a payroll tax, I use the fact that pension fund administrators charge a management fee to workers that act equivalently to a payroll tax and is set independently by the pension fund administrators (PFA). In 2014 one of the PFA changed fees, while the rest kept theirs constant. I use a difference-in-differences design, using workers affiliated with the other PFAs as controls, to estimate the effect that change in the net=of-tax rate has on taxable income. I find that the earnings elasticity to net-of-tax rate is around 0.38, somewhat larger than estimates for Sweden (\cite{saez2019payroll} and \cite{EGEBARK2018163}), smaller than those for Hungary (\cite{payroll_hungary}) and similar to those of Argentina (\cite{payroll_argetina}). I find smaller elasticities for high income workers, which is in line with the findings of \cite{saez2012earnings}. 

Then, I empirically estimate the redistributional gains of reforms. I show that, conditional on relative risk aversion and a retirement preference parameter, redistributional gains of reforms can be decomposed on three statistics: the covariance between lifetime earnings and active life consumption, the covariance of lifetime earnings and consumption drop at retirement, and the marginal propensity to consume from retirement income. The first two covariances capture, respectively, the heterogeneity on productivity and retirement preparation and how good are transfers based on lifetime earnings to hedge against them. I find that lifetime earnings exhibit a strong and positive relationship with active life consumption and a strong negative one with consumption drop at retirement: workers with high lifetime payroll earnings are more productive and are better prepared for retirement. Therefore, there is a large value for inter-worker redistribution that can be done with the pension system. 

The third necessary statistics is the marginal propensity to consume (MPC) at retirement from pension income. Given that workers are bad-prepared for retirement, consumption has a larger social value at retirement than in active periods. However, present-focused workers may consume pension benefit at active periods, which reduces the welfare gains of transfers at retirement. Using the shock to pension savings generated by the Global Financial Crisis, I find that retirement income is effective to raise retirement consumption with a MPC at retirement from pension income of around 0.8.

In the third part of the paper, I put together the estimated redistribution value with the estimated behavioral cost of reforms. That is, I solve the incentive distortion and income redistribution trade-off empirically.   

I find that even in the presence of these significant behavioral costs, it is socially desirable to increase the pension contribution rate and benefits' progressivity. The low preparedness of workers for retirement and its strong positive correlation with lifetime earnings makes the redistributional gains larger than the distortion costs. The lower bound for the social gains is \$0.11 for every \$1 mechanically transferred to retirement by an increase in pension contributions rate, and the social gains for every \$1 transferred by increasing benefits' progressivity are \$0.45. These social gains are large. To put them in context, an increase in pension contribution rate of 1\% will generate social gains equivalent to 0.08\% of the GDP, and an increase in 1\% of benefits' progressivity will generate social gains of 0.29\% of the GDP. The optimal system has a progressivity on pension benefits similar to that of Canada, Japan and Israel, where 42\% of public pension benefits go to workers with earnings beloew the median, and a pension contribution rate of 17\%, similar to that of Israel, Japan and Netherlands.

These results are in line with the findings of \cite{odea_design_pensions} and \cite{braun2017old}. Both find, using structural estimations, that is welfare improving to increase income floors at retirement, and thus to focus transfers on those in the lowest part of the elderly income distribution. 

The rest of the paper proceeds as follows. Section \ref{frame_work} presents the model and the social gains of reforms. Section \ref{context_data} presents the context of the Chilean pension system and the data used in the estimation. Section \ref{estimation} estimates the elements of the model, where Sections \ref{pension_benefit_estimation}-\ref{payroll_tax_estimation} estimate the behavioral response to reforms, Section \ref{earnings_consumption_section} estimates the relationship between consumption and lifetime earnings, and Section \ref{calibration_estimaion} calibrates the free parameters. Section \ref{results} presents the results. The final Section concludes.

\section{Conceptual Framework}

\label{frame_work}

This section presents the model. \\

\noindent \textbf{Setup.--} There are a continuum of workers indexed by $i$ that live for two periods ($t=\{1,2\}$). In the first period, workers actively generate taxable earnings ($z_i$) and in the second, they retire and receive a pension benefit. Worker $i$ chooses her consumption when active and retired $\{c_{i1}, c_{i2}\}$, her taxable income ($z_{i1}$), savings for retirement ($s_i$) and other unobservable decisions like other sources of income, captured by the reduced form variable $\chi_i$. From taxable earnings ($z_i$), the worker pays a pension contribution with at $\kappa$ and a linear income tax at rate $\tau$. Workers have heterogeneous productivity, access to savings instruments, and other sources of income. This heterogeneity is captured by the heterogenous utility cost that taxable earnings $z_i$, $s_i$ and $\chi_i$ generates. Workers may exhibit behavioral biases.   \\

\noindent \textbf{Pension system.--} At retirement, worker $i$ receives a pension $p_i$ that is comprised of two components: a self-funded pension $a_i$ that depends on her previous earnings $z_i$ and a lump-sum transfer $b$. The component $a_i$ is given by the worker pension contribution ($\kappa z_i$) compounded by the returns on her pension investment ($R$) minus a linear tax at rate $\phi$ (i.e., $a(z_i)=k z_i R(1-\phi)$). The lump-sum transfer $b$ is funded by the resources collected by the linear pension tax ($\int_i \phi z R k di$) plus another government pension spending financed by other public revenue sources $E$. Therefore, the lump-sum transfer $b$ is given by $\phi \overline{z}Rk+\overline{E}$, where the over-line indicates population average. \\

\noindent \textbf{Workers' problem.--} Combining the setup with the pension system, the workers' problem is to maximize:

$$\hat{U}_i(c_{i1}, c_{i2}, z_i, s_i, \chi+i)$$

subject to:
\begin{align*}
    &c_{i1} \in B_{i1}(z,\chi)= z (1-\kappa-\tau) + s+g_i(\chi) \\
    &c_{i2} \in B_{i2}(z,\chi)= a(z) +b + f(s) \\
    &a(Z)= (1-\phi) \kappa z R^P \\
    &b=\phi \kappa \overline{z} R^P +\overline{E}
\end{align*}
where $c_{i1}$ and $c_{i2}$ are consumption when active and retired, respectively; $z_i$ is active life taxable earnings; $s_i$ is pension savings; $\chi_i$ are other unobservable worker decision that affect her budget constraint; $\kappa$ is the pension contribution rate; $\kappa$ is linear tax on taxable earnings; $\phi$ is linear tax at retirement on the pension contribution; $a_i$ is self-funded pension; $b$ is the lump-sum transfer at retirement; $R$ is the return on pension contribution investment; and $E$ is other government spending on pensions. $g_i(\chi)$ and  $h_i(\chi)$ are arbitrary functions that capture the effect over the budget constraint that $\chi$ has. Figure \ref{graph_pension_system} shows the design of the pension system by showing the relationship between lifetime earnings and pension benefit at retirement. 

With this pension design, it is possible to control inter-worker and inter-temporal redistribution by modifying two parameters: $\phi$ and $\kappa$. On the one hand, the allocation of benefits at retirement are shaped, in the spirit of \cite{sheshinski1972optimal}, as a linear income tax on the pension contribution, where taxed income is given back as a lump-sum transfer to every worker. Thus, $\phi$ controls the benefits' progressivity with respect to active life earnings. For example, an increase in $\phi$ transfers retirement benefit from workers with high lifetime earnings to those with low. In the polar case when $\phi=1$, every worker receives the same pension, and if $\phi=0$ the pension contribution becomes a mandatory saving rate ---a defined contribution plan--. Overall, reforming to $\phi$ has two effects on the pension design: it changes the lump-sum transfer, and it reduces the slope of the relationship between earnings and benefits (i.e. the benefit-earnings link). The effects of an increase in $\phi$ on retirement benefits are illustrated in Figure \ref{graph_reform} panel (a). 

On the other hand, $\kappa$ controls the inter-temporal redistribution. An increase in the pension contribution rate ($\kappa$) transfers income from the active to the retired period. This has three effects: it reduces the net-of-tax rate of taxable earnings while the worker is active, it increases the lump-sum transfer at retirement, and it increases the slope of the relationship between lifetime earnings and benefit at retirement (benefit-earnings link). Figure \ref{graph_reform} panel (b) illustrates the effects of an increase in $\kappa$ on retirement benefits. \\  

\noindent\textbf{Government problem.--} The government maximizes a utilitarian social welfare function with respect to the two parameters of the pension system:

$$W=\int_i U_i(\kappa,\phi) di$$
 
where  $U_i(\cdot)$ can differ from workers self-perceived utility $\hat{U}_i(\cdot)$. This captures, in a paternalistic way, the bias that workers may exhibit in their choices. 

I further assume that any effect of the reforms on public revenue through the income tax ($\tau$) is transfer back to workers through a reduction (or an increase) of $\overline{E}$. Therefore, by construction, reforms to $\kappa$ and $\phi$ are intra-cohort budget balanced through the lump-sum transfers at retirement ($b$). This allows to isolate the analysis from the extensive literature that analyzes inter-cohort transfers and the fiscal burden of public pension systems, and it focuses on the optimal allocation of resources inside a cohort, across workers and time. Nonetheless, my results are informative on, for example, the optimal way of reducing a  pension system fiscal burden. 

I make two assumptions that simplify government first order conditions: 

\begin{assumption}[\textbf{Separable preferences}]
Preferences are separable:
$\frac{\partial^2 U}{\partial k \partial l}=0 \ \ for \ k,l \in \{c_1,c_2,z,\chi\} \ and \  k\neq l$
\end{assumption}

This assumption is standard in the social insurance literature (see \cite{landais2021value} for a recent example). 

\begin{assumption}
Workers exhibit present-focused bias. While they are active, they discount retired period consumption by the factor $\delta$: $$\frac{\partial \widehat{U}}{\partial c_2}= \beta \frac{\partial U}{\partial c_2}$$
where $\widehat{U}$ is worker perceived preferences and $U$ is the central planner workers' preference used in the social welfare function.
\end{assumption}

Under this assumption workers may disagree with central planner but in in a specific way: different valuation of marginal utility of consumption at retirement (when $\delta \neq 1$). The consequences of this bias are equivalent to hyperbolic discounting, and the central planner approach to it is paternalistic. 

This behavioral bias generates two distortions which have an impact on the welfare effects of reforms. First, given that part of workers taxable earnings are given back to them in retirement and they do not value that future consumption correctly, they do not generate taxable earnings optimally while active. Therefore, when a policy reform discourages taxable earnings, there is an internality on worker's well-being. The literature calls this internality as bias correction effect. 

Second, given a profile of consumption ($c_{i1},c_{i2}$), a worker's  inter-temporal marginal rate of substitution is different than for the central planner. Therefore, worker's allocation of extra retirement income on consumption when  active and retired may be different than that of the central planner. In other words, a worker may unsave when active in order to consume the additional future pension benefit even if she is under-consuming, from the perspective of the central planner, when she is retired. This inter-temporal consumption allocation depends on worker's preferences and her access to saving instruments. For example, a completely myopic worker ($\beta=0$) with perfect access to capital markets will consume when active the whole retirement income by borrowing against it. Thus, in this case, an increase in retirement income will have no effect on retirement consumption. Under present-focused workers, not only matters \emph{when the transfer is done}, but also \emph{when the worker consumes it}, i.e., the marginal propensity to consume out of retirement income.

I make two definitions that simplify the notation of the problem: 

\begin{definition}[Retirement preparedness] 
\label{def_retirement_prepa}
    Given a consumption pair ($c_{i1}, c_{i2})$, let $d_i=U_{c_{i1}}(c_{i1})-U_{c_{i2}}(c_{i2})R$ be the distance to the Euler equation under the central planner's preferences and given the return to pension savings.
\end{definition}

\begin{definition}[Retirement MPC]
\label{def_mpc}
Let $\mu$ be the marginal propensity to consume out of retirement disposable income: 

$$\mu=\frac{\partial c_{i2}}{\partial y_{i2}}$$
where $y_{i2}$ is disposable income at retirement.
\end{definition}

Finally, I make three assumptions driven by data limitations used to estimate the behavioral responses: 

\begin{assumption}    
   \noindent (i) Behavioral responses to reforms-- elasticities and marginal propensity to consume--are constant across income and the value of the pension system parameters.

 \noindent   (ii) Income sources other than taxable earnings do not respond to reforms on the pension system parameters.

 \noindent   (iii) Workers' marginal saving return rate is the same than the return of pension funds $R$. 
\end{assumption}

The first two assumptions are driven by the nature of the variation used to identify behavioral responses. I either, do not have the sample power or the variation is local to a subset of the population and to the current design of the pension system. The third assumption simplifies the expression for social gains of reforms and provides a lower bound for the social gains of reforms if the pension system is more efficient in investing savings than individual workers. However, the expression can be extended to incorporate heterogeneous return rate on savings returns. 

With these assumptions and definitions, Proposition 1 states the social gains of reforming the pension system. 

\begin{proposition}
The welfare effects of a marginal reform to the benefits progressivity ($\phi$) is given by:

\begin{align}
    \frac{dW}{d\phi}=-&R\kappa\left(\mu Cov\left[d_i,z_i\right]+Cov\left[U_{c1}(c_{i1}),z_i\right]\right)  +(\tau +\phi\kappa)R \overline{z}[-\varepsilon_{m\overline{z}}+\varepsilon_{b\overline{z}}] \int_i (\mu d_i+U_{c1}(c_{i1})) di \notag  \\ &+   \mu(1-\beta)(1-\phi)\kappa \int_i(\mu d_i+U_{c1}(c_{i1}))[-\varepsilon_{mz_i}+\varepsilon_{bz_i}] z_i  di   \label{foc_progressivity}
\end{align}
and the social gains of a marginal reform to the pension contribution rate ($\kappa$) is given by:

\begin{align}
    \frac{dW}{d\kappa}= &  \mu Cov\left[d_i,zi \right]+  \mu \overline{d}\overline{z}-R\kappa\left( \mu Cov\left[d_i,z_i\right]+Cov\left[U_{c1}(c_{i1}),z_i\right]\right) \notag \\
     & +(\tau +\phi\kappa)R \overline{z} \left[\frac{1-\kappa}{1-\tau}\varepsilon_{(1-\tau)\overline{z}}-\frac{1-\phi}{\phi}\varepsilon_{m\overline{z}}+\varepsilon_{b\overline{z}}\right] \int_i (\mu d_i+U_{c1}(c_{i1})) di \notag  \\
    &+\mu(1-\beta)(1-\phi)\kappa  \int_i \left[\frac{1-\kappa}{1-\tau}\varepsilon_{(1-\tau)z_i}-\frac{1-\phi}{\phi}\varepsilon_{mz_i}+\varepsilon_{bz_i}\right]( d_i+U_{c1}(c_{i1}))  z_i  di \label{foc_contribution_rate}
\end{align}
where overline means the population average.   
\end{proposition}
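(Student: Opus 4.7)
The plan is to differentiate the planner's objective $W = \int_i U_i \, di$ with respect to each of $\phi$ and $\kappa$ using the chain rule, and then invoke the workers' first-order conditions under perceived utility $\widehat{U}_i$ (satisfying $\widehat{U}_{c_2} = \beta U_{c_2}$ by Assumption 2) to substitute out the direct disutility terms $U_z, U_s, U_\chi$. Because perceived and planner-valued marginal utilities of $c_{i2}$ disagree, these substitutions leave a $(1-\beta)U_{c_2}$ wedge on every behavioral margin that feeds through retirement consumption — this is the internality that produces the $(1-\beta)(1-\phi)\kappa$ factor in both Equations \ref{foc_progressivity} and \ref{foc_contribution_rate}. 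Assumption 1 (separability) ensures each margin's wedge is clean, and Assumption 3(iii) (common return $R$) lets behavioral changes in $s_i$ feed through consumption via the MPC $\mu$ from Definition \ref{def_mpc}, which summarizes the re-optimization of $s_i$ after any exogenous retirement-income shift.

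For the reform in $\phi$, I would first compute the mechanical change in retirement income, $\Delta y_{i2} = \kappa R(\overline{z}-z_i)$, apportion it by $\mu$ into $\Delta c_{i2} = \mu \Delta y_{i2}$ and $\Delta c_{i1} = (1-\mu)\Delta y_{i2}/R$, and assemble the per-worker mechanical welfare $U_{c_1}\Delta c_{i1} + U_{c_2}\Delta c_{i2}$, which rearranges into an expression in $U_{c_1}$ and $d_i$ via $d_i = U_{c_1}-RU_{c_2}$. Integration through the identity $\int_i (\overline{z}-z_i)X_i \, di = -\mathrm{Cov}(z_i, X_i)$ produces the leading covariance block $-R\kappa(\mu\,\mathrm{Cov}[d_i,z_i] + \mathrm{Cov}[U_{c_1},z_i])$. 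I would then treat the behavioral response in $z_i$ through two channels simultaneously: (i) a fiscal externality, whereby each induced dollar of aggregate $z$ returns $(\tau+\phi\kappa)R$ to the lump-sum $b$ (via the budget-balance assumption on $\overline{E}$), valued by each worker at the MPC-weighted shadow price $\mu d_i + U_{c_1}$; and (ii) the internality obtained by plugging the worker's own first-order condition for $z_i$, namely $U_z + U_{c_1}(1-\kappa-\tau) + U_{c_2}(1-\phi)\kappa R = (1-\beta)U_{c_2}(1-\phi)\kappa R$, into the planner's objective, which attaches a $(1-\beta)(1-\phi)\kappa$ weight to each individual response. Decomposing aggregate and individual responses into the structural elasticities $\varepsilon_{m\overline{z}}, \varepsilon_{b\overline{z}}$ and their cross-sectional counterparts $\varepsilon_{mz_i}, \varepsilon_{bz_i}$ then completes Equation \ref{foc_progressivity}.

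The derivation for $\kappa$ proceeds identically with two additions: first, a unit increase in $\kappa$ also mechanically transfers a dollar of active income $z_i$ into retirement at return $R$, producing the extra leading term $\mu\,\mathrm{Cov}[d_i,z_i] + \mu\overline{d}\,\overline{z}$ that captures the intertemporal redistribution value at the worker-specific under-preparedness wedge $d_i$; second, the net-of-tax rate $(1-\tau)$ now enters the behavioral block directly (because $\kappa$ is itself a payroll deduction), generating the $\varepsilon_{(1-\tau)z}$ channel with scaling $(1-\kappa)/(1-\tau)$ from the conversion between pure tax and pension-contribution units. The main obstacle I anticipate is bookkeeping: tracking how $\overline{E}$ absorbs both the mechanical change in $\phi\kappa R\overline{z}$ and the behaviorally-induced revenues from $\tau$ and $\phi\kappa$, and valuing these uniformly at the MPC-weighted shadow price $\mu d_i + U_{c_1}$ for each worker — this is what delivers the $(\tau+\phi\kappa)R\,\overline{z}\,[\cdots]\int_i(\mu d_i + U_{c_1})di$ structure common to both formulas. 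A secondary subtlety is ensuring the $(1-\beta)$ wedge attaches only to the retirement-facing portion of each behavioral response, which requires Assumption 1 together with $\widehat{U}_{c_2} = \beta U_{c_2}$; once these are handled, the remaining algebra reduces to covariance identities and elasticity substitutions.
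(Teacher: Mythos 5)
Your plan follows essentially the same route as the paper: the appendix obtains the planner's first-order conditions by differentiating $W$ and invoking the workers' perceived-utility first-order conditions, which yields exactly your decomposition into mechanical inter-temporal and inter-worker redistribution (converted to covariances via $d_i$ and the MPC $\mu$), a fiscal externality channeled through the lump-sum transfer $b$, and a $(1-\beta)$ bias-correction internality on the taxable-earnings margin, with the behavioral responses then split into the $\varepsilon_{(1-\tau)z}$, $\varepsilon_{mz}$, $\varepsilon_{bz}$ channels by the paper's decomposition lemma. The only component you leave implicit is the consumption-timing internality on the unobserved margin $\chi_i$, which the paper carries in its first-order conditions and then eliminates using Assumption 3(ii); otherwise the two arguments coincide.
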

\begin{proof}
In the appendix. 
\end{proof}

This proposition defines the social gains of marginal reforms to the pension system. The expressions are simple and can be interpreted as the trade-off between the welfare gains of income redistribution, across time and workers, and the welfare cost of behavioral distortions. 

The pension systems generates redistributional gains from two sources. First, it provides social insurance against income drop in retirement. The value of this social insurance is larger for workers more exposed to the retirement risk, i.e., those that are farther away from the Euler equation ($d_i$ from Definition \ref{def_retirement_prepa}). Then, the value of the social insurance is captured by $Cov[d_i,z_i]$. However, as discussed, the income transfer at retirement only has social insurance value if it is consumed at retirement. Thus, the social insurance term is multiplied by the MPC out of retirement income ($\mu$ from Definition \ref{def_mpc}). The second source of redistributional gains is the redistribution of income across workers with different productivities, which is captured by $Cov[U_{c_1}(c_{i1}), z_i]$. In sum, the pension systems serves double rol; social insurance and income redistribution. 

The welfare cost of the distortion generated by the reforms are driven by the response of taxable earnings. As shown in Figure \ref{graph_reform}, the response of taxable earnings are a function of how taxable earnings respond to future benefit payment ($\varepsilon_{bz}$), to benefits-earnings link ($\varepsilon_{mz}$), and to  net-of-tax rate of active life taxable earnings ($\varepsilon_{\tau z}$). The response of taxable earnings generates welfare cost through two margins. First, it reduces government revenue which, in turn, reduces the lump-sum transfer at retirement ($b$). This has a welfare cost, given by welfare value that the lump-sum transfer has. This is the second term on equation (\ref{foc_progressivity}) and fourth term in equation (\ref{foc_contribution_rate}), and is known as the fiscal externality (\cite{kleven2021sufficient}). Second, the response of taxable earnings to reforms moves workers further away from their optimal choice of taxable earnings. This has a welfare cost because workers are present-focused bias and do not fully-internalize the effect that taxable earnings generates in their retirement consumption. This is the last term of each equation and is known as bias correction (\cite{farhi_behavioral_earnings}). 

These expression--(\ref{foc_progressivity}) and (\ref{foc_contribution_rate})-- are generalizations of the optimal  social insurance formula (\cite{baily1978some} and \cite{chetty2006general}), and of the optimal linear-income tax formula derived by the extensive literature that analyzes this type of taxes (\cite{sheshinski1972optimal}, \cite{atkinson1995public}, \cite{itsumi1974distributional}, \cite{stern1976specification}, \cite{dixit1977some}, \cite{helpman1978optimal}, \cite{deaton1983explicit}, and \cite{tuomala1985simplified}, among others). The extensions are to incorporate  heterogeneity on workers' productivity, ability to prepare for retirement, and present-focused bias. In Appendix \ref{appendix_B}, I show that by shutting down these extensions, I  obtain the literature's standard formulas.

I then make two assumptions about consumption preferences. These assumptions simplify the relationship between consumption and marginal utility of consumption. 

\begin{assumption}[\textbf{State dependence}] 
Workers preferences are time separable, with time discount factor of $\beta^{-1}=R^P$, and state-dependent preferences with respect to retirement such that:
$$\frac{\partial U(c)}{\partial c_2}=\beta \theta \frac{\partial U(c)}{\partial c_1}$$
for any consumption $c>0$.
\end{assumption}

\begin{assumption}[\textbf{CRRA}]
Preferences for consumption are CRRA with relative risk aversion parameter $\gamma$. 
\end{assumption}

These two assumptions are standard in the social insurance literature. With them, I can approximate the social value of the marginal income transfers as simple functions of active and retired consumption. 

\begin{lemma}
The distance to the Euler equation and the marginal utility of active consumption 
 are given by: 
\begin{align}
    &d_i (c_{i1}, c_{i2}) \approx U_{c_{1}} (c_{i1})\left[(1-\theta)+\theta \gamma\frac{c_{i2}-c_{i1}}{c_{i2}} \right]  \\
    &U_{c_1}(c_{i1})\approx U_{c_1}(\overline{c})\left[1-\gamma\frac{c_1-\overline{c}}{\overline{c}} \right]
\end{align}
where $\overline{c}$ is the certain equivalent active consumption,i.e., $\overline{c}$ is such that $\int_i U_{c_1}(c_{i1})di=U_{c_1}(\overline{c})$.
\end{lemma}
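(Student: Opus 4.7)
The plan is to evaluate the two expressions directly using the structural assumptions and then linearize via first-order Taylor expansion around the appropriate benchmark point.

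First, I would simplify the distance to the Euler equation. Under the state-dependence assumption, $U_{c_2}(c_{i2})=\beta\theta U_{c_1}(c_{i2})$, and combined with $\beta^{-1}=R^{P}=R$ this gives
\begin{equation*}
d_i \;=\; U_{c_1}(c_{i1}) - R\,U_{c_2}(c_{i2}) \;=\; U_{c_1}(c_{i1}) - \theta\,U_{c_1}(c_{i2}).
\end{equation*}
Under CRRA, $U_{c_1}(c)=c^{-\gamma}$, so factoring out $U_{c_1}(c_{i1})$ yields
\begin{equation*}
d_i \;=\; U_{c_1}(c_{i1})\Bigl[\,1-\theta\bigl(c_{i1}/c_{i2}\bigr)^{\gamma}\Bigr].
\end{equation*}
A first-order Taylor expansion of $x\mapsto x^{\gamma}$ around $x=1$, applied at $x=c_{i1}/c_{i2}$, gives $\bigl(c_{i1}/c_{i2}\bigr)^{\gamma}\approx 1-\gamma(c_{i2}-c_{i1})/c_{i2}$. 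Substituting this approximation back produces exactly the first claimed expression.

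For the second approximation I would apply a first-order Taylor expansion of $U_{c_1}(c)=c^{-\gamma}$ around the certain-equivalent consumption $\overline{c}$, which is well defined because $U_{c_1}$ is strictly monotone under CRRA so the equation $\int_i U_{c_1}(c_{i1})di=U_{c_1}(\overline{c})$ has a unique solution. Then
\begin{equation*}
U_{c_1}(c_{i1}) \;\approx\; U_{c_1}(\overline{c}) + U_{c_1 c_1}(\overline{c})\,(c_{i1}-\overline{c}) \;=\; U_{c_1}(\overline{c})\Bigl[1-\gamma\,(c_{i1}-\overline{c})/\overline{c}\Bigr],
\end{equation*}
since $U_{c_1 c_1}(\overline{c})/U_{c_1}(\overline{c})=-\gamma/\overline{c}$ under CRRA. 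This is precisely the second statement.

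No step here is substantively hard; the main subtlety is the bookkeeping choice of expansion point. For the first formula, expanding the ratio around $c_{i1}/c_{i2}=1$ rather than, say, around $c_{i1}=c_{i2}$ in levels is what makes the denominator $c_{i2}$ appear (as opposed to $c_{i1}$) and lines the expression up with the ``consumption drop at retirement'' statistic $(c_{i2}-c_{i1})/c_{i2}$ that the empirical section of the paper targets. The approximation error in each case is $O\bigl((\Delta c/c)^{2}\bigr)$, which is why the statements are written with $\approx$ rather than equality.
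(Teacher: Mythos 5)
Your proof is correct and follows essentially the same route as the paper's: combine the state-dependence assumption $U_{c_2}=\beta\theta U_{c_1}$ with $\beta^{-1}=R$ and a first-order Taylor linearization, once for the Euler gap and once around the certainty-equivalent consumption $\overline{c}$. The only difference is the expansion point for the first formula --- the paper expands $U_{c_2}$ around $c_{i1}$ in levels and so lands on the first-order-equivalent ratio $(c_{i2}-c_{i1})/c_{i1}$, whereas your expansion of $(c_{i1}/c_{i2})^{\gamma}$ around $1$ reproduces the $(c_{i2}-c_{i1})/c_{i2}$ denominator exactly as stated in the lemma --- a bookkeeping choice you correctly flag as immaterial at first order.
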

\begin{proof}
    In the appendix.
\end{proof}

Lemma 1 states that the distance to the Euler equation ($d_i$ of definition 1) can be approximated as a linear function of the consumption drop in retirement. Equivalently, marginal value of consumption of a worker when active can be approximated as a linear function of the ratio between active life consumption of that workers and the certain equivalent of the population. The linear functions are useful to plug them in the co-variances that drive the redistributional gains of reforms ($Cov[d_i,z_i]$ and $Cov[U_{c_1}(c_{i1}),z_i])$, helping to shed light on what forces are driven the social gains of reforms. However, the approximation error is larger for some observations so I use the exact definition for the estimation of social gains of reforms in Section \ref{results}. 

The welfare effect of reforms to the pension contribution rate ($\kappa$) and benefits progressivity ($\phi$) can be computed as function of a few parameters. I divided the parameters in two sets: (i) those that I estimate empirically $\left(\varepsilon_{\tau z}, \varepsilon_{m z}, \varepsilon_{b z}, \mu, Cov[d_i,z_i], Cov[U_{c_1}(c_{i1})], \overline{d}, \overline{z}, R\right)$ in Section \ref{estimation}, and (ii) those that I calibrate using literature estimates ($\gamma, \theta, \beta$). 

\section{Context and Data}

\label{context_data}

\subsection{Context: Chilean Pension System}

The Chilean pension system consist of a mandatory defined (DC) contribution plan that is complemented by a government subsidy. In the mandatory DC part, workers are required to save 10\% of their payroll income, until a cap, in a personal retirement account which is invested until retirement. The personal retirement account is invested by a Pension Fund Administrator (PFA) selected by the worker. The PFAs are for profit and privately owned firms, authorized by the government to be pension fund administrators. In the relevant period of the analysis there were six PFAs. In return for the investment service, each PFA charges their affiliated workers a management fee, which is deducted from payroll income every month. The managment fee rates are set independently by each PFA.

Workers can choose between administrators. Since the inception of the system in 1981 until 2008, workers had to choose a PFA at the moment of receiving their first income. At the moment of election, workers were shown information about each PFA current fee and past returns. After the initial election, workers can actively switch across PFAs by making in-person visits to both the old and new PFA offices. In practice, switches between PFAs have been not common.\footnote{Between 2009-2019 less than 10\% of the workers switched between PFAs} This stickiness to the initially chosen PFAs motivated the government to introduce an auction of new workers from 2008. In this auction, conducted every two years. the PFA with lowest management fee rate receives all the workers earning their first paycheck income in the following two years.

Workers' pension savings are held in personal accounts, invested in five funds (A, B, C, D, E). The investment strategy of each funds is defined by the PFA, but with many restrictions on the portfolios. In particular, different funds have different limits on the share of stocks and bonds that can be held, thus regulating their riskiness. Fund A is the riskiest one and fund E the safest one, and the other ones are in between. By default, the allocation of worker savings between funds is done depending worker's age, going from a more risky to a safer portfolio as the worker ages.  The worker's account is invested in fund B until the worker turns 36 years old, when a 20\% of his account is moved to fund C. Every year thereafter, another 20\% is moved until the total of his account is in fund C when he turns 40 years old. Similarly, when the worker is 10 years from the retirement (50 for females and 55 for males), his account is move from fund C to fund D, in the same fashion with 20\% increments every year. After that, personal account is invested in fund D for the rest of the worker's life. Instead of the default investment allocation, the worker can actively choose investment funds for his savings, but this was not a widespread practice in the past. For example, less than 5\% of the workers actively choose funds between 2007-2014.    

Workers can choose to retire after they turn 65 years old for males, and 60 for females.\footnote{Or sooner if their savings are sufficient to to fund a pension with a replacement rate of 70\% of their average income in the last 10 years.} Here, the retirement concept refers to be able to receive an income flow from the personal retirement account (self-funded pension), but the worker can keep working with no penalties. At retirement, workers convert the stock saved in their retirement account in a flow by buying an annuity to an insurance company or by keeping the savings in the PFA and making withdrawals from it under conditions defined by the law. 

In addition to the self-funded pension, there is a government subsidy that acts as a complement to it financed from general government revenue and it was introduced in 2008. The subsidy is means-tested, aimed at the 60\% poorer and phases out with the self-funded pension. There are two parameters that define the subsidy: the minimum pension (PBS) and the largest pension with subsidy (PMAS). The PBS defines a minimal pension that someone with \$0 saving receives, while the PMAS defines the self-funded pension value at which the subsidy becomes \$0. The ratio at which the subsidy phases-out with self-funded pension is defined by the ratio between these two parameters, and it creates an implicit tax on pension savings. For every \$1 that a workers saves, he looses $\$\frac{PBS}{PMAS}$ of subsidy at retirement. Overall, the subsidy is an important part of the system: 68\% of the elderly receive it and, for recipients, the subsidy represents 62\% of their total pension payment.

\subsection{Data}

I use administrative data from the Chilean pension system. In this data I can observe, for the whole population since 2002, each worker's monthly payroll income, monthly pension savings account, voluntary contributions, voluntary savings account, pension fund administrator, the investment fund, demographics, retirement date, and pension subsidy received. 

A sub-sample of this administrative data is matched to a panel survey (Encuesta de Prevision Social) that asks about family composition, income, wealth, consumption, education, and other demographics. Several rounds of this survey were held in 2002, 2004, 2006, 2009, 2012, 2015, and 2020.   

In my analysis I focus on the period between 2004-2019.  This period encompasses the variation that used in my empirical identification of behavioral responses to changes in the pension system. Table \ref{sum_admin} shows the descriptive statistics of pension administrative data for this period. Table \ref{sum_survey} shows the summary statistics of the EPS survey variables used in the analysis.

\section{Estimation}

\label{estimation}

In this section I estimate the elements that govern the welfare effects of reforms. I start with the elements that govern the behavioral response to reforms: three elasticities (taxable earning response to future pension benefit, benefit-contribution link, and payroll tax) and the marginal propensity to consume from retirement income. I then compute the moments that govern the redistributive value of reforms that are the relationship between lifetime earnings and active-life consumption, and the relationship between lifetime earnings and consumption drop at retirement. In the last section I calibrate the relative risk aversion, consumption at retirement preference, hyperbolic discounting, and pension system return.

\subsection{Taxable earnings elasticity to future pension benefit}

\label{pension_benefit_estimation}

To estimate the elasticity of pre-retirement income to future pension payment, I use in the heterogeneous exposure to market returns that workers faced during the Global financial crisis, which affected their future pension benefit. 

Pension savings are invested in capital markets, thus, exposed to idiosyncratic returns. The savings are invested in three relevant funds (B, C, and D), defined by the share of the fund invested in variable income securities: 60\% of fund B, 40\% of fund C and 20\% of fund D are invested in variable income securities, while the rest is invested in fixed income securities. The aim of this regulation on portfolio is to have investment alternatives with different levels of risk.  

By default, workers savings are allocated in these funds depending on his age. At the beginning of a worker active life, all his savings are invested in fund B and then, they are switched at specific points to the less risky funds. Specifically, the month the worker turns 36, 20\% of his savings are switched from fund B to fund C, and another 20\% each year after, until all the savings are in fund C when the worker turns 40. Similarly, the month the worker turns 10 years before retirement age (50 for females and 55 for males), 20\% is switched form fund C to fund D, and 20\% each year after. This design of savings allocation has two useful characteristics. First, it is discrete. Second, the switch happens in the month when the worker turns certain ages. Therefore, the investment allocation and risk exposure in a specific time depends on the month of birth of the worker. In practice, this age-dependent investment strategy is binding in the data. Figure \ref{fund_allocation} shows the empirical allocation of workers savings across the funds at each age (defined in months). 

I use this heterogeneous exposure to investment risk jointly with the Global Financial Crisis (GFC) shock to security returns to build an instrument for future pension payments that depend on date of birth. Then, I use the instrumented future pension payment in a second stage, with pre-retirement taxable earnings as the dependent variable.

The instrument is the money-metric return obtained during the GFC if the worker $i$ followed the default age-dependent investment strategy:  

\begin{equation} \label{gfc_return}
    \rho_{i}=\prod_{t=B}^{E} (\alpha_{it}^C(age_{it})\cdot R_{t}^C+\alpha_{it}^D(age_{it}) \cdot R_{t}^D)*S_{i}^{pre}
\end{equation}

where $t$ is time in months; $B$ and $E$ mark the beginning and end of the Global Financial Crisis (GFC), respectively; $\alpha_{it}^j$ is the share of worker $i$ savings in fund $j$ defined by his age at time $t$; $R_{t}^j$ is the average return of the system for fund $j$ at time $t$, and $S_{i}^{pre}$ are worker $i$ pension savings on the month before the start of the GFC. Therefore, $\rho_{i}$ captures the return in dollars obtained during the GFC by the savings accumulated before the crisis. 

For the beginning ($B$) and ending ($E$)  points I use two alternative definitions, one long and one short, commonly used in the literature. The ending of the long one is when the US labor supply recovered, while the ending of the short is when the S\&P500 returned to pre-crisis levels. Figure \ref{pen_sav_shock} shows $\rho_i/S_{i}^{pre}$ during the financial crisis for workers with different born date, and Figure \ref{last10y} shows the return during workers' last 10 years by cohort for those turning the retirement age between 2009 and 2020. In these two figures, one can see the large and heterogeneous effect that the Global Financial Crisis had on the pension savings of workers born a few months apart. 

As robustness check, I use two sub-samples of workers: one wide and one narrow. The wide sample is given by the cohorts that turn the retirement age between 2008 and 2020. These are the cohorts that made switches between funds during the GFC, and therefore had heterogeneous exposure to it.   The narrow sample is given by those that turn the retirement age between 2016 and 2019. These cohorts, given their dynamic of fund switching during the GFC, are the cohorts that had the largest heterogeneity on returns during the GFC. Figure \ref{last10y} shows the return on the last 10 years of active life for both samples.

Before doing the two-stage estimation, I first provide evidence of the causal interpretation of my estimation by showing the relationship between taxable earnings and return during the GFC, i.e., a reduced form estimation. I use two samples: a treated sample and a placebo one. The treated sample is given by the narrow sample, those that turn the retirement age between 2016-2019 and had the largest heterogeneity on returns during the GFC. The placebo sample is given by those cohorts that will turn the retirement age between 2021-2024. These cohorts were young enough during the GFC so that their pension savings were invested in the same way, there is no heterogeinity on their return. 

I split the cohorts of each sample in two, high and low exposure to GFC shock. For the treated sample, I split them by the return they obtained during the GFC: those below the median of return are high-exposed and those above are low-exposed. By construction, every cohort of the placebo sample obtained the same return during the GFC. Thus, I split them in two by the order of cohorts of the treated sample. I order them from younger to older and assign them to high and low exposure by the equivalent age rank of the treated sample.  

Figure \ref{gfc_earnings_reducedform} shows the annual taxable earnings relative to 2005, in the left axis, and the average differential in return of pension savings since 2005, in the right axis, for high and low-exposed cohorts. Taxable earnings are controlled by age fixed effect and gender. In Panel (a), I show it for the treated sample, and in Panel (b) for the placebo sample. We can see, first, that there are significant heterogeneity in exposure to the GFC shock for the treated sample, with an average difference in returns of around 12.5\%, while there is no heterogeneity in return among the cohorts of the placebo sample. Second, more negatively exposed cohorts – that had a larger shock to their pension savings—significantly increased their pre-retirement earnings after the GFC, while there was no difference before the GFC. Also, there is no significant difference in taxable earnings before or after the GFC for cohorts of the placebo sample. This provides evidence that the difference in taxable earnings is not driven by the relationship between GFC exposure and cohort age rank among the sample.

I then estimate the structural equation. I use $\rho_i$ in a first stage as an instrument for pension payment at retirement. Specifically:
\begin{equation} \label{first_stage_eq}
    log(p_{i})=X_{i}\beta_1+\beta_2 log(\rho_{i})+\epsilon_{i}
\end{equation}
where $p_{i}$ is pension payment received by worker $i$, $X_{i}$ are controls (age fixed-effects, gender and pre-GFC pension savings), and $\rho_i$ is the returned obtained during the GFC by following the default age-dependent investment strategy, defined in equation (\ref{gfc_return}). Given that $\rho_i$ is a stock and $p$ is a flow, I normalize $\rho_i$ by the cost of a retirement annuity in 2020. Thus, $\beta_2$ can be interpreted as the percentage effect that a percentage change of the self-funded pension generated by the GFC has on on the pension benefit.

Table \ref{first_stage_FC_shock} shows the estimation of the first-stage (equation  (\ref{first_stage_eq})). The point estimate is 0.54 and robust to controls, the definition of  the sample and GFC duration. The reason for this coefficient to be smaller than 1 is the existence of the government subsidy. The first stage is strong, with a t-statistic of the instrument above 100. 

I then use the instrumented future pension payment ($\widehat{p_{i}}$) in a second stage, where the dependent variable is taxable earnings for the months after the GFC and before retirement age:
\begin{equation} \label{second_stage_eq}
    log(y_{it})=\alpha_t+X_{it}\beta_1+\gamma \widehat{log(p_{ia})}+u_{i}
\end{equation}
where $\alpha_t$ is a time fixed effect, and $X_{i}$ are controls consisting of age, gender, and pre-financial crisis pension savings. Table \ref{pen_shock} shows the estimation of the second-stage (equation (\ref{second_stage_eq})). 

I find an elasticity of taxable earnings to future pension payment of 0.11. The estimation is precise and robust to the definitions of the sample and the duration of the GFC. 

One concern about my identification strategy is that workers are allowed to switch among funds. Three arguments against this. Figure  \ref{fund_allocation}. Second, less than 6\% of work switched across funds between 2008-2014. This can be seen in figure  sample in the appendix.  I show that the fraction of workers that voluntarily switched their funds is virtually zero. Third, my instrument first stage is strong. 

I provide further evidence on the causal interpretation of my estimation by leveraging on the panel nature of my data. I do an event study analysis by generating placebo treatments before the GFC: 

\begin{equation} \label{dynamics_pen_sav_shock}
    log(y_{it})=\alpha_t+X_{it}\phi+ \sum_{k=B-K}^{B} \beta_t D_{iat}^k+\sum_{k=E+1}^{E+K} \beta_t D_{iat}^k+\epsilon_{it}
\end{equation} 
where $D_{i}^k$ is 0 except for the month $k$ when it takes the value of $\rho_i$. Estimates are normalized with respect to $\beta_{k=B}$. Figure \ref{dynamicsresponse} shows the coefficient estimates ($ \beta_t$) for the long and the short definition of the GFC. The heterogeneity on returns has no effect on taxable earnings before the GFC, and has a consistent effect after. This provides additional support for the causal interpretation of the estimates.

\subsection{Marginal propensity to consume from retirement income}

\label{mpc_estimation}

In Section \ref{frame_work}, I showed that the marginal propensity to consume from pension benefits is a necessary statistic to estimate the welfare effects of reforms to the pension system.

To identify the response of retirement consumption to pension benefits, I use the shock to pension savings, and thus to pension benefits, that the GFC generated. This is the same identification strategy than the previous section, but with a first-stage in levels rather than in logs and with retirement non-durable consumption as dependent variable in the second stage. 

In the second stage I use non-durable consumption in retirement as dependent variable. This variable comes from the EPS survey. Therefore, the sample is composed of those retired workers surveyed at least one time in the rounds of  2009, 2012, 2015, and 2020, and who turn retirement age between 2008-2020 (wide sample in the previous section). As robustness check, I also estimate the regression with the short sample. The first-stage is given by the following equation:

\begin{equation} \label{first_stage_eq_MPC}
    p_{i}=X_{i}\beta_1+\beta_2 \rho_{i}+\epsilon_{i}
\end{equation}

and second one is given by the following equation:
\begin{equation} \label{second_stage_eq_mpc}
    c_{it}=\alpha_t +X_{it}\phi+ \gamma \widehat{p}_i +\epsilon_{i}
\end{equation} 
where $c_{it}$ is non-durable consumption at retirement by individual $i$ in year $t$; $\alpha_t$ is year fixed effect; $X_{it}$ are controls that include age, gender, household composition, and pre-GFC pension savings; and $\widehat{p}_i$ is given by the first-stage from equation \ref{first_stage_eq_MPC}. Again, $\rho_{i}$ is normalized by the cost of an annuity, so $\beta_2$ can be interpreted as the effect that \$1 of pension savings earned during by the GFC has on future pension benefit.

The first stage is reported in table \ref{first_stage_FC_shock}. The instrument is strong with a t-statistic of the instrument of above 100. The point estimate $\beta_2$ is smaller than 1, i.e. \$1 extra of pension savings generate less than \$1 extra of benefits, because the subsidy.

Table \ref{eld_disp_income_pension} shows the results of the 2SLS estimation. Estimates are robust to the specification, with an MPC from pension benefit of 0.78. Overall, this finding supports the fact that income transfers at retirement increase retirement consumption, thus, active workers do not unsave to consume the future pension benefits.

\subsection{Taxable earnings elasticity to the benefits-earnings link}

\label{benefit_contribution_estimation}

In this Section I use the introduction of the pension subsidy, that reduced the relationship between earnings and benefits, to estimate the elasticity of pre-retirement earnings to the benefits-earnings link.

In July 2008, the government introduced a subsidy to complement the self-funded pension financed through the mandatory defied contribution plan. As described in Section \ref{context_data}, this subsidy is means-tested for the 60\% poorer of the population, and its design is defined by two parameters. The first one is the basic amount that someone with \$0 self-funded pension receives, which is called PBS (Basic Pension). The second parameter is the defined by the PMAS (largest pension with subsidy). By design, the PBS phases out with self-funded pension until it becomes \$0 at the PMAS, after which it stays in \$0.  Figure \ref{chilean_design} shows the subsisdy design. 

The subsidy introduction had two effects on future recipients. First, it increased their future pension. Second, it reduced the benefits-earnings link. For every \$1 that a future recipient saved in his personal account, his future pension increased in only \$0.66 because he lost future subsidy. For a worker to be a future recipient, two conditions has to be met: she belongs to the 60\% poorer of the population, and her personal savings at retirement finance an annuity (self-funded pension) below the PMAS. 

I use the subsidy design and the timing of it introduction to identify the effect that the benefits-earnings link has on pre-retirement taxable earnings. To do so, I use two characteristics. First, the savings accumulated until the subsidy introduction are not affected by the design of the subsidy. This savings are exogenous on the particular definition of the PMAS by the reform that introduced the subsidy. Second, the subsidy introduction changed the implicit tax rate on additional pension savings on an heterogeneous way depending on pre-subsidy savings. 

Figure \ref{het_change_implicittax} shows the change of the implicit tax for workers with different pre-subsidy savings, assuming that everyone meets the means test, i.e., belongs to the 60\% poorer. We can see that there are three groups. First, workers with pre-subsidy pension savings low enough that no matter how much they earn after the subsidy introduction they will end up being receivers at retirement. For this group, the subsidy introduction increased the average tax on their additional pension savings in 33\%. I call the upper limit of pre-subsidy savings for this group as $\underline{a}$. Second, workers that have pre-subsidy savings already above PMAS. No matter what they do, they will not be future recipients because they have already save too much. For this group, the subsidy did not change the average tax on additional pension savings. I call the lower limit of pre-subsidy savings for this group as $\overline{a}$. Finally, we have the intermediate group (between $\underline{a}$ and $\overline{a}$) that is composed by workers that if they generate enough earnings after the subsidy introduction, they can be no recipients. For this group, the subsidy introduction changed the average tax in an amount between 33\% and 0\%, depending on their pre-subsidy savings.  

I use this heterogeneous effect of the subisdy introduction on addtional  pension savings average tax across to identify the effect of the benefits-earnings link on taxable earnings. I first defined for every worker the pre-subsidy savings that defined the limits of never recipient ($\overline{a}$) and always recipient ($\underline{a}$). These limits are defined as:
\begin{align*}
    &\overline{a}_{i}= \prod_{t^0}^{T^i} PMAS \cdot P_{i} \cdot (1+r_{it})^{-1} \\
    &\underline{a}_{i}= \prod_{t^0}^{T^i} (PMAS-0.1\cdot z^{max}_t) \cdot P_{i} \cdot (1+r_{it})^{-1}
\end{align*}
where $i$ indicates worker; $t^0$ the date of the subsidy introduction (July 2008); $T^i$ month at which worker $i$ turns retirement age; $P_i$ is the price of an annuity for worker $i$, which depends on the month at which she turns the retirement age, gender and dependents; $r_{it}$ is the return of worker $i$ pension savings investment during month $t$; and $0.1 \cdot z^{max}_t$ is the limit on pension contribution during month $t$. Note that these limits are worker specific because depends on her age at the subsidy introduction, gender, dependent, and the specific return to her investment. Nonetheless, these worker specific characteristics are not in her control.  

Then, I create a dummy variable ($I^a_i$) to take the value of 1 if a worker is always a recipient (i.e $a_i^{pre}<\overline{a}$) and 0 if it is never a receiver ($a_i^{pre}<\overline{a}$), given her pre-subsidy ($a_i^{pre}$). Figure \ref{presubsidy_above_below} shows the average $I^a_i$ for bins of normalized pre-subsidy savings around the limits. The normalization is given by $m_i=(a_i-\underline{a_i})/(\overline{a}_i-\underline{a}_i)$, which takes  in account that the upper and lower bound are worker specific. If $m_i$ is less than one, then she is in the always below region, and if it is above 1 she is in the always above region. We can see that 58\% of those below the lower limit are recipient at retirement, while none of those above the upper limit are. 

Given that a worker is a recipient only if he belongs to the 60\% poorer of the population, I use this variable in a first stage as an instrument for being a receiver after retirement ($r_{i} \in \{0,1\}$):
\begin{equation} \label{first_receiver}
    r_i= \alpha_0 + \alpha_1 I^a_i + \alpha_2 X_{i}+ e_{i}
\end{equation}
where $r_i$ is a dummy that takes value 1 if worker $i$ received subsidy at retirement, $I^a_i$ is a dummy that takes value 1 if workers pre-subsidy savings were below the lower threshold, and $X_{i}$ are worker $i$ specifics controls, like gender and cohort fixed effects.

Table \ref{first_stage_receiver} shows the estimation of this first-stage. Workers below the lower bound are 58\% more likely of receiving the subsidy at retirement than workers above the upper bound. The first stage is strong, with a t-statistic above 90, and robust. This strong first stage can be seen in Figure \ref{presubsidy_above_below}, where we can see that a stable share of 58\% of workers below $\underline{a}_i$ are recipient, while none of those above $\underline{a}_i$ are.

I use the first-stage defined in equation (\ref{first_receiver}) in a second stage following a differences-in-differences design:
\begin{equation}\label{second_receiver}
    log(z_{it})=\alpha_0 + \alpha_t+\alpha_i+ \gamma Post \cdot \widehat{r}_i + \epsilon_{it}
\end{equation}
where $z_{it}$ is taxable earnings of worker $i$ in year $t$, where years are defined starting from July; $\alpha_t$ are time and worker fixed effects; $Post$ is a dummy that takes the value 1 for periods after the subsidy introduction (July 2008); and $\widehat{r}_{it}$ is the instrumented probability of being a receiver of the subsidy.

Table \ref{second_receiver_table} shows the estimation of the second-stage using a Two-Stage Least Squares. I find that recipients reduced taxable earnings in 1.2\%. This estimate is precise; and robust to controls and the bandwidth around the limits (lower and upper) used to the estimation.  

Crucially, my identification strategy relies in the assumption that taxable earnings of workers above and below the limits would have evolved similarly in the absence of the subsidy introduction. I do placebo analysis that support this assumption. First, I test the time difference generated by the timing of the subsidy introduction. I run an event study following the following equation:

\begin{equation} \label{dynamics_subsidy_eq}
    log(z_{it})=\alpha_t+\alpha_i+\sum_{k=I-K}^{B} \beta_t D_{iat}^k+\sum_{k=I+1}^{I+K} \beta_t D_{iat}^k+\epsilon_{it}
\end{equation} 
where I is the subsidy introduction date (July 2008), $D_{it}^k$ is zero except in year $k$ when takes the value of $\widehat{r}_{i}$.

Figure \ref{dynamic_subsidy_intro} plots the coefficients of the event study for 5 years before and after the subsidy introduction (i.e., $K=5$).The parameters are relative to that of the year before the subsidy introduction. We can see that the treatment has no effect before the subsidy introduction. This is evidence that treated and control sample did not have different dynamics before the subsidy introduction. 

The second threat to my identification strategy comes from the fact that I am assign treatment based on earnings before subsidy introduction. It can be that workers with lower earnings before the subsidy introduction would evolve differently, for example, because of particular life-cycle dynamics. I provide evidence against that this drives my results by running a placebo on the definition of the thresholds. Instead of using the PMAS defined by the reform, I use two alternative placebo PMAS; one 20\% larger ($PMAS_{L}^{P}=1.2 PMAS$) and one 20\% smaller ($PMAS_{L}^{P}=0.8 PMAS$). With these two placebo PMAS y build the placebo limits $\underline{a}, \overline{b}$ for pre-subsidy savings. In Figure \ref{presubsidy_above_below_placebo} we can see that the placebo limits do not generate a first stage. I show this formally in Table \ref{first_stage_receiver_placebo}, column (1) for the smaller placebo PMAS  and column (2) for the larger one, where the coefficient of the instrument is a precise 0. 

Given the lack of first stage, in order to run the second stage I build an instrumented recipient variable by using the first stage  the estimation of the real first stage (with the real PMAS). With these, those below the limits are, artificially, assign to have 60\% larger probability of being recipient at retirement. I use this instrumented placebo recipient variable in the second stage of the equation (\ref{first_receiver}) for both placebo limits. Table \ref{first_stage_receiver_placebo} shows this second stage, column (1) for the smaller placebo PMAS and column (2) for the larger one, where we can see that there is no effect on taxable earnings. That is, there is no difference in the dynamics of taxable earnings between high and low earners in the absence of the subsidy's treatment. This is true for sample earners below and above the limits defined by the subsidy design. 

So far, I provided evidence that the subsidy introduction causally affect pre-retirement taxable earnings. However, this effect is driven by the increase in the future benefit and by the reduction in the benefits-earnings link. I am interested in latter, thus I disentangle them. To do so, I first use the pre-subsidy savings to instrument the subsidy amount at retirement for recipients:
\begin{equation} \label{first_subisdy}
    s_i= \alpha_0 + \alpha_1 a_i^{pre} + \alpha_2 X_{i}+ \epsilon_{it}
\end{equation} 
where $s_i$ is the share of total pension that the subsidy represents. Table $YY$ shows the estimates of this first stage. Overall, the share of the pension that the subsidy represents is small because I am considering only workers that are close to the lower limit. 

In then used the instrumented subsidy amount in the second-stage as control in the following way:
\begin{equation} \label{second_withsubisdy}
    log(z_{it})=\alpha_0 + \alpha_t+\alpha_i+ \gamma_0 Post \cdot \widehat{r}_i + \gamma_1 Post \cdot \widehat{r}_i \cdot \widehat{s_i}+ \epsilon_{it}
\end{equation}

Table $XX$ shows the estimates of this second stage. We can see .... . As robustness, I use taxable earnings elasticity to future pension benefit, that I estimate in section \ref{pension_benefit_estimation}, jointly with the first stage to disentangle the effects. I find similar results.

\subsection{Taxable earnings elasticity to payroll tax}
\label{payroll_tax_estimation}

In this subsection, I estimate the elasticity of pre-retirement payroll earnings to payroll taxes using a change in the managment fee that one of the pension fund administrator did in 2014. 

As mention in the Context, since 2008, there is an auction for new workers every 2 years. The Pension Fund Administrator (PFA) that offers the lowest fee is forced to reduce its managment fee to the bidded one and in exchange receives every new worker until the next auction.  In 2014, Planvital PFA proposed a reduction of 1.89pp in its fee and won that year's auction for new workers. This fee reduction generated an exogenous increase in the monthly net-of-tax-income of 2.44pp to the 384,778 workers already affiliated with Planvital, while the rest of the workers affiliated with other PFAs kept their net-of-tax rate at the same level. Figure \ref{fees_ts_timeframe} shows the time series of administration fees of the different pension fund administrators for the period 2008-2019. Fees during this period were stable, with the only significant change in fee that of Planvital in 2014. 

One crucial aspect of the design is that workers choose their PFAs when they sign their first contract and then they can switch among them. This generates two concerns. First, workers differ across PFAs. This concern is alleviated by the fact that the two variables  shown when choosing PFAs are management fees and historical returns. These two variables were similar across pension funds for the period 1981-2008. Figure \ref{historical_fees_return_ts} shows the time series of management fees and 5-year moving average returns since system's inception for the different pension funds administrators. We see that historically, Planvital's fees and returns has been similar to the other administrators. For example, Planvital had a management fee below the system's average in 62\% of the months before 2014, and a 5-year average return above the average of the system in 67\% of the time. Therefore, PFA has not been consistently different from the others, which is reflected in the characteristics of its workers compared to the others PFAs. Table $XX$ shows the summary statistics of workers affiliated to Planvital and to other administrators for the period 2006-2014, and there were no large differences with respect to demographics and earnings. 

The second concern is that workers could have responded to the fee change by switching PFAs after the fee change. In the data I do not find evidence of this happening. The main reason for the introduction of new workers' auction in 2008 was the stickiness of workers to their PFA. \cite{switching_illanes} estimates that the switching cost between administrators must have been very large to explain the lack of switching between PFAs. 

I use the fee variation at the administrator level in 07/2014 in a difference-in-difference design. The treated group is the workers affiliated with Planvital before 2008, while the control group is those affiliated with other PFA. The main specification is the following: 

\begin{equation} \label{dif_dif}
y_{iat}=\alpha_t+\gamma_a+X_{i}\phi+\beta D_{iat}+\epsilon_{it}
\end{equation}
where $t$ is month; $i$ is worker; and $a$ is his Pension Fund Administrator. $\alpha_t$ and $\gamma_a$ are time and pension fund administrator fixed effects, respectively. $D_{iat}$ is the change in net-of-tax income generated by the fee change in month $t$. $X_{i}$ are controls at the worker level. I also allow for PFA-specific time trends. In the estimation, the independent variable $y_{ita}$ is the log of payroll earnings, so $\beta$ is the elasticity of payroll earnings to the net-of-tax rate.  

To study the dynamics of the response by doing and event study analysis: 

\begin{equation} \label{dif_dif_dynamic}
    y_{iat}=\alpha_t+\gamma_a+X_{i}\phi+ \sum_{k=-K+e}^{e-1} \beta_t D_{iat}^k+\sum_{k=e+1}^{e+K} \beta_t D_{iat}^k+\epsilon_{it}
\end{equation}
where $D_{iat}^k$ is equal 0 except for the month $k$ when it takes the value of the change in net-of-tax generated by the fee reduction for Planvital's workers. $e$ is the month when Planvital changed the fee and is omitted in the specification, so the estimates are normalized respect to $\beta_{t=e}$.    

The time frame is restricted to the 48 months before and after the fee change. This period is symmetric to the timing of the treatment and the analyzed fee change is the only one in the period. I restrict the sample to workers who were affiliated to the pension system before 2008. There are two reasons for this sample restriction. First, in 2008 Planvital made a fee increase making it the most expensive PFA in the system. Second, workers that receive their first paycheck after 2010 were assigned to PFAs based on the auctions. This sample restriction removes 22\% of the observations from the original sample.

Table \ref{ETI_table} shows estimates from equation (\ref{dif_dif}). The point estimate is $0.38$. The estimation is robust to different sample definition and using as control pre-treatment earnings and extensive margin participation. Figure \ref{ETI_dynamic_figure} plots the dynamic response captured by equation (\ref{dif_dif_dynamic}). We can see that there is no treatment effect before the treatment, which provides evidence that the parallel-trend assumption is satisfied, supporting the causal interpretation of the estimation. 

\subsection{Lifetime payroll earnings and consumption}
\label{earnings_consumption_section}
In this Section I analyze the relationship between lifetime payroll earnings and consumption. 

I use the merge between the adminstrative and survey data to build the joint distribution of lifetime payroll earnings ($z_i$), active consumption ($c_{i1}$) and retired consumption ($c_{i2}$) for workers that retired between 2006 and 2020. In total, I have 3,246 observations for whom I can observe these three variables.  

I start analyzing the relationship between consumption drop and lifetime earnings. To do so, I do an event analysis of consumption around retirement. For each worker in the data, I denote the three years after the individual retired from working by $t=0$, and index all years relative to that year. I bin 3 years in each bin for a sample power reason. Thus, $t=-2$ indicates the period between 3 and 6 years before retirement, $t=-1$ the 3 years before retirement, $t=0$ the 3 years after retirement, and so on. Result are robust to three definitions of the retirement moment: legal retirement age, year at which started receiving retirement benefits, or self-reported retirement year.\footnote{This robutness across definitions is because the difference of the retirement moment between this three definition is small.} Given the attrition of the survey, the panel is not balance across event times. 

I study the evolution of consumption as function of the event time for grouping workers in two groups $(g)$: those below ($g=0$) and above $(g=1)$ the average of lifetime earnings of their respective cohort and gender. Specifically, denoting by $c_{ist}^g$ the self-reported non-durable consumption individual $i$ of group $g$ in year $s$ and at event time $t$, I run the following regression separately for each group ($g$) of workers:

\begin{equation} \label{consumption_drop_eventstudy}
   log(c_{ist}^g)=\sum_{j \neq -1}\alpha_j^g \cdot \mathds{1}\left[j=t\right]+\sum_{y} \gamma^g_y \cdot  \mathds{1}\left[y=s\right]+ \beta_{HH}^g HH_{ist}+\beta_{dob}^g + v^g_{ist}  
\end{equation}
where I include a full set of event time dummies (first term on the right-hand side), year dummies (second term), and controls by household composition (third term) and cohort fixed effects (fourth term).

Figure \ref{retirement_drop} panel $(a)$ plots the group-specific effect of retirement on consumption across event time ($\alpha_t^g$). As defined above, these are outcomes at event time $t$ relative to the three years before retirement. The figure includes 95 percent confidence intervals bands around the event coefficient. We see that consumption drops more at retirement for workers with earnings below the average of lifetime earnings. To further highlight the consumption drop at retirement, I control for linear pre-trends before retirement. That is, I estimate a linear trend separately for workers above and below average lifetime earnings using only pre-retirement data, and then run the main event specifiation residualizing the outcome variable with the estimated pre-trend. The results are in Figure \ref{retirement_drop} panel $(b)$. 

Figure \ref{retirement_drop} panel $(c)$ plots the effect of retirement on self-reported income using the same estimation, but replacing consumption ($c_{its}^g$) by self-reported income ($y_{its}^g$). We see that income drops more for worker with low earnings, that is, the pension system replace less pre-retirement income to low-earners than to high-earners. Figure \ref{retirement_drop} panel $(d)$ plots the consumption drop at retirement but controlling for the income drop. After controlling for this drop, the consumption drop of low-earners gets reduced, while that of high-earners stays mostly unaffected. The difference in the consumption drop between the two groups is statistically insignificant after we control for income. This result is in-line with low-earners being less prepared to face the income drop at retirement and having to adjust consumption. 

The negative relationship between consumption drop at retirement and lifetime earnings is present for the whole distribution of earnings. Figure \ref{cons_drop_earnings} shows the average consumption drop for 52 equal-density bins of lifetime earnings in the scatter and plots the quadratic regression relationship between consumption drop at retirement and lifetime earnings in the line. The consumption drop is measured as the difference of log consumption residualized on year and cohort fixed effects, and household composition. We see that the consumption drop relationship with lifetime earnings is strong and negative. Therefore, to do retirement income redistribution from high to low earners generates a positive social value by improving social insurance against retirement income drop.

Given the relevance of the relationship between lifetime earnings and consumption drop at retirement, I use the survey to explore the mechanism behind this relationship. The analysis is in Appendix \ref{appendix_c}.  I find that the most likely explanation is the poor-preparedness for retirement of low-earners. Specifically, workers below the median of lifetime earnings receive less pension benefits relative to their pre-retirement income because a larger fraction of their income is informal and therefore is not covered by the mandatory Defined Contribution plan, and the progressive pension subsidy does not totally compensate this differential. For workers below the median, the pension benefit replacement rate of the average 10 years before retirement is around 35\%, while this number jumps over 60\% for those above the median of lifetime earnings. Additionally, workers with lifetime earnings below the median have less access to saving instruments, they do not use tax-advantaged pension savings, are more likely to retired because of unexpected shocks like health and employment and their jobs provide less protection against retirement. The relationship between lifetime payroll earnings and different socioeconomic variables related to retirement preparedness are in Figures \ref{first_ret_preparedness}-\ref{last_ret_preparedness}. Overall, the lifetime payroll earnings are a good tagging for retirement preparedness, making them a good tool to do inter-worker  redistribution. 

The second statistics that drives the value of inter-worker redistribution is the covariance between active-life consumption and lifetime earnings. Figure \ref{act_cons_earnings} shows that the relationship between these two variables is positive and strong. Thus, to do income redistribution using lifetime earnings does redistribution across workers with different productivity, generating a positive social value. 

\subsection{Calibration of preference parameters}
\label{calibration_estimaion}

There are three parameters that I do not estimate directly in the data: risk aversion ($\gamma$), retirement dependent preferences ($\theta$), and hyperbolic discounting ($\beta$). 

I calibrate the values for the RRA using the literature. \cite{landais2021value} estimate an relative risk aversion above 4 for Swedish workers. I calibrate $\gamma$ equal 4.  

Preferences over consumption may vary with retirement. At retiement, a smaller consumption spending may generate the same marginal utility of consumption. That is, some consumption drop at retirement is justified by worker preferences. The literature has found that reduction in food spending and transportation, among others, can explain part of the drop (\cite{aguiar2005consumption}). \cite{battistin2009retirement}  uses quasi-experimental variation on retirement dates to identify the consumption drop at retirement not driven by income and liquidity changes. They find that a 9.8\% of consumption drop can be accounted by drops in active-life related expenses. Using this value, jointly with the relative risk aversion parameter, I calibrate $\theta$ to be 0.62.

I calibrate $\beta$ using the meta-analysis of \cite{hyperbolic_meta}. This study finds that the average estimation of the hyperbolic discounting for monetary rewards find in the literature is 0.82 $[0.74, 0.90]$.  

I do comparative statics of the results with respect to this three parameters that I calibrate using the literature. 

Finally, I calibrate the return rate of the pension fund ($R$) using the average real return of the Fund D during the period 2010-2020, which was 1.048. By assumption, the discount factor ($\delta$) is the inverse of $R$.

\section{Results}
\label{results}

In this section I present the main result of the paper, splitting them in two. Section \ref{social_gains_result} shows the social gains of marginal reforms to the current design of the Chilean pension system. I decompose the social gains in its different components and do robustness analysis. Section \ref{optimal_design} shows the optimal design of the pension system and compare it to other countries.

\subsection{The social gains of marginal reforms}
\label{social_gains_result}

\noindent \textbf{Current design of the Chilean pension system-.} The current design of the Chilean pension system is similar, but different than the framework of Section \ref{frame_work}. Therefore, I modify equations (\ref{foc_progressivity}) and (\ref{foc_contribution_rate}) to adjust them to the current design of the Chilean pension system. The main difference is that there is no linear tax on the self-funded pension, but instead the government subsidy is progressive with respect to the self-funded pension, creating an implicit tax on recipients. This has two consequences. First, a share of the pension contribution is not received in retirement, because the pension subsidy is reduced with the additional pension saving. Second, an increase in pension contribution rate ($\kappa$) has an effect on the fiscal budget through the pension subsidy. To make the reform budget-balanced, I assume that this positive or negative effect on the fiscal budget is given back to workers through a lump-sum transfer to those who receive the subsidy. 

Let $I_i^s$ be an indicator that worker $i$ is receiving the pension subsidy and $\phi'$ be the implicit tax generated by the current pension subsidy ($\approx 33\%$). Then, the welfare effect of marginal reform to benefits' progressivity ($\phi$) in the current design of the Chilean system are given by:
\begin{align}
    \frac{dW}{d\phi}=-&R\kappa\left(\mu Cov\left[d_i,z_i\right]+Cov\left[U_{c1}(c_{i1}),z_i\right]\right)  +(\tau +\phi' I^s_i \kappa)R \overline{z}[-\varepsilon_{m\overline{z}}+\varepsilon_{b\overline{z}}] \int_i (\mu d_i+U_{c1}(c_{i1})) di \notag  \\ &+   \mu(1-\beta)\kappa \int_i(I^s_i(1-\phi')+(1-I_s))(\mu d_i+U_{c1}(c_{i1}))[-\varepsilon_{m z_i}+\varepsilon_{b z_i}] z_i di \label{foc_progressivity_chile}
\end{align}
And the social gains of marginal reform to pension contribution rate in the current design of the Chilean system are given by:
\begin{align}
    \frac{dW}{d\kappa}= &  \mu Cov\left[d_i,zi \right]+  \mu \overline{d}\overline{z}-R\kappa\left( \mu Cov\left[d_i,z_i\right]+Cov\left[U_{c1}(c_{i1}),z_i\right]\right) \notag \\
     & +(\tau +\phi' I^s_i \kappa)R \overline{z} \left[\frac{1-\kappa}{1-\tau}\varepsilon_{(1-\tau)\overline{z}}-\frac{1-\phi}{\phi}\varepsilon_{m\overline{z}}+\varepsilon_{b\overline{z}}\right] \int_i (\mu d_i+U_{c1}(c_{i1})) di \notag  \\
    &+\mu(1-\beta) \kappa  \int_i ((1-\phi')I^s_i+(1-I^s_i))\left[\frac{1-\kappa}{1-\tau}\varepsilon_{(1-\tau)z_i}-\frac{1-\phi}{\phi}\varepsilon_{mz_i}+\varepsilon_{bz_i}\right]( d_i+U_{c1}(c_{i1}))  z_i  di \label{foc_contribution_rate_chile}
\end{align}

\noindent \textbf{Social gains of marginal reforms-.} Table \ref{model_parameters} shows the value and source of the parameters used in the estimation of equations (\ref{foc_progressivity_chile}) and (\ref{foc_contribution_rate_chile}). 

I find that there are significant social gains of increasing benefit progressivity and pension contribution. For exposition, I measure the money-metric value of reforms' social gains as a share of the mechanical transfer. This measure tells how many dollars of social gain are generated by each dollar that the reform would transfers across workers and time without considering behavioral responses (mechanical transfer). 

This measure is similar to the Marginal Value of Public Funds (\cite{hendren2020unified}) but it is more intuitive to the present analysis. By construction, the reforms analyzed here are budget balanced and therefore the MVPF does not apply directly. An alternative to my measure is to split the reform into two: the revenue collecting reform (increase in tax) and the revenue spending one (increase in transfer). Then build the MVPF for each of them and compare those values. My measure is more direct, and a positive value of it means that the MVPF from revenue spending policy is lager than the MVPF from the revenue collection. An additional advantage of my measure is that it is easy to approximate the social gains of large reforms because we know the mechanical transfer done by reforms. 

My results suggest that both, raising the pension contribution rate ($\kappa$) and benefit progressivity ($\phi$), are socially desirable. For \$1 mechanically transfered by an increase in benefits progressivity or by an increase in the pension contribution rate, the money-metric social gains are \$0.45  and \$0.12, respectively. Given the size of the pension system, these gains are economically relevant. For example, an increase in 1\% in benefits progressivity would generate social gains of 0.38\% of GDP, while an increase in 1\% in the pension contribution rate would generate social gains of 0.14\% of GDP.

Figures \ref{decomp_progressivity} and \ref{decomp_progressivity} show the social gains of reform decomposing the welfare effect into its different components: intra-worker and inter-worker redistribution, fiscal externality, and bias correction. The socially desirability of these reforms is driven, mainly, by the inter-worker value of redistribution. Transferring retirement income to worker with low lifetime earnings generates large social gains, which overcomes the cost of behavioral distortion generated by the reforms. 

Nonetheless, the cost of reforms' behavioral distortion is economically significant, increasing the cost of doing redistribution across time and workers through the pension system. Specifically, for \$1 mechanically transferred by an increase in benefits progressivity, the behavioral distortion cost is \$0.32, i.e., almost one-third of the mechanical transfer. Here, both the fiscal externality and the bias correction term are important, the latter accounting for 38\% of the behavioral distortion cost. Thus, to not consider the present-focused bias of workers will overestimate in 26\% the social gains of a marginal increase in benefits progressivity.

The behavioral distortion generated by an increase in the pension contribution rate is smaller than that of the benefits progressivity, still its economic impact is significant. For every \$1 that is mechanically transferred by an increase in the pension contribution rate, the welfare cost of the behavioral distortion is \$0.11. To not consider the workers' present focused bias would overestimate the social gains in 25\%. \\

\noindent \textbf{Robustness-.} I analyze the sensitivity of my results to:  relative risk aversion ($\gamma$), state-dependent preferences ($\theta$), and present-focus bias ($\beta$). These are the three parameters that I do not estimate directly. 

Figure \ref{comp_stat_rra} showd the comparative statics of the social gains of each reform with respect to relative risk aversion ($\gamma$). The social gains of reforms are increasing with relative risk aversion (RRA) as the income redistribution becomes more socially valuable. Both reforms are socially desirable for RRA greater than 1. 

Figure \ref{comp_stat_consdrop} shows the comparative statics of the social gains of each reform with respect to retirement consumption preference ($\theta$). I do the comparative statistics with respect to the rational consumption drop at retirement implied by each $\theta$, a more intuitive measure. For example, the benchmark $\theta=0.62$  implies a rational consumption drop at retirement of 9.8\%. I find that the social gain of reforming benefit progressivity is unaffected by $\theta$ because this reform transfer retirement income across workers and not across time. In the other hand, the social gains of an increase in the pension contribution rate is strictly decreasing in $\theta$, because as $\theta$ decreases, consumption is less valuable in retirement. Overall, there are social gains of increasing the pension contribution rate for any $\theta$ such that the rational consumption drop at retirement is smaller than 22.5\%. 

Finally, in Figure \ref{comp_stat_bias}, I show the comparative statics pf social gains for changes in the level of present-focused bias ($\beta$). As workers become more biased (smaller $\beta$), the social gains of reforms are reduced. However, for almost every value of bias, both reforms are socially desirable.\\

\noindent \textbf{Heterogeneous life-expectancy extension-.} I can also extend the model to capture heterogenous life expectancy. In Appendix \ref{appendix_d_LE}, I estimate the relationship between life expectancy and lifetime payroll earnings. I find that low-earners have shorter life expectancies in retirement. In average, workers below median lifetime earnings live 18\% less after their retirement age than do workers above (result in line with \cite{cristia2007empirical_lifeexpectancy}).

Introducing this heterogeneity in my analysis reinforces the results. I find that under heterogeneous life expectancy, the money-metric social gains of increasing benefit progressivity goes from \$0.45 to \$0.58 per \$1 mechanically transferred. The reason for this large increase in the social gains of reforming progressivity is that this reform focuses benefits on those with lower income, who also have a shorter retirement time. Therefore, heterogeneous life expectancy makes it fiscally cheaper to raise progressivity. Similarly, the social gains of increasing pension contribution increase from \$0.12 to \$0.15 by incorporating heterogenous life expectancy. 

\subsection{Optimal design}

\label{optimal_design}

In this Section, I estimate the optimal value of the two parameters of the pension system: the pension contribution rate ($\kappa$) and benefits' progressivity ($\phi$). I then compare this optimal design with that of other systems around the world.

The optimal system is given by parameters ($\kappa^*, \phi^*$) such that there are no social gains from reforms. I first show numerically that the determinant of the social welfare's Hessian is positive ($D(H W(\kappa,\phi))>0$) and that the social welfare function is strictly concave on $\kappa$ and $\phi$ for any $0 \leq \kappa,\phi \leq 1$. I then use (\ref{foc_progressivity}) and (\ref{foc_contribution_rate}) to build the first order conditions. The solution is unique. 

The optimal pension contribution rate is 16.8\% and the optimal tax on pension contribution is 68\%. Confidence area around the point estimates employ two different methods. First, using each parameter confidence interval I estimate the area of points that satisfy the first order conditions. Second, I use in the fact that my model is not computationally burdensome and I do a \emph{Pairs Bootstrap} (\cite{freedman1984bootstrapping}) for the whole data after which I estimate the optimal parameters. Both methods give similar confidence areas. Figure \ref{world_progr_contribution} shows the confidence area for the Bootstrap method. 

I find that the optimal system significantly increases the pension contribution rate and benefits progressivity. To put the reforms in perspective, I compare the optimal design with the current design of other countries' pension system. To do so, I build a systems' \emph{progressivity} measure as the  the amount of resources spent by on workers below the median benefit. For example, in New Zealand, the public pension is a lump-sum transfer, therefore the measure of progressivity is 0.5 because everyone (above and below the median) receives the same benefit. Given that in almost every system the relationship of lifetime earnings and benefits is positive, so median benefit is close to median lifetime earnings. Figure \ref{world_progr_contribution} shows this \emph{progressivity} measure in the x-axis and the pension contribution rate in the y-axis for 23 countries.

We see that the pension contribution rate is increased from levels similar to the USA to that of Germany and Netherlands. Similarly, with the optimal tax on pension contribution, the ratio between pension funds going to workers below and above the median goes from 29\%-- similar to that of USA and France-- to 42\%-- similar to that of Japan and Canada--.

\section{Conclusion}

There are two questions at the core of reforms to the public pension system design: Should we increase the pension system size by raising workers' pension contribution when active? How should savings be divided among workers when they retire? 

To answer these two questions, I provide a general framework that incorporates the trade-off between incentive provision and redistribution that drives the welfare effect of these reforms. I then estimate the trade-off for Chile. On the incentive distortion side, I causally estimate the effect that a change in the pension contribution rate and benefit progressivity has on taxable earnings. On the redistribution side, I measure the equity gains using a panel survey matched with administrative data. I find that the trade-off resolves in favor of increasing both, i.e., it would be socially desirable to increase pension contributions from earnings, increasing savings for retirement, and to raise benefits progressivity, focusing pension benefits on low-earners.     

With population aging, the financial sustainability of public pension systems has been stressed, forcing governments to revisit the design of their public pension systems. To reduce pension system deficits, governments can increase pension contributions or reduce benefits. Even though my analysis focuses on budget-balanced reforms, the results of this paper shed light on this discussion. It shows that for Chile, instead of reducing benefits, it is better to increase pension contributions. Additionally, if a reduction of benefits must be made, it is better to do it in a progressive way, reducing benefits more for high-income workers. 

There are many remaining aspects relevant to the optimal design of the pension system. A crucial one pertain to the reasons for the inadequate preparation of workers for retirement. What role do individual preferences (present-focused), information, and workers' ability to save play? The relative importance of each is a crucial ingredient for future pension design reforms. 

\newpage

\bibliography{pensions}

\nocite{*}

\FloatBarrier

\newpage

\section*{Figures}
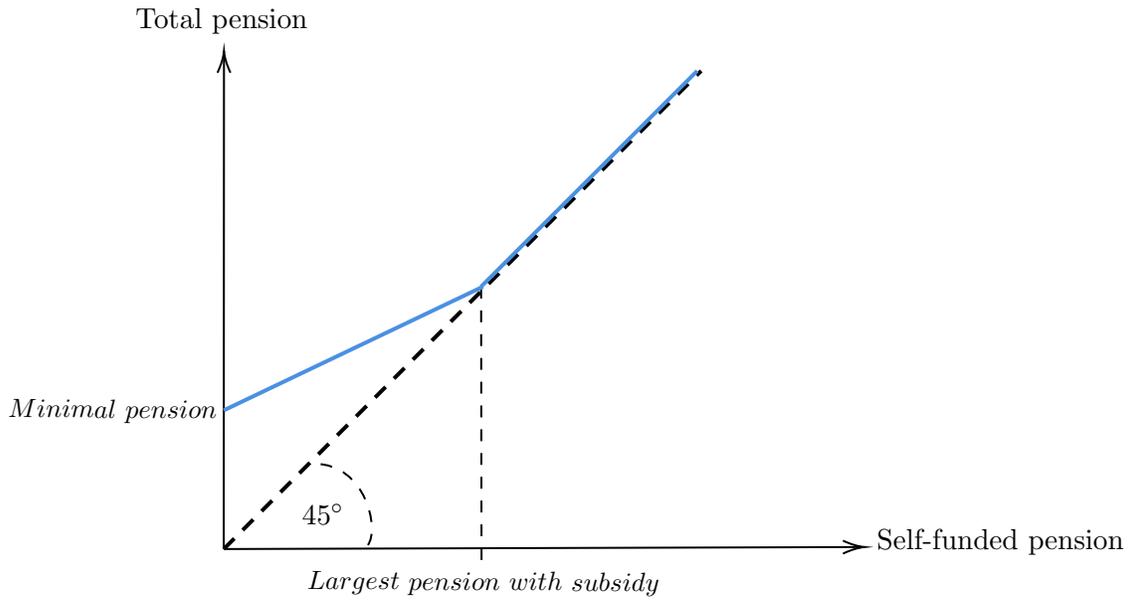
\begin{figure}[h]
    \centering
\tikzset{every picture/.style={line width=0.75pt}} 

\begin{tikzpicture}[x=0.75pt,y=0.75pt,yscale=-1,xscale=1]

\draw    (154,414.67) -- (475.22,413.67) ;
\draw [shift={(477.22,413.67)}, rotate = 179.82] [color={rgb, 255:red, 0; green, 0; blue, 0 }  ][line width=0.75]    (10.93,-3.29) .. controls (6.95,-1.4) and (3.31,-0.3) .. (0,0) .. controls (3.31,0.3) and (6.95,1.4) .. (10.93,3.29)   ;
\draw    (154,414.67) -- (154.11,165.33) ;
\draw [shift={(154.11,163.33)}, rotate = 90.03] [color={rgb, 255:red, 0; green, 0; blue, 0 }  ][line width=0.75]    (10.93,-3.29) .. controls (6.95,-1.4) and (3.31,-0.3) .. (0,0) .. controls (3.31,0.3) and (6.95,1.4) .. (10.93,3.29)   ;
\draw  [dash pattern={on 4.5pt off 4.5pt}]  (283.78,282.22) -- (284,421) ;
\draw [color={rgb, 255:red, 6; green, 0; blue, 1 }  ,draw opacity=1 ][line width=1.5]  [dash pattern={on 5.63pt off 4.5pt}]  (154,414.67) -- (235.24,333.24) -- (394.78,173.33) ;
\draw [color={rgb, 255:red, 74; green, 144; blue, 226 }  ,draw opacity=1 ][line width=1.5]    (154,344.67) -- (284.78,282.22) ;
\draw [color={rgb, 255:red, 74; green, 144; blue, 226 }  ,draw opacity=1 ][line width=1.5]    (283.78,282.22) -- (392.78,173.33) ;
\draw  [dash pattern={on 4.5pt off 4.5pt}]  (226.67,412.67) .. controls (232.67,404.67) and (225.67,371.67) .. (199.67,371.67) ;

\draw (204,397) node  [color={rgb, 255:red, 0; green, 0; blue, 0 } ,draw opacity=1 ,rotate=-0.34] {$45^{\circ}  $};
\draw (98,345) node  [font=\footnotesize,color={rgb, 255:red, 0; green, 0; blue, 0 }  ,opacity=1 ,rotate=-0.34]  {$Minimal\ pension$};
\draw (285,432) node  [font=\footnotesize,color={rgb, 255:red, 0; green, 0; blue, 0 }  ,opacity=1 ,rotate=-0.34]  {$Largest\ pension\ with\ subsidy$};
\draw (482,403) node [anchor=north west][inner sep=0.75pt]   [align=left] {Self-funded pension};
\draw (108,140) node [anchor=north west][inner sep=0.75pt]   [align=left] {Total pension};

\end{tikzpicture}

\caption{Chilean pension system design}
    \label{chilean_design}
\end{figure}

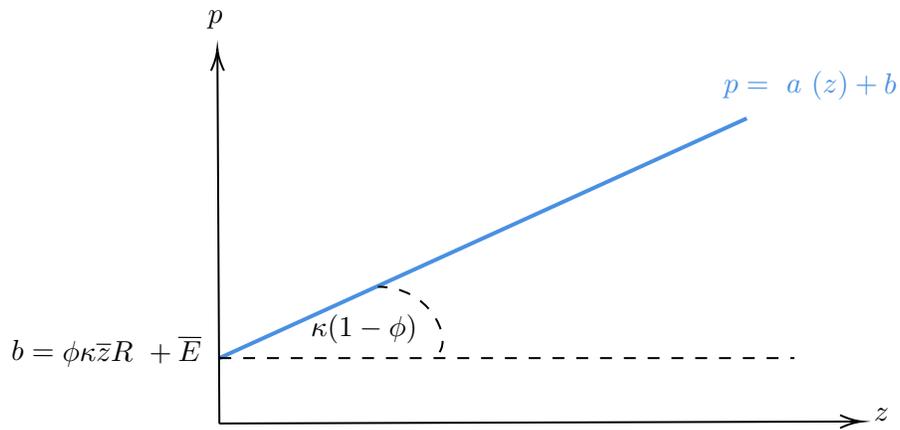
\begin{figure}[h!]
\tikzset{every picture/.style={line width=0.75pt}} 
\centering
\begin{tikzpicture}[x=0.75pt,y=0.75pt,yscale=-1,xscale=1]
\draw [color={rgb, 255:red, 74; green, 144; blue, 226 }  ,draw opacity=1 ][line width=1.5]    (232,359.67) -- (498,238.67) ;
\draw    (232,392.67) -- (554,391.67) ;
\draw [shift={(556,391.67)}, rotate = 179.82] [color={rgb, 255:red, 0; green, 0; blue, 0 }  ][line width=0.75]    (10.93,-3.29) .. controls (6.95,-1.4) and (3.31,-0.3) .. (0,0) .. controls (3.31,0.3) and (6.95,1.4) .. (10.93,3.29)   ;
\draw    (232,392.67) -- (231.01,205.67) ;
\draw [shift={(231,203.67)}, rotate = 89.7] [color={rgb, 255:red, 0; green, 0; blue, 0 }  ][line width=0.75]    (10.93,-3.29) .. controls (6.95,-1.4) and (3.31,-0.3) .. (0,0) .. controls (3.31,0.3) and (6.95,1.4) .. (10.93,3.29)   ;
\draw  [dash pattern={on 4.5pt off 4.5pt}]  (232,359.67) -- (522,359.67) ;
\draw  [dash pattern={on 4.5pt off 4.5pt}]  (343,356.67) .. controls (349,348.67) and (338,323.67) .. (312,323.67) ;

\draw (530,222) node  [color={rgb, 255:red, 74; green, 144; blue, 226 }  ,opacity=1 ,rotate=-0.34]  {$p=\ a\ ( z) +b$};
\draw (566,388) node  [color={rgb, 255:red, 0; green, 0; blue, 0 }  ,opacity=1 ,rotate=-0.34]  {$z$};
\draw (175,356) node  [color={rgb, 255:red, 0; green, 0; blue, 0 }  ,opacity=1 ,rotate=-0.34]  {$b=\phi \kappa \overline{z} R\ +\overline{E}$};
\draw (230,189) node  [color={rgb, 255:red, 0; green, 0; blue, 0 }  ,opacity=1 ,rotate=-0.34]  {$p$};
\draw (305,345) node  [color={rgb, 255:red, 0; green, 0; blue, 0 }  ,opacity=1 ,rotate=-0.34]  {$\kappa ( 1-\phi )$};
\end{tikzpicture}
\caption{Pension system}
\label{graph_pension_system}

\end{figure}

\begin{figure}[h!] 
  \tikzset{every picture/.style={line width=0.75pt}} 
\centering
\scalebox{0.78}{
\begin{tikzpicture}[x=0.75pt,y=0.75pt,yscale=-1,xscale=1]

\draw [color={rgb, 255:red, 74; green, 144; blue, 226 }  ,draw opacity=1 ][line width=1.5]    (41,352.67) -- (292.33,237) ;
\draw    (41,385.67) -- (355.01,386.36) ;
\draw [shift={(355.01,386.36)}, rotate = 180] [color={rgb, 255:red, 0; green, 0; blue, 0 }  ][line width=0.75]    (10.93,-3.29) .. controls (6.95,-1.4) and (3.31,-0.3) .. (0,0) .. controls (3.31,0.3) and (6.95,1.4) .. (10.93,3.29)   ;
\draw    (41,385.67) -- (40.01,198.67) ;
\draw [shift={(40,196.67)}, rotate = 89.7] [color={rgb, 255:red, 0; green, 0; blue, 0 }  ][line width=0.75]    (10.93,-3.29) .. controls (6.95,-1.4) and (3.31,-0.3) .. (0,0) .. controls (3.31,0.3) and (6.95,1.4) .. (10.93,3.29)   ;
\draw [color={rgb, 255:red, 208; green, 2; blue, 27 }  ,draw opacity=1 ][line width=1.5]    (41.33,328) -- (250.67,193) ;
\draw [color={rgb, 255:red, 208; green, 2; blue, 27 }  ,draw opacity=1 ] [dash pattern={on 4.5pt off 4.5pt}]  (234.33,262) .. controls (243.15,240.44) and (230.84,228.48) .. (204.94,226.13) ;
\draw [shift={(203.33,226)}, rotate = 4.24] [color={rgb, 255:red, 208; green, 2; blue, 27 }  ,draw opacity=1 ][line width=0.75]    (10.93,-3.29) .. controls (6.95,-1.4) and (3.31,-0.3) .. (0,0) .. controls (3.31,0.3) and (6.95,1.4) .. (10.93,3.29)   ;
\draw [color={rgb, 255:red, 74; green, 144; blue, 226 }  ,draw opacity=1 ][line width=1.5]    (400,352.67) -- (680,218) ;
\draw    (400,385.67) -- (722,384.67) ;
\draw [shift={(724,384.67)}, rotate = 179.82] [color={rgb, 255:red, 0; green, 0; blue, 0 }  ][line width=0.75]    (10.93,-3.29) .. controls (6.95,-1.4) and (3.31,-0.3) .. (0,0) .. controls (3.31,0.3) and (6.95,1.4) .. (10.93,3.29)   ;
\draw    (400,385.67) -- (399.01,198.67) ;
\draw [shift={(399,196.67)}, rotate = 89.7] [color={rgb, 255:red, 0; green, 0; blue, 0 }  ][line width=0.75]    (10.93,-3.29) .. controls (6.95,-1.4) and (3.31,-0.3) .. (0,0) .. controls (3.31,0.3) and (6.95,1.4) .. (10.93,3.29)   ;
\draw [color={rgb, 255:red, 208; green, 2; blue, 27 }  ,draw opacity=1 ][line width=1.5]    (398.5,306.67) -- (689.5,264) ;
\draw [color={rgb, 255:red, 208; green, 2; blue, 27 }  ,draw opacity=1 ] [dash pattern={on 4.5pt off 4.5pt}]  (666,226) .. controls (679.72,225.02) and (689.6,232.68) .. (679.63,261.23) ;
\draw [shift={(679,263)}, rotate = 290.14] [color={rgb, 255:red, 208; green, 2; blue, 27 }  ,draw opacity=1 ][line width=0.75]    (10.93,-3.29) .. controls (6.95,-1.4) and (3.31,-0.3) .. (0,0) .. controls (3.31,0.3) and (6.95,1.4) .. (10.93,3.29)   ;
\draw [color={rgb, 255:red, 208; green, 2; blue, 27 }  ,draw opacity=1 ] [dash pattern={on 4.5pt off 4.5pt}]  (432,335) .. controls (416.96,319.96) and (421.36,313.76) .. (427.76,306.42) ;
\draw [shift={(429,305)}, rotate = 131.19] [color={rgb, 255:red, 208; green, 2; blue, 27 }  ,draw opacity=1 ][line width=0.75]    (10.93,-3.29) .. controls (6.95,-1.4) and (3.31,-0.3) .. (0,0) .. controls (3.31,0.3) and (6.95,1.4) .. (10.93,3.29)   ;
\draw  [dash pattern={on 0.84pt off 2.51pt}]  (538,287.33) -- (538,386) ;

\draw (363,384) node  [color={rgb, 255:red, 0; green, 0; blue, 0 }  ,opacity=1 ,rotate=-0.34]  {$z$};
\draw (27,350) node  [color={rgb, 255:red, 0; green, 0; blue, 0 }  ,opacity=1 ,rotate=-0.34]  {$b$};
\draw (40,184) node  [color={rgb, 255:red, 0; green, 0; blue, 0 }  ,opacity=1 ,rotate=-0.34]  {$p$};
\draw (205,249) node  [color={rgb, 255:red, 208; green, 2; blue, 27 }  ,opacity=1 ,rotate=-0.34]  {$\Delta ^{+} \kappa $};
\draw (733,382) node  [color={rgb, 255:red, 0; green, 0; blue, 0 }  ,opacity=1 ,rotate=-0.34]  {$z$};
\draw (385,351) node  [color={rgb, 255:red, 0; green, 0; blue, 0 }  ,opacity=1 ,rotate=-0.34]  {$b$};
\draw (399,184) node  [color={rgb, 255:red, 0; green, 0; blue, 0 }  ,opacity=1 ,rotate=-0.34]  {$p$};
\draw (444,314) node  [color={rgb, 255:red, 208; green, 2; blue, 27 }  ,opacity=1 ,rotate=-0.34]  {$\Delta ^{+} \phi $};
\draw (654,252) node  [color={rgb, 255:red, 208; green, 2; blue, 27 }  ,opacity=1 ,rotate=-0.34]  {$\Delta ^{+} \phi $};
\draw (537,395) node  [color={rgb, 255:red, 0; green, 0; blue, 0 }  ,opacity=1 ,rotate=-0.34]  {$\overline{z}$};
\draw (78,412) node [anchor=north west][inner sep=0.75pt]   [align=left] {(a) Increase in contribution rate};
\draw (438.33,412) node [anchor=north west][inner sep=0.75pt]   [align=left] {(b) Increase in benefits' progressivity};
\end{tikzpicture}}
\caption{Reforms to pension system parameters}
\label{graph_reform}
\end{figure}
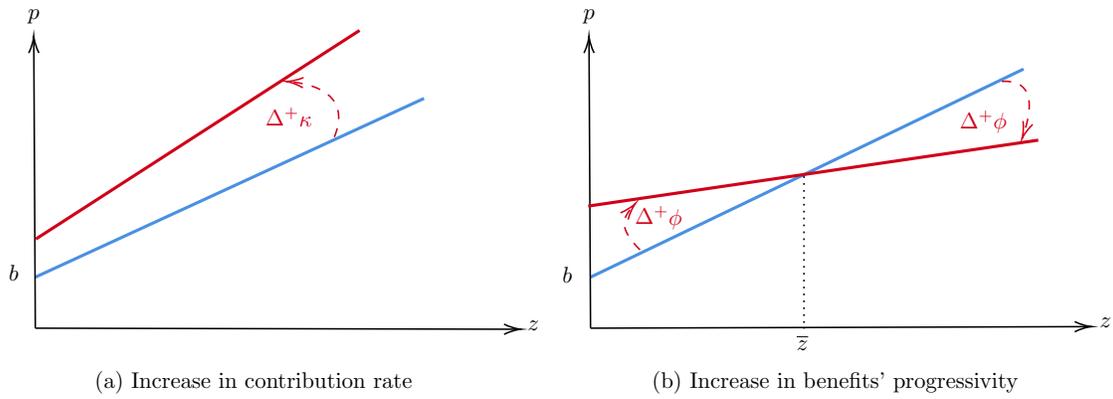

\begin{figure}
    \centering
    \includegraphics[scale=0.6]{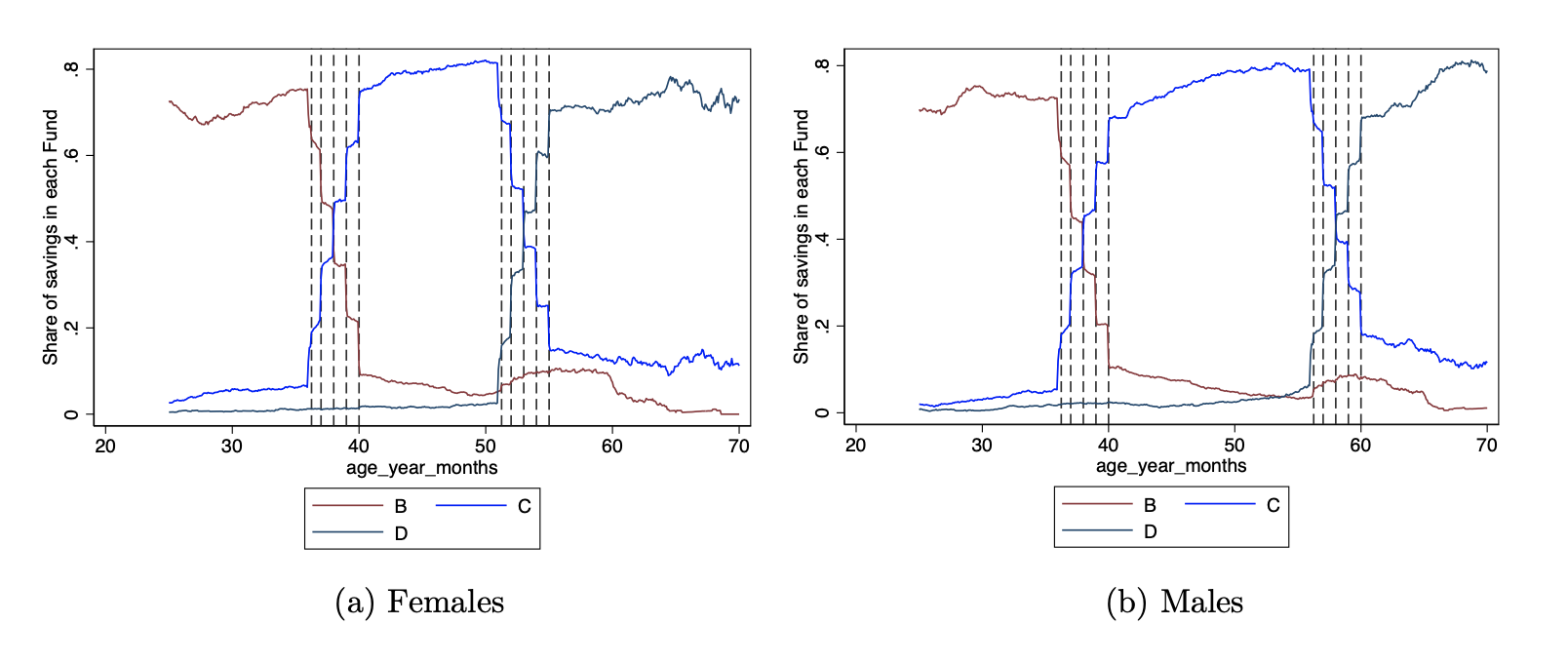}
    \caption{Fund allocation by age (months) in period 2008-2014}
    \label{fund_allocation}
    \begin{minipage}{\textwidth}
        {\footnotesize \textit{Notes:} This figure shows the investment share of workers' pension savings in funds B,C and D any given age (defined in months) for the period 2008-2014 . \par}
    \end{minipage}
\end{figure}

\begin{figure}[ht] 
  \begin{subfigure}[b]{0.5\linewidth}
    \centering
    \includegraphics[width=1\linewidth]{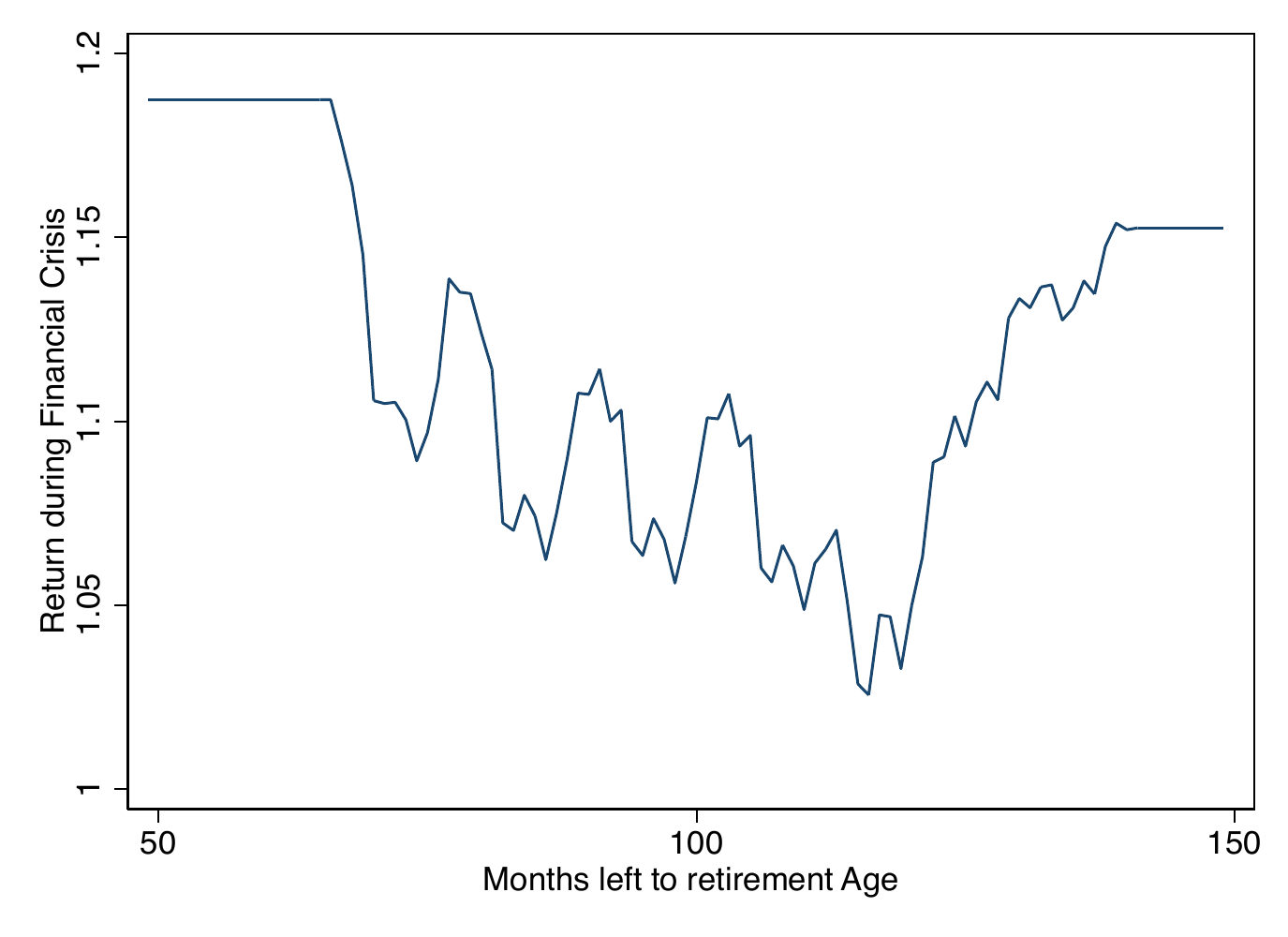} 
    \caption{Short crisis definition} 
    \vspace{4ex}
  \end{subfigure}
  \begin{subfigure}[b]{0.5\linewidth}
    \centering
    \includegraphics[width=1\linewidth]{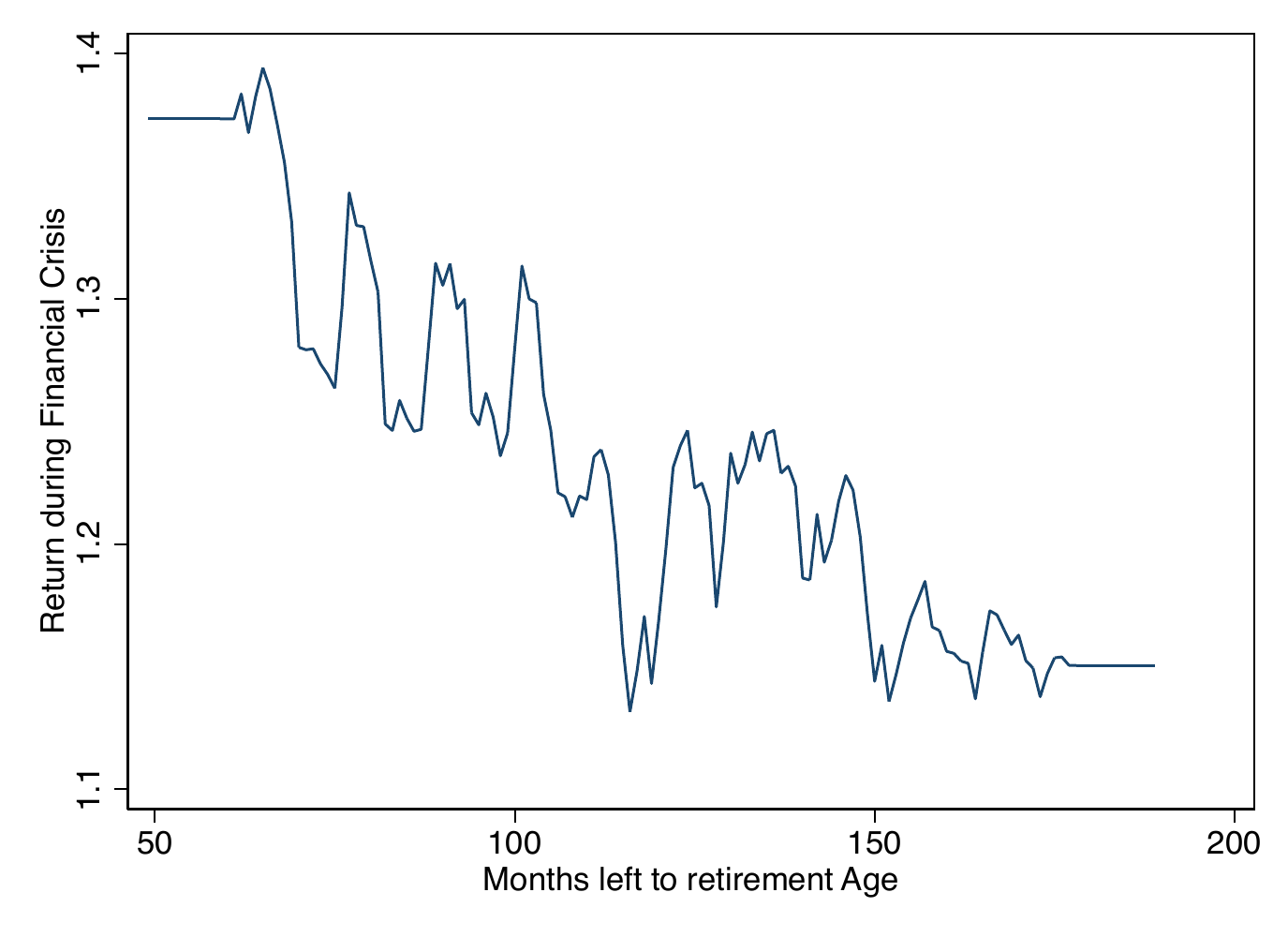} 
    \caption{Long crisis definition} 
    \vspace{4ex}
  \end{subfigure} 
  \caption{Return of Pre-crisis savings during Global financial crisis}
      \label{pen_sav_shock}
          \begin{minipage}{\textwidth} 
    {\footnotesize \textit{Notes:} This figure shows $\rho_i/S_{i}^{pre}$, as defined in equation (\ref{gfc_return}) by cohorts defined as the number of months left to turn the retirement age at the beginning of the Great Financial Crisis. \par}
    \end{minipage}
\end{figure}

\begin{figure}
    \centering
    \includegraphics[scale=0.7]{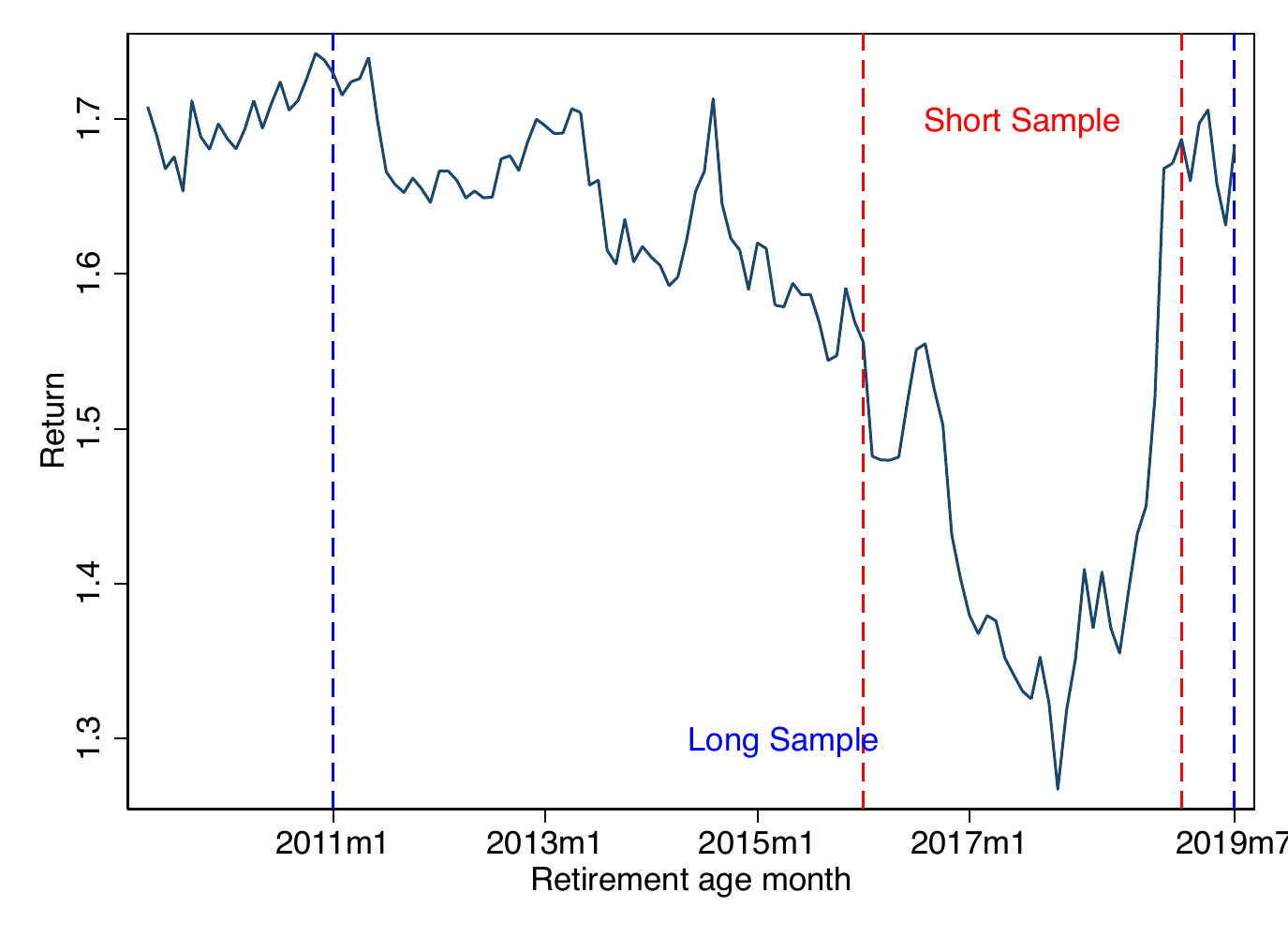}
    \caption{Return of pension savings in the 10 year before retirement}
    \label{last10y}
          \begin{minipage}{\textwidth} 
    {\footnotesize \textit{Notes:} This figure shows the average  return during workers’ last 10 years by cohort for those turning the retirement age between 2009 and 2020  \par}
    \end{minipage}
\end{figure}

\begin{figure}[h!] 
  \begin{subfigure}[b]{0.5\linewidth}
    \centering
    \includegraphics[scale=0.5]{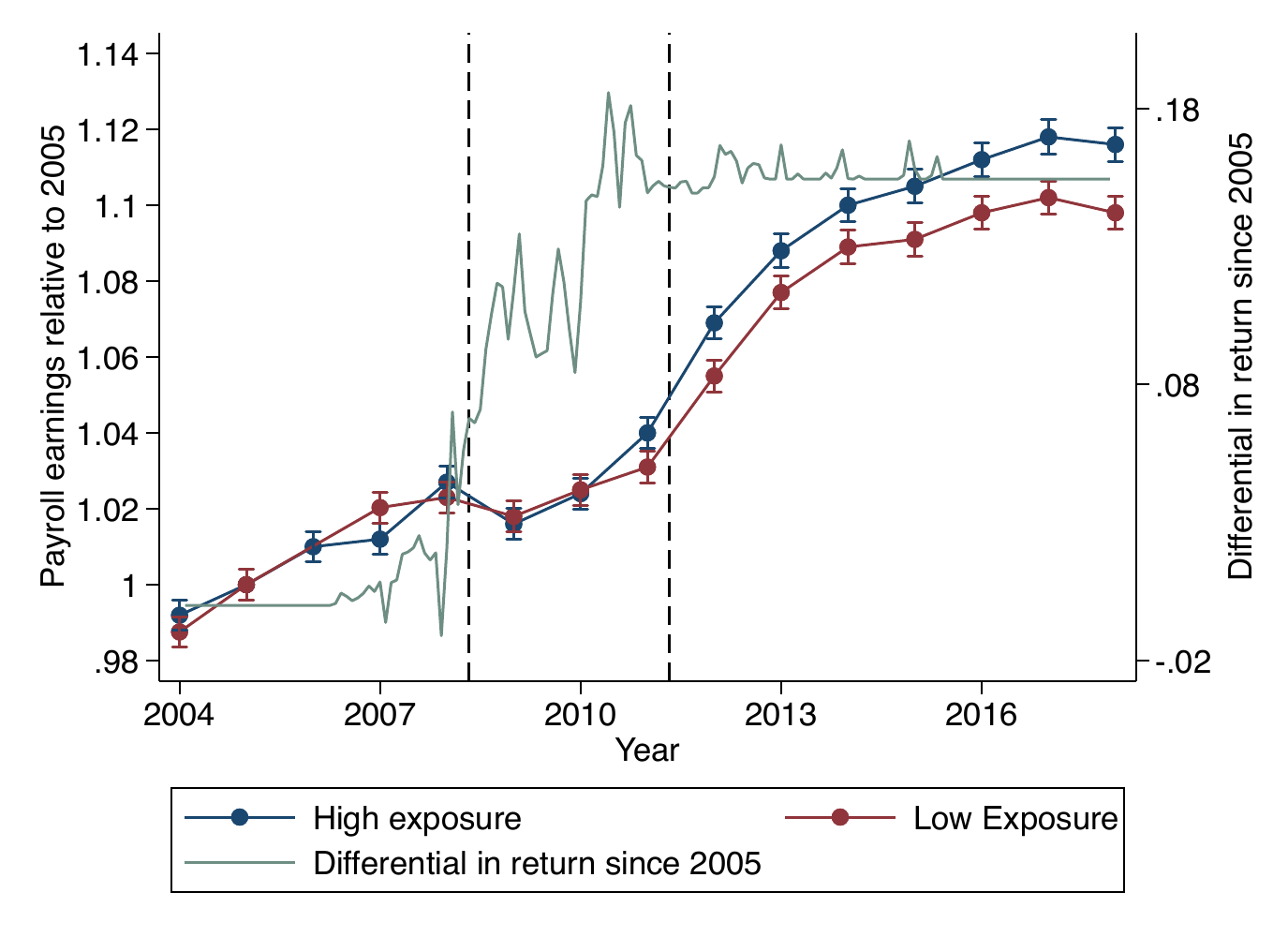}
    \caption{Treated cohorts}
    \label{fig7:a} 
    \vspace{4ex}
  \end{subfigure}
  \begin{subfigure}[b]{0.5\linewidth}
    \centering
    \includegraphics[scale=0.5]{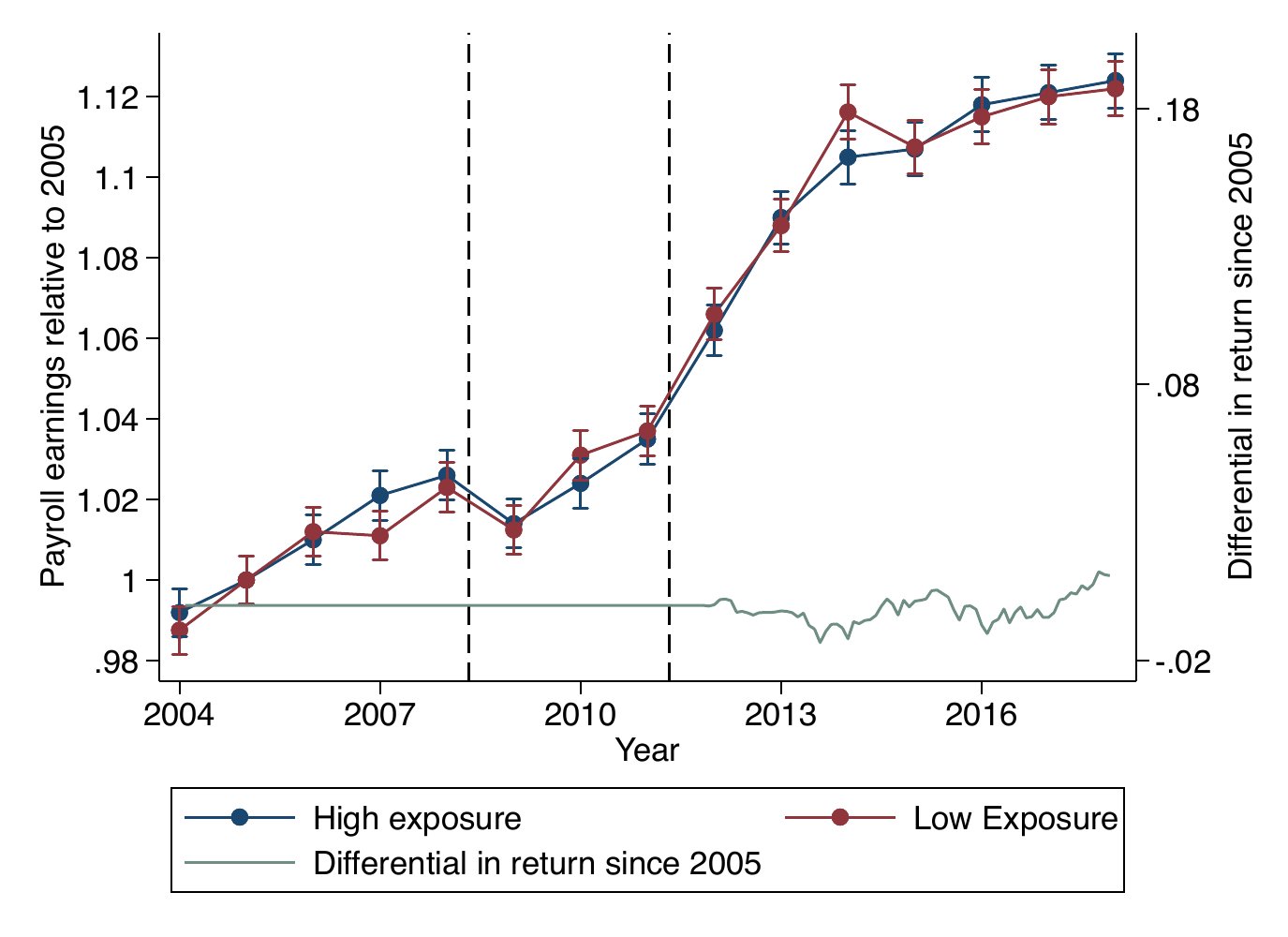}
    \caption{Placebo cohorts}
    \label{fig7:b} 
    \vspace{4ex}
  \end{subfigure} 
  \caption{Return during GFC and taxable earnings}
  \label{gfc_earnings_reducedform} 
      \label{last10y}
          \begin{minipage}{\textwidth} 
    {\footnotesize \textit{Notes:} The left axis of this figure shows the outcome, i.e., the taxable earnings relative to 2005, after controlling by age fixed effect and gender, separately for cohorts with High and Low exposure to the GFC. The right axis of this figure shows the treatment, i.e., the average difference in returns since 2005 between cohorts with High and Low exposure. Panel (a) uses the treated sample: workers that turned the retirement age between 2016-2019 and had the largest heterogeneity on returns during the GFC. For this sample, the High exposure is defined as those cohorts that obtained a return during the GFC below the median of the sample, and the low exposure is the complement. Panel (b) uses the placebo sample: workers that will turn the retirement age between 2021-2024 and had no heterogeneity on returns during the GFC. For this sample, High and Low exposure is defined by the age rank of each cohort inside the sample using the equivalent rank and treatment allocation of the treated sample.   \par}
    \end{minipage}
\end{figure}

\begin{figure}[h!]
    \centering
    \includegraphics[scale=0.4]{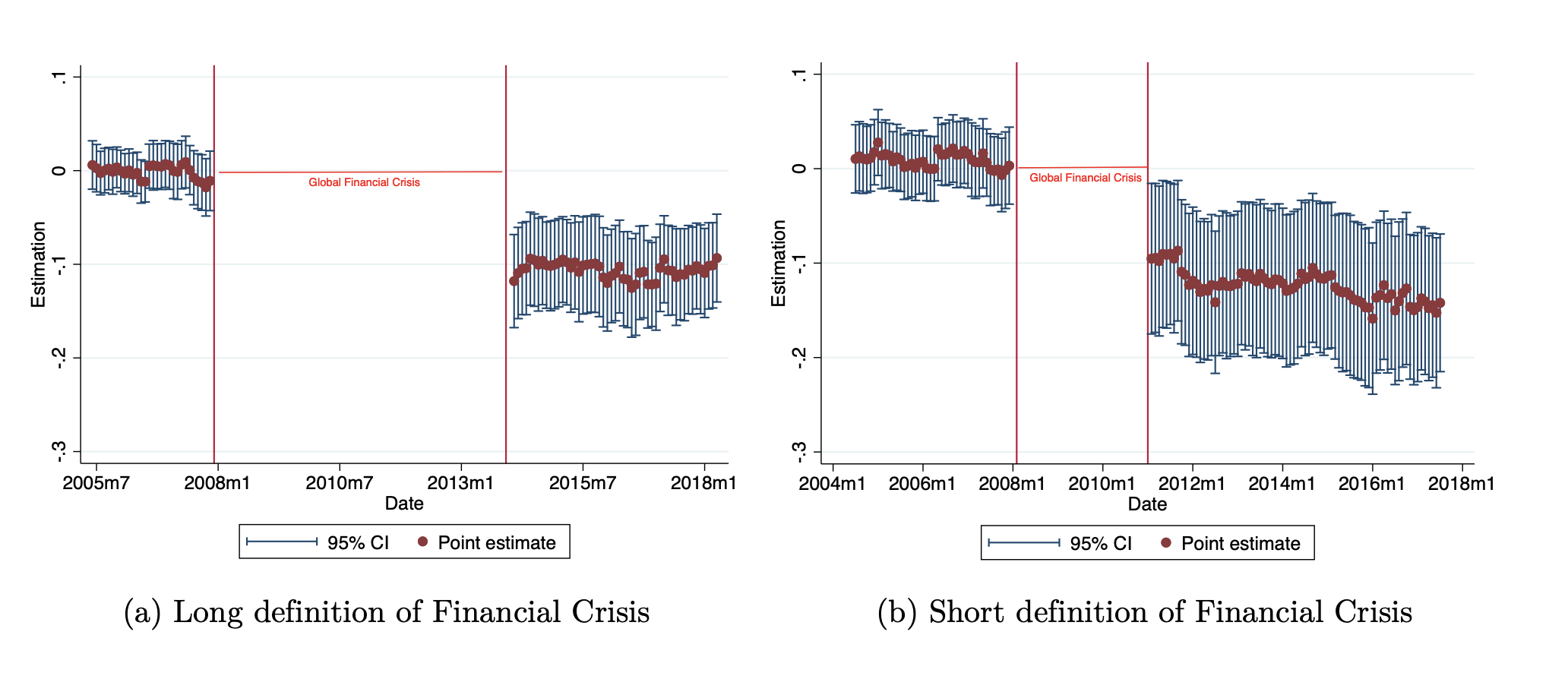}
    \caption{Dynamics of the effect of Pension Payment on Income}
    \label{dynamicsresponse}
    \medskip 
    \begin{minipage}{\textwidth} 
    {\footnotesize \textit{Notes:} Plotted is the estimation of $(\{\beta_{-\tau}\}_{\tau=0}^{\tau=m}, \{\beta_{\tau}\}_{\tau=1}^{\tau=q})$ from equation (\ref{dynamics_pen_sav_shock}). In panel (a) is used the long definition of financial crisis, and on panel (b) the short one. All standard error are clustered at date of birth-gender level. Both estimations use the wide sample.  \par}
    \end{minipage}
\end{figure}

\begin{figure}
    \centering
\includegraphics[scale=0.45]{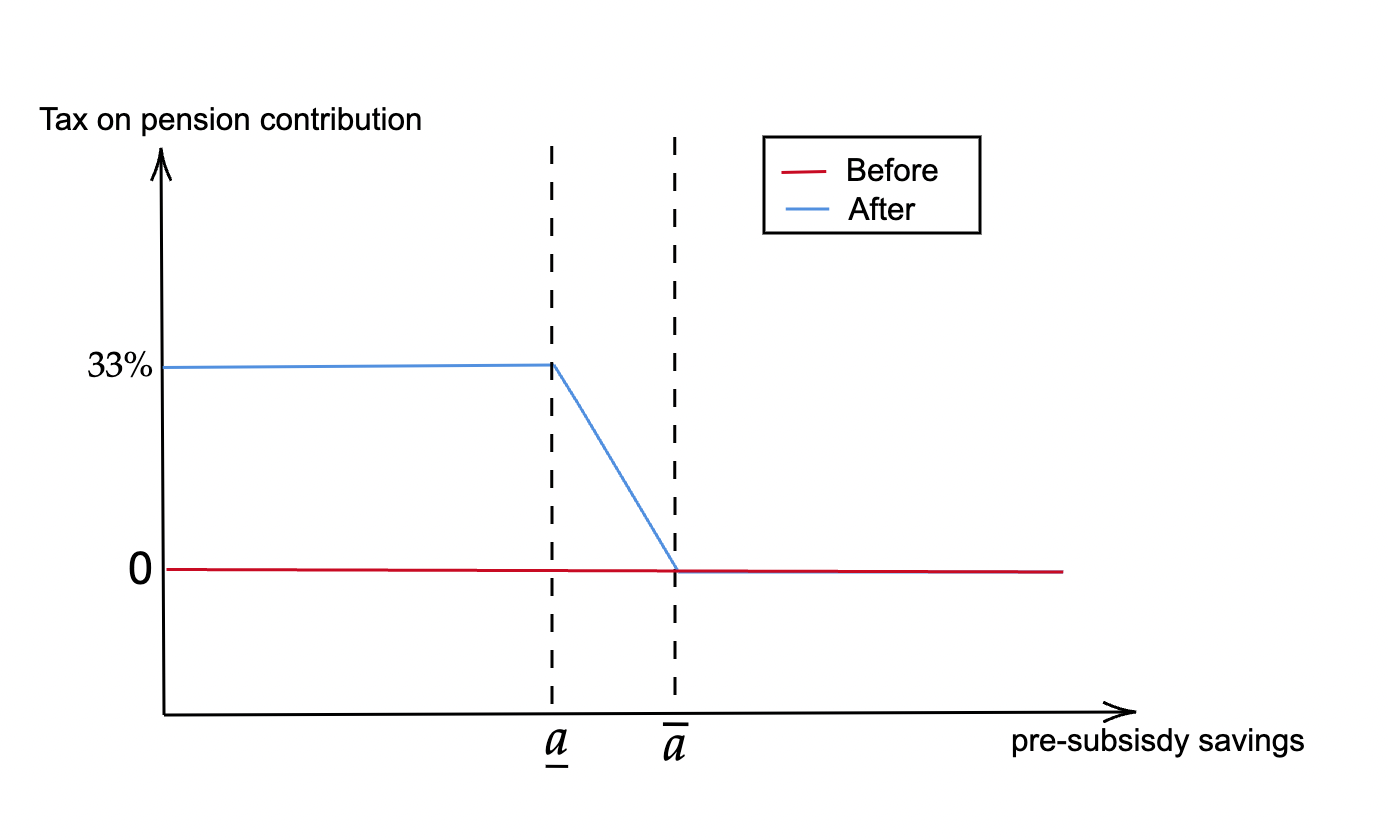}
\caption{Effect subsidy introduction on benefits-earnings link}
    \label{discont_APS_tax}
                \begin{minipage}{\textwidth} 
    {\footnotesize \textit{Notes:} This figure shows the average tax on pension contribution on earnings after the subsidy introduction for different levels of pre-retirement savings if this worker earns the limit of taxable earnings in every month until her retirement age, and she belongs to the 60\% poorer of the population. \par}
    \end{minipage}
\end{figure}

\begin{figure}
    \centering
\includegraphics[scale=0.65]{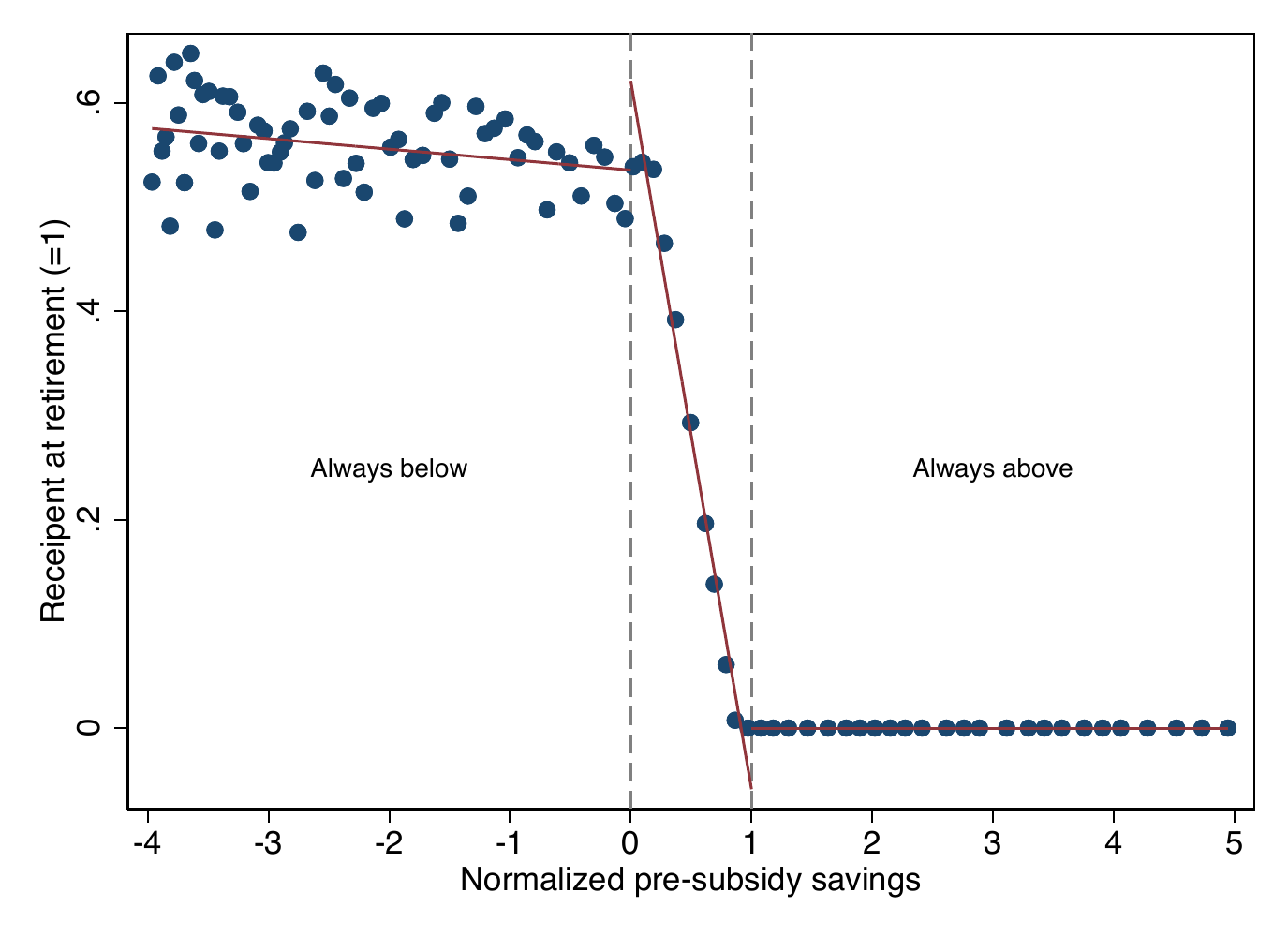}
\caption{Pre-subsidy savings and subsidy reception at retirement} \label{presubsidy_above_below}
            \begin{minipage}{\textwidth} 
    {\footnotesize \textit{Notes:} This figure shows the average recipient status for bins of normalized pre-subsidy earnings. The normalization is $(a_i-\underline{a}_i)/(\overline{a}_i-\underline{a}_i)$, and is done to taken in account that the lower and upper bound are worker specific. If the normalized pre-subsidy savings is below 0, then the worker is in the area of always below, and if it is above 1, then the worker is always above.  \par}
    \end{minipage}
\end{figure}




\begin{figure}
    \centering
    \includegraphics[scale=0.75]{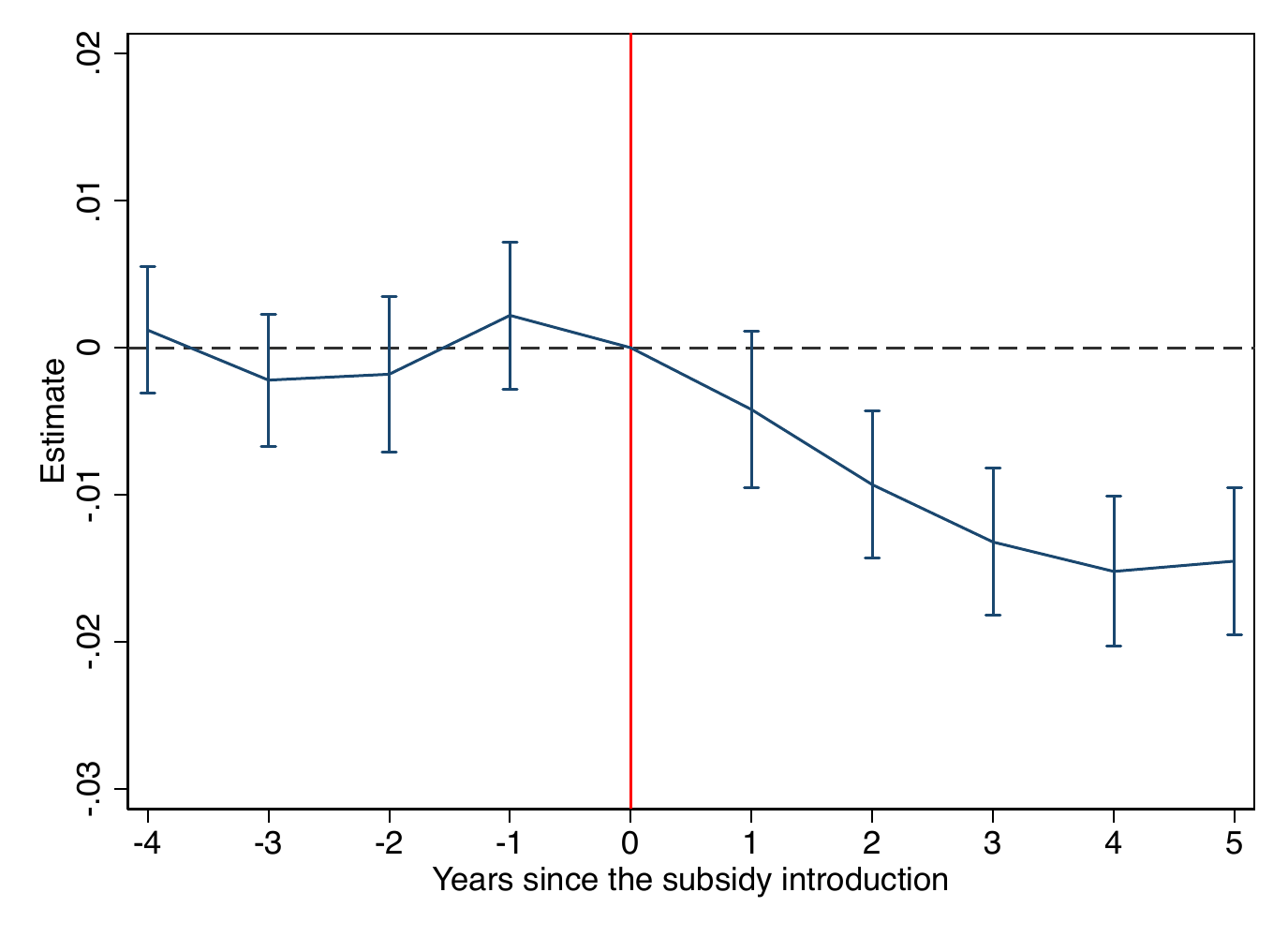}
    \caption{Event study of the effect of subsidy introduction on earnings}
    \label{dynamic_subsidy_intro}
    \begin{minipage}{\textwidth} 
    {\footnotesize \textit{Notes:} Plotted is the estimation of $\{\beta_t\}_{t=-5}^{t=5}$ from equation (\ref{dynamics_subsidy_eq}). The parameters are normalized to the year before the subsidy introduction ($t=0$). The estimation controls with age, year and worker fixed-effects. The sample of workers is given by workers less than 6 years away of retirement at the subsidy introduction, with pre-subsidy savings larger than 0.9 times the lower bound ($0.9 \underline{a}_i$), smaller than 1.1 times the upper bound ($1.1\overline{a}_i$), and are not between bounds. The standard error are cluster at the date of birth and the caps are the 95\% confidence interval for each parameter. \par}
    \end{minipage}
\end{figure}

\begin{figure}
    \centering
\begin{subfigure}[b]{0.5\linewidth}
    \centering
    \includegraphics[scale=0.5]{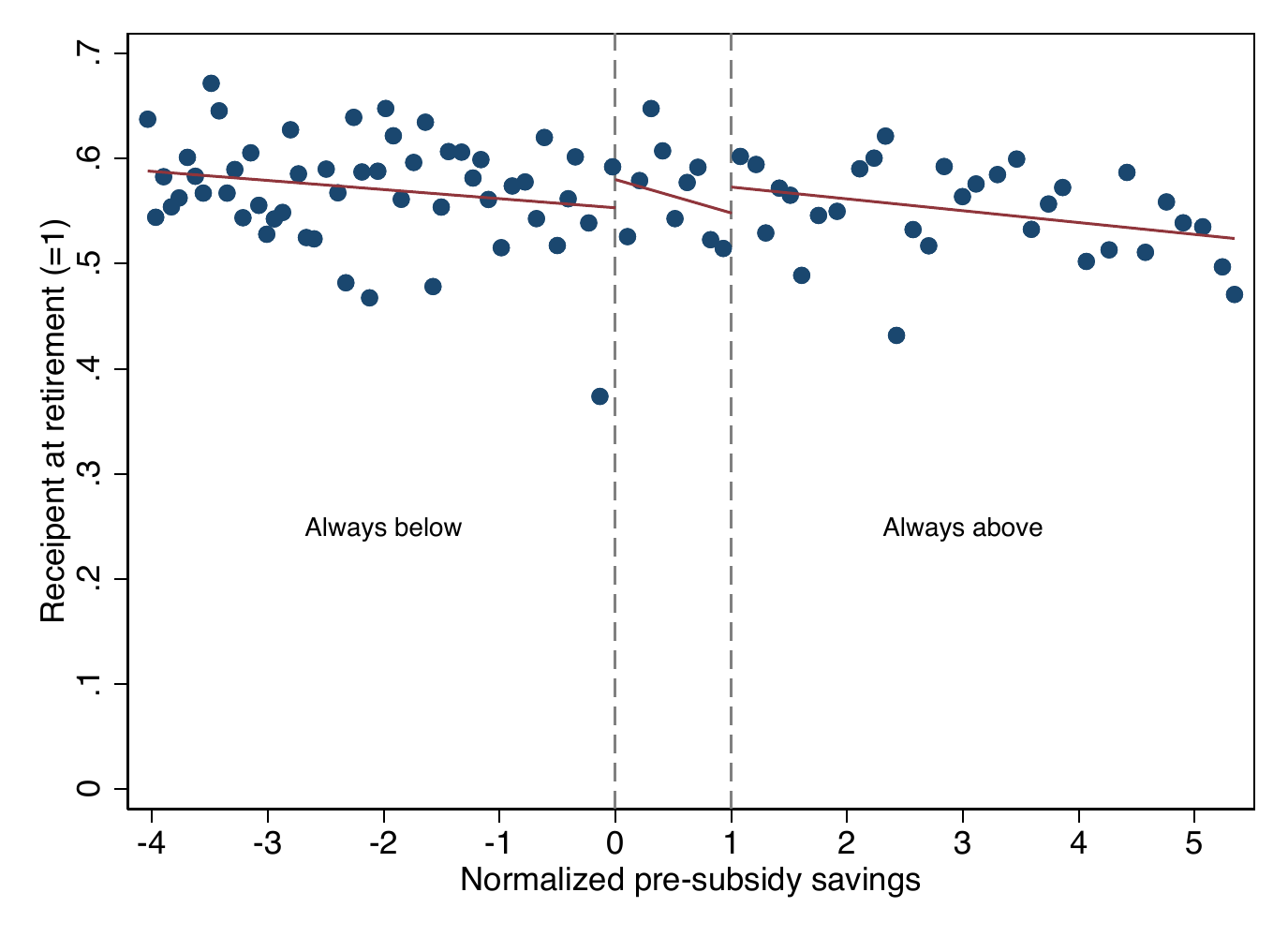}
    \caption{Lower PMAS}
    \label{fig7:a} 
    \vspace{4ex}
  \end{subfigure}
  \begin{subfigure}[b]{0.5\linewidth}
    \centering
    \includegraphics[scale=0.5]{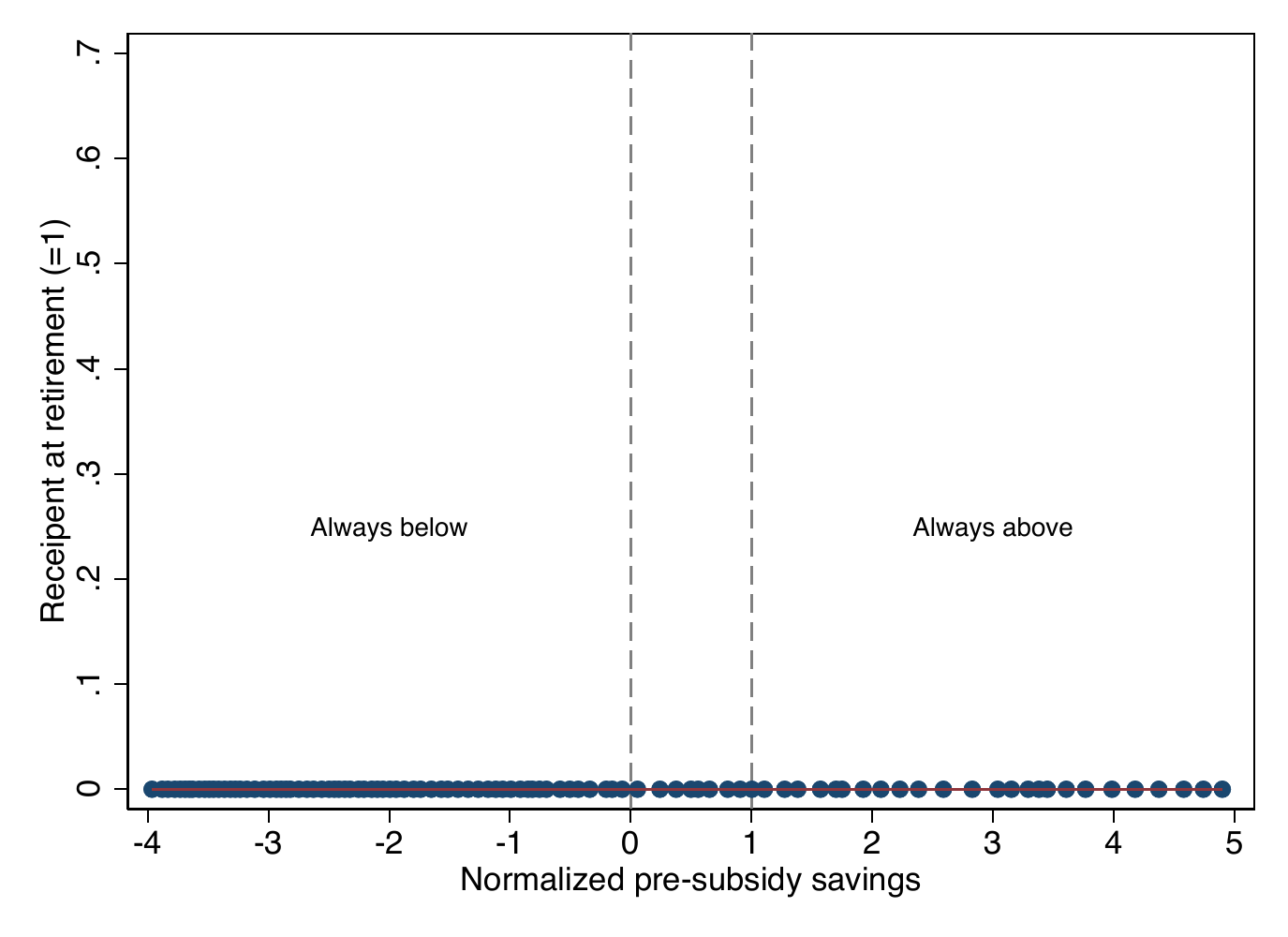}
    \caption{Higher PMAS}
    \label{fig7:b} 
    \vspace{4ex}
  \end{subfigure} 
  \caption{Placebo and recipient status}
    \label{presubsidy_above_below_placebo}
            \begin{minipage}{\textwidth} 
    {\footnotesize \textit{Notes:} This figure shows the average recipient status for bins of normalized pre-subsidy earnings for the placebo definitions of PMAS. The normalization is the same used for Figure \ref{presubsidy_above_below}, i.e., $(a_i-\underline{a}_i)/(\overline{a}_i-\underline{a}_i)$, where the difference is that to build the limits ($\underline{a}_i, \overline{a}_i$) I use placebo PMAS. Panel (a) uses a placebo PMAS that is the 20\% smaller than the real PMAS, and Panel (b) uses a PMAS that is 20\% larger than the real PMAS.  \par}
    \end{minipage}
\end{figure}

\begin{figure}[h!]
    \centering
    \includegraphics[scale=0.7]{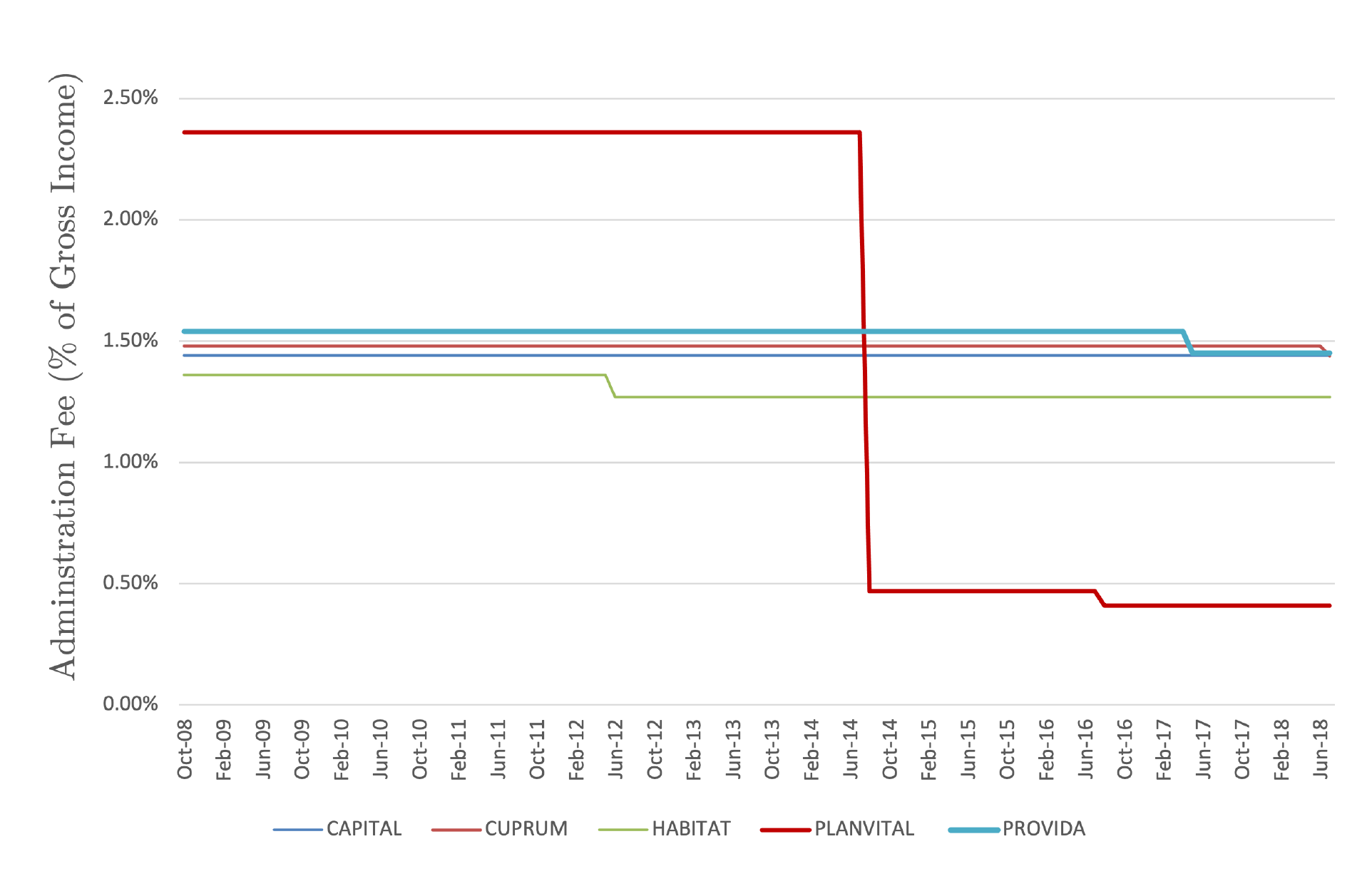}
    \caption{Time Series of Pension Fund Administrator Administration Fees}
    \label{fees_ts_timeframe}
\end{figure}

\begin{figure}[h!]
    \centering
    \includegraphics[scale=0.75]{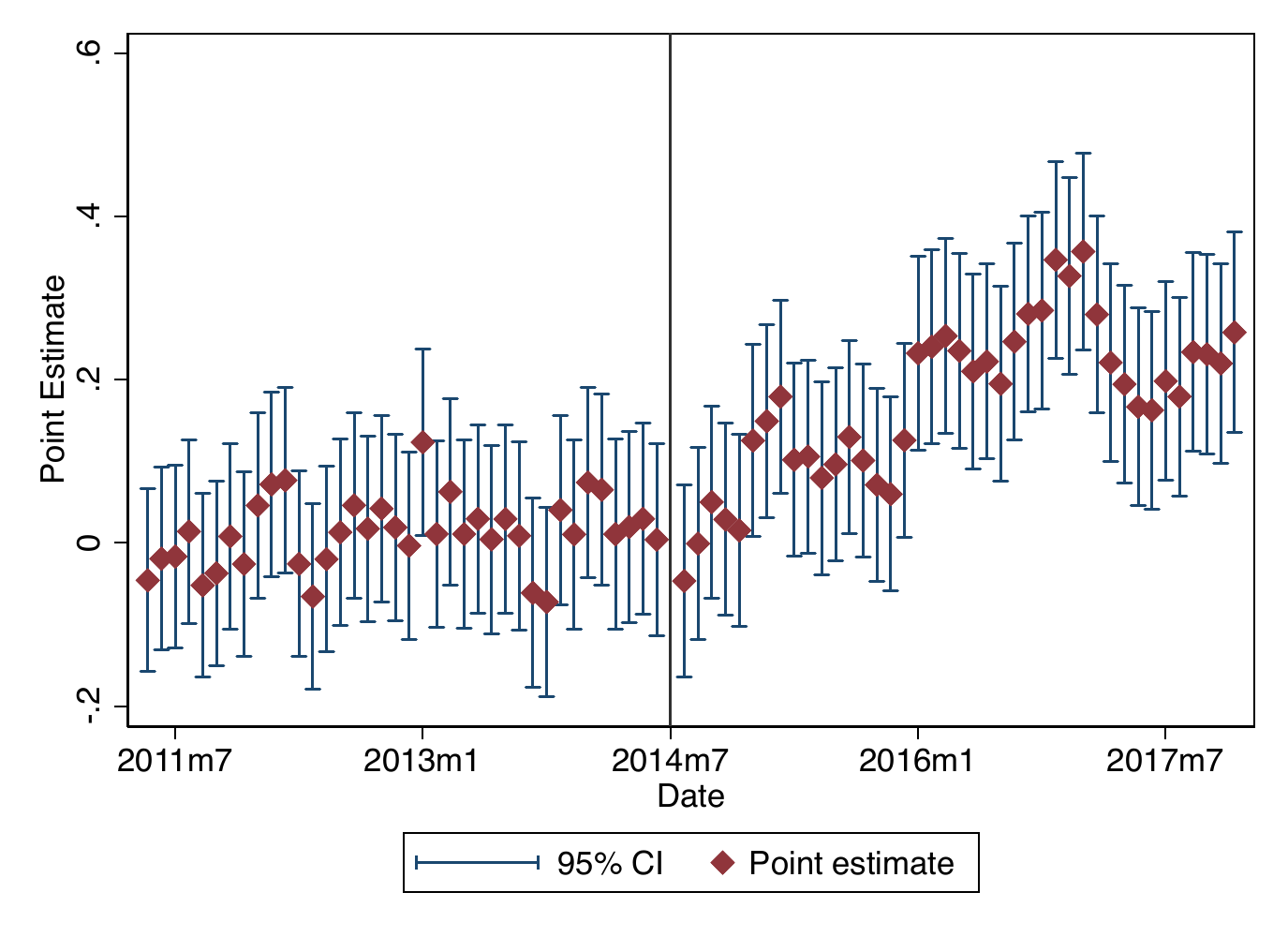}
    \caption{Elasticity of Taxable Income Dynamics}
    \label{ETI_dynamic_figure}
    \medskip 
    \begin{minipage}{\textwidth} 
    {\footnotesize \textit{Notes:} Plotted is the estimation of $(\{\beta_{-\tau}\}_{\tau=0}^{\tau=m}, \{\beta_{\tau}\}_{\tau=1}^{\tau=q})$ from equation (\ref{dif_dif_dynamic}). The estimation has time and PFA fixed-effects. Controls are age; gender; and pre-treatment earnings, extensive margin participation and pension savings. Workers that have ever switched between pension fund are part of the estimation. \par}
    \end{minipage}
\end{figure}

\begin{figure}[h!] 
  \begin{subfigure}[b]{0.5\linewidth}
    \centering
    \includegraphics[scale=0.5]{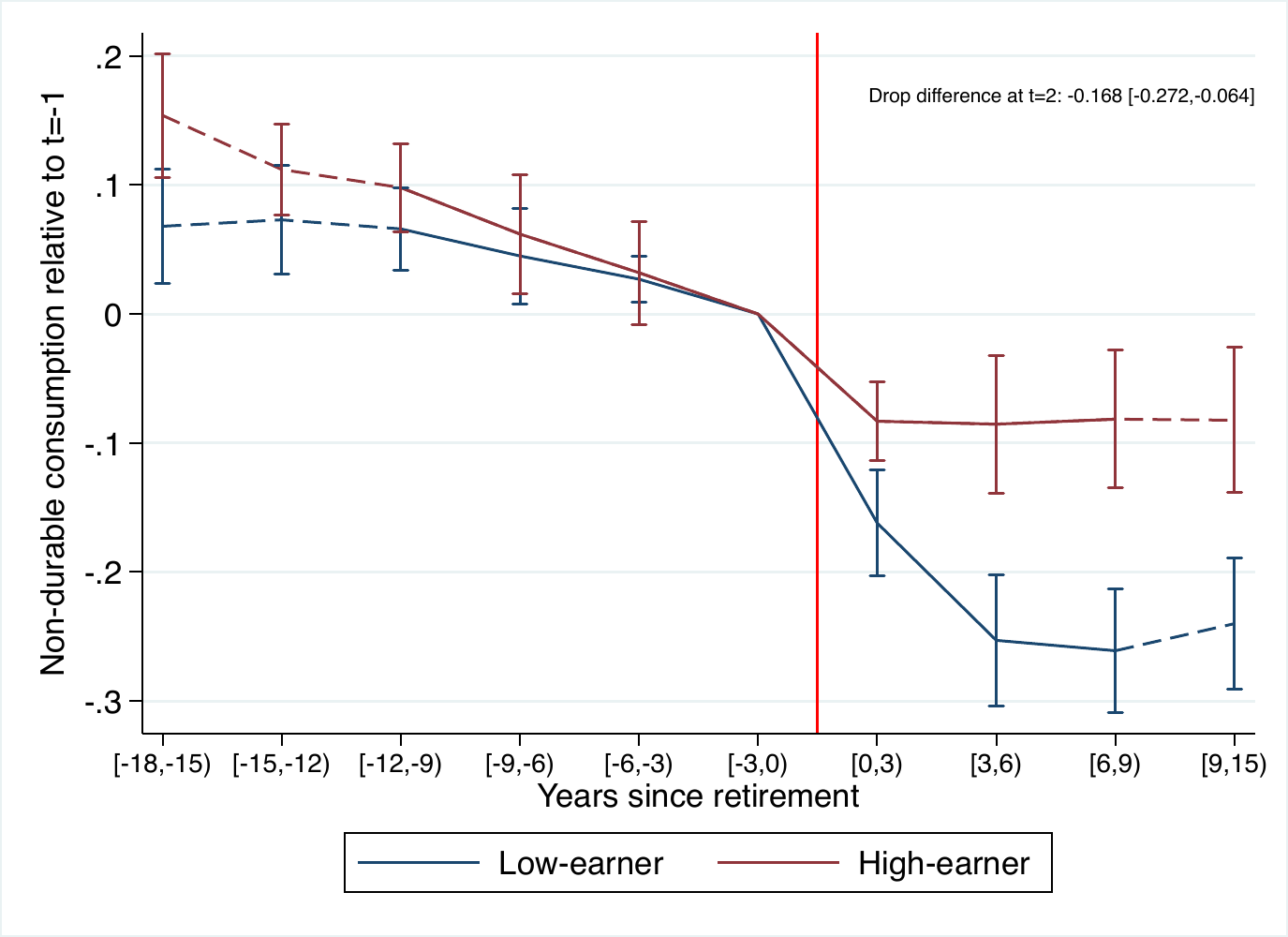}
    \caption{Consumption}
    \label{fig7:a} 
    \vspace{4ex}
  \end{subfigure}
  \begin{subfigure}[b]{0.5\linewidth}
    \centering
    \includegraphics[scale=0.5]{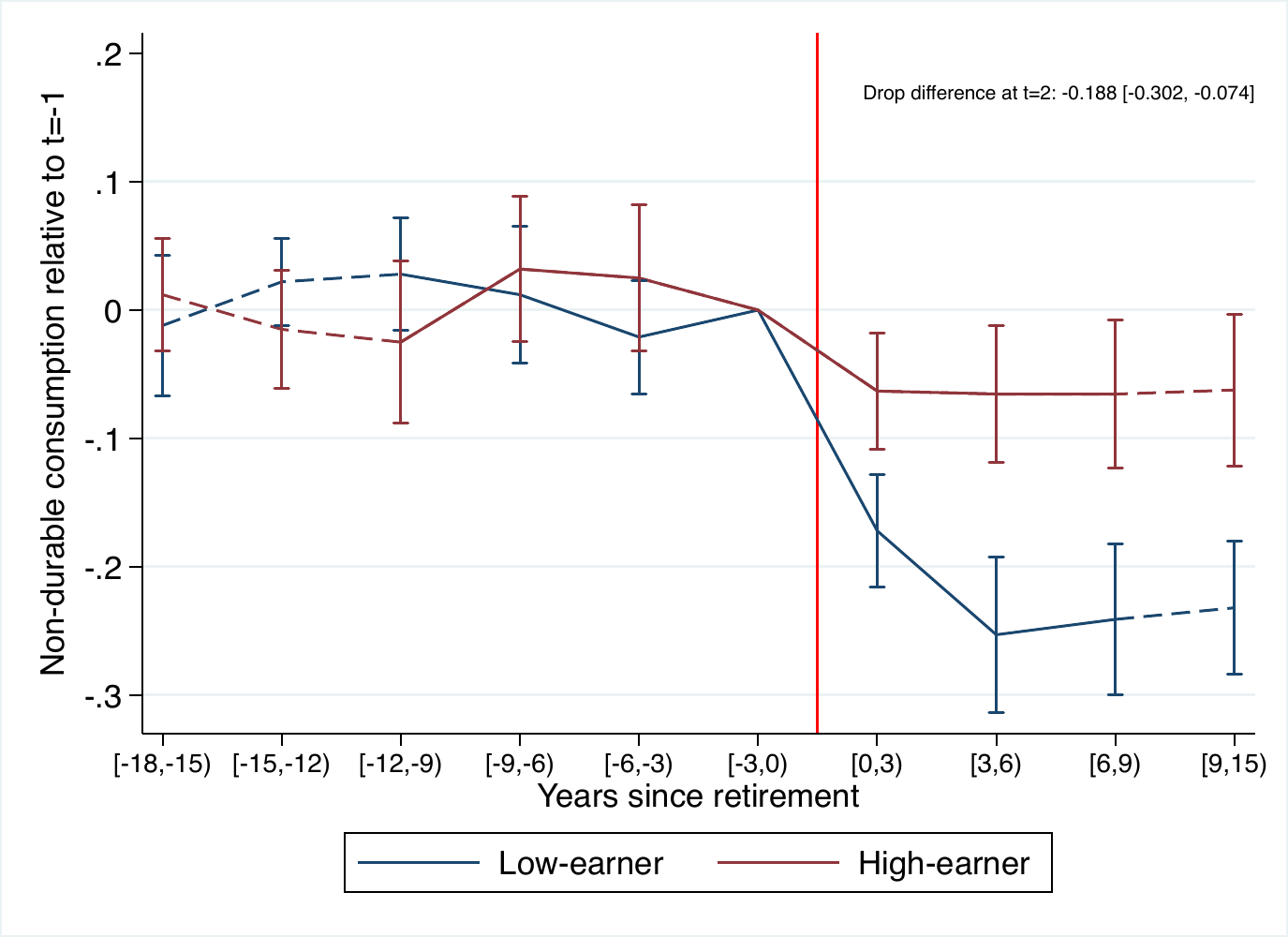}
    \caption{Consumption detrended}
    \label{fig7:b} 
    \vspace{4ex}
  \end{subfigure} 
    \begin{subfigure}[b]{0.5\linewidth}
    \centering
    \includegraphics[scale=0.5]{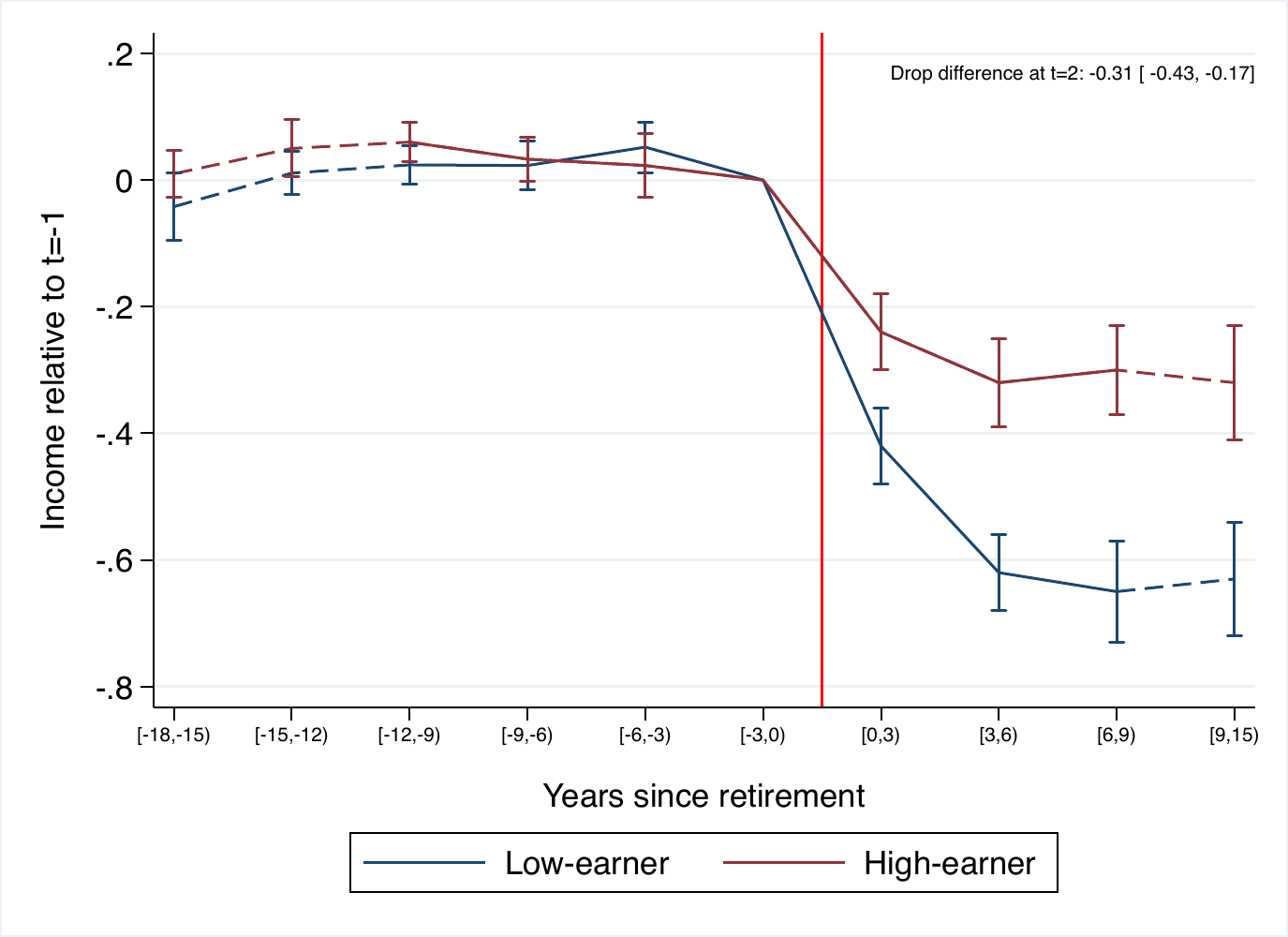}
    \caption{Income}
    \label{fig7:a} 
    \vspace{4ex}
  \end{subfigure}
  \begin{subfigure}[b]{0.5\linewidth}
    \centering
       \includegraphics[scale=0.5]{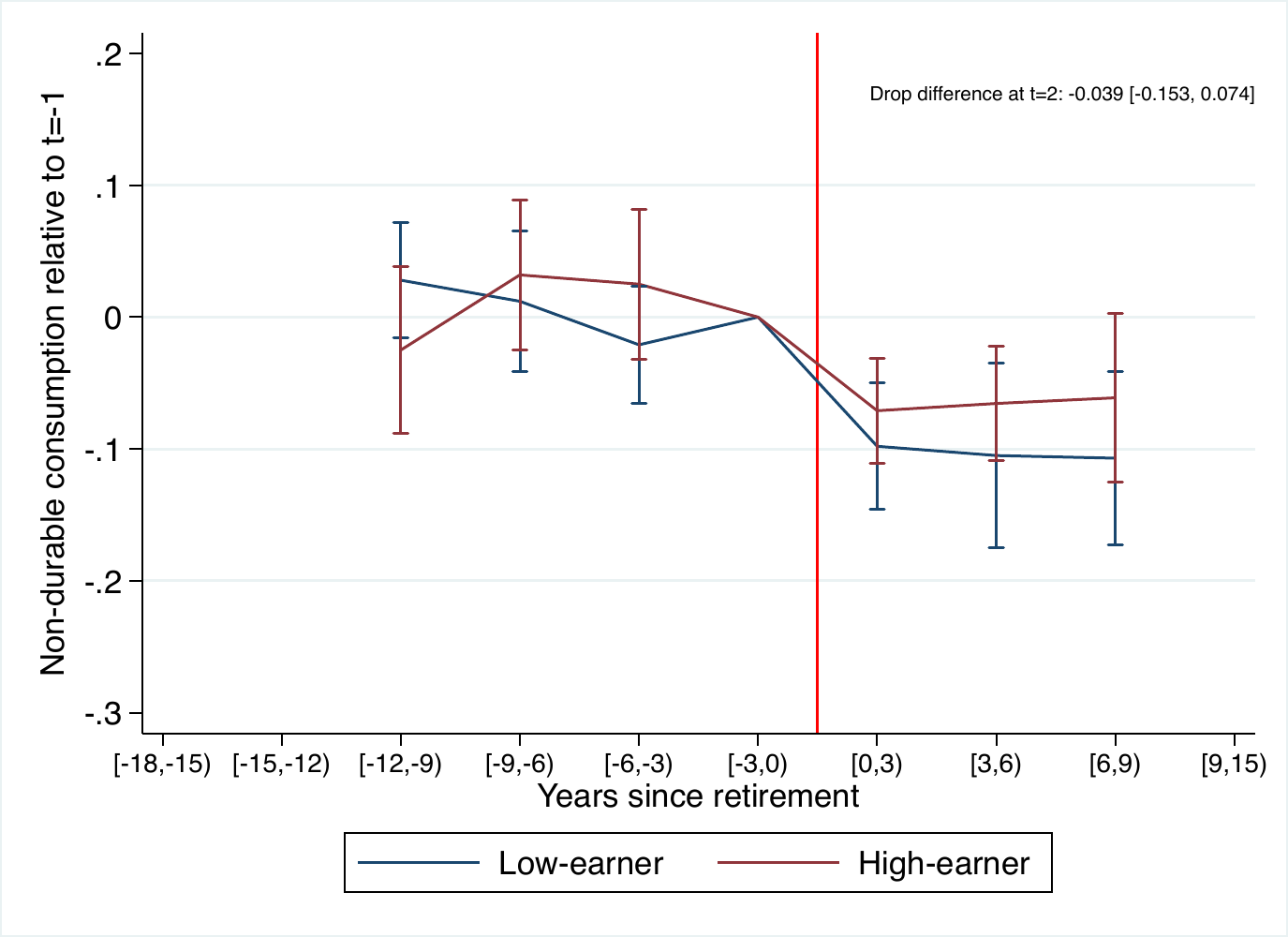}
    \caption{Consumption with income control}
    \label{fig7:b} 
    \vspace{4ex}
  \end{subfigure}  
  \caption{Impacts of retirement}
  \label{retirement_drop} 
      \begin{minipage}{\textwidth} 
    {\footnotesize \textit{Notes:} The figure shows event time coefficients estimated from equation (\ref{consumption_drop_eventstudy}) separately for workers with lifetime earnings above and below the average. In each panel, the outcome variable is controlled by year dummies, household composition, and cohort fixed effects. Panel (a), (b), and (d), the outcome variable is self-reported non-durable consumption, while in Panel (c) is self-reported income. In Panel (b) and (d), I further control for pre-retirement trends.  Each panel reports the difference in the drop measured at event time 2 (between 3-6 years after retirement). These statistics are estimated on an unbalanced sample of workers that retired between 2006–2019. Dashed lines extend the analysis using cross-section. The caps represent the 95 percent confidence intervals based on robust standard errors.
 \par}
    \end{minipage}
\end{figure}



\begin{figure}
    \centering
    \includegraphics[scale=.65]{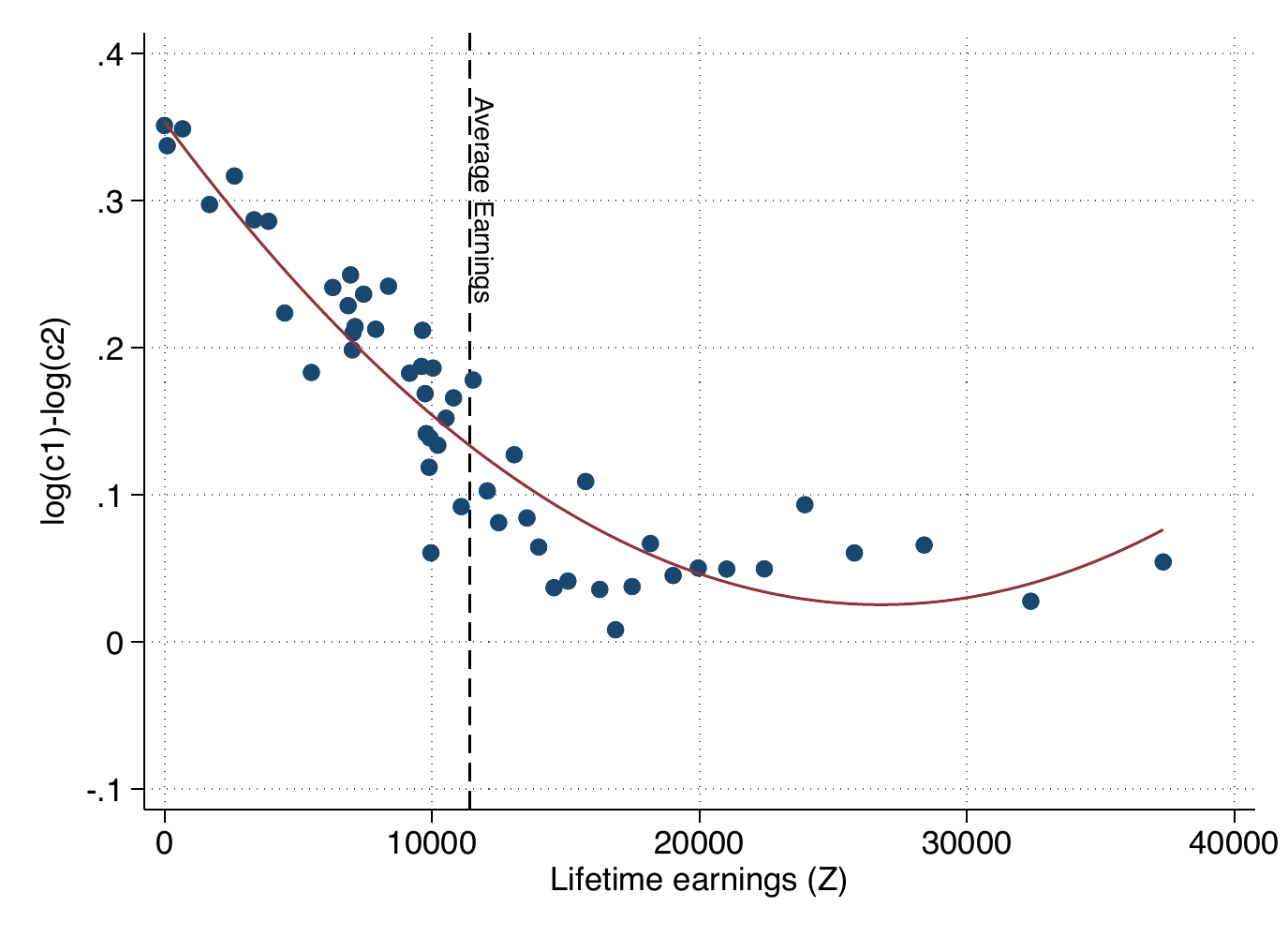}
    \caption{Relationship between lifetime payroll income and consumption drop income at retirement}
    \label{cons_drop_earnings}
          \begin{minipage}{\textwidth} 
    {\footnotesize \textit{Notes:} The scatter of the figure shows the average consumption
    drop for 52 equal-density bins of lifetime earnings and the line shows the quadratic linear relationship between consumption drop at retirement and lifetime earnings. Consumption drop is defined as the difference between the logs of active and retired consumption. There is one observation for every worker, where the active consumption is the availabe observation closest to retirement and the retired consumption is closest to event time 2 (between three and six years after retirement). The controls are one year dummies for the active and one year dummy for the retired consumption, cohort fixed effects, and household composition.
 \par}
    \end{minipage}
\end{figure}

\begin{figure}
    \centering
    \includegraphics[scale=.65]{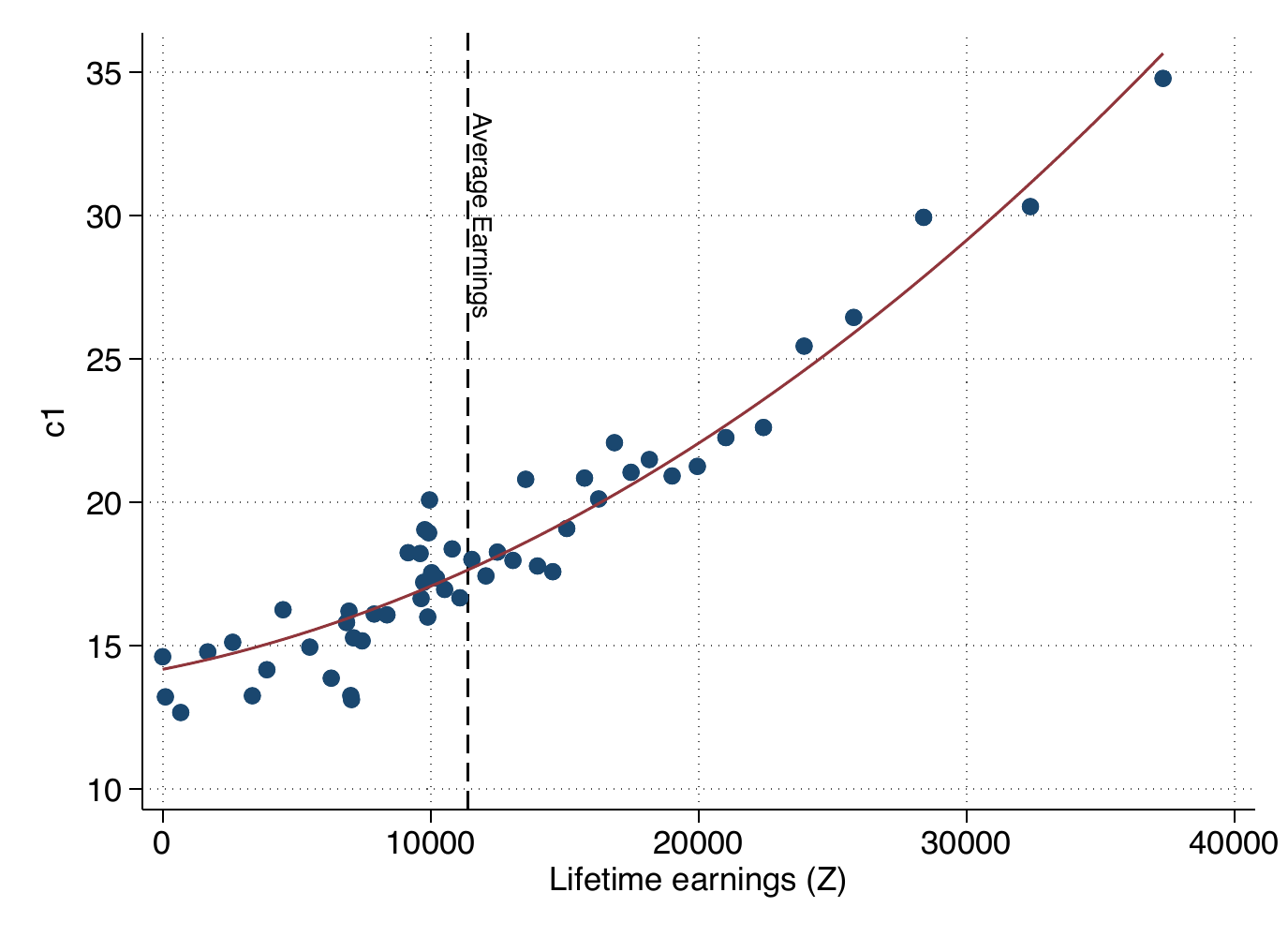}
    \caption{Relationship between lifetime earnings and active consumtpion}
    \label{act_cons_earnings}
    \begin{minipage}{\textwidth} 
        {\footnotesize \textit{Notes:} The scatter of the figure shows the average active consumption for 53 equal-density bins of lifetime earnings and the line shows the quadratic linear relationship between active consumption and lifetime earnings. There is one observation for every worker, where the active consumption is the available observation closest to retirement. The controls are year dummies, age and cohort fixed effects, and household composition.
 \par}
     \end{minipage}
\end{figure}

\begin{figure}
    \centering
\includegraphics[scale=0.3]{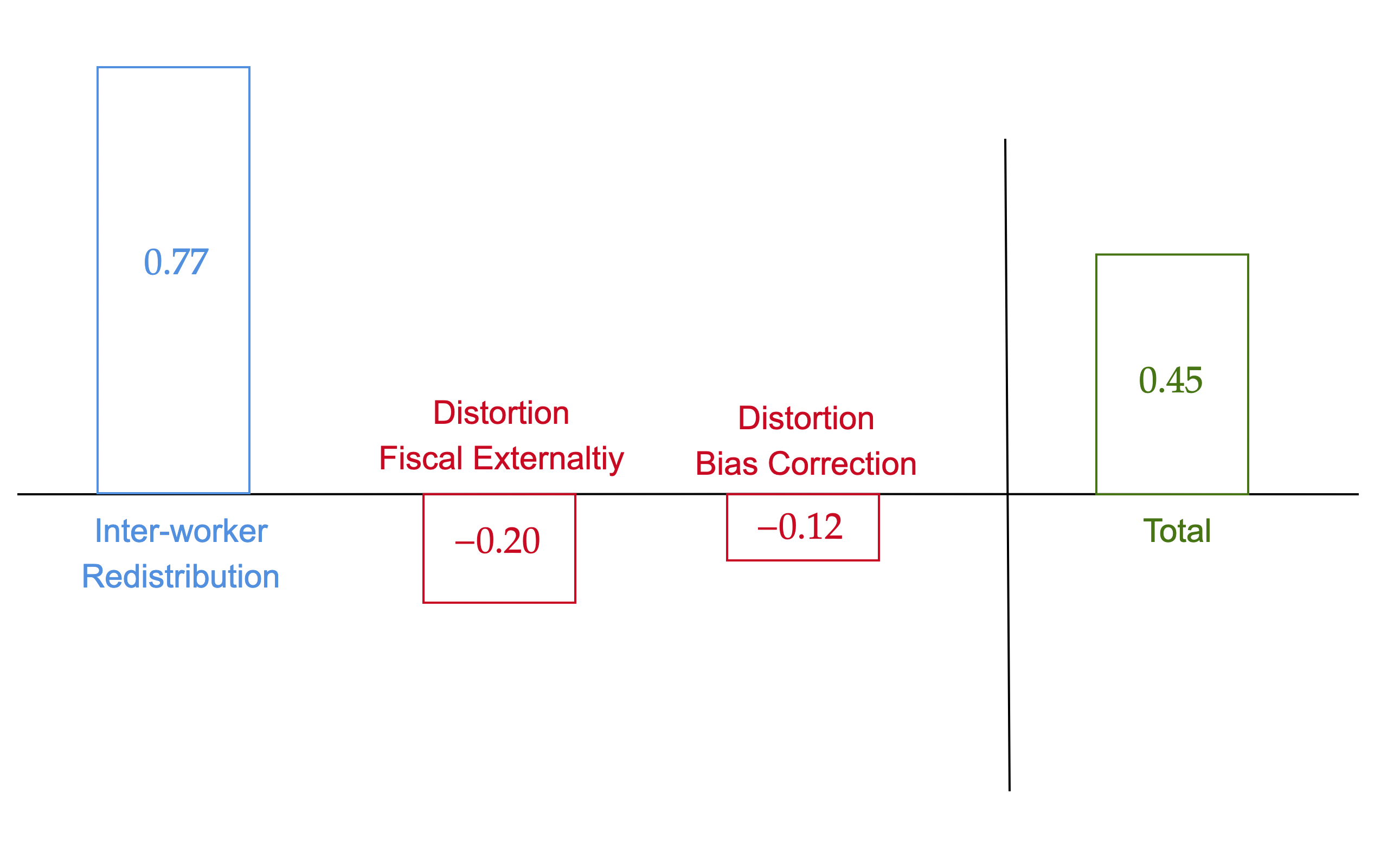}
    \caption{Decomposition of social gains of increasing benefits' progressivity }
    \label{decomp_progressivity}
\end{figure}

\begin{figure}
    \centering
\includegraphics[scale=0.65]{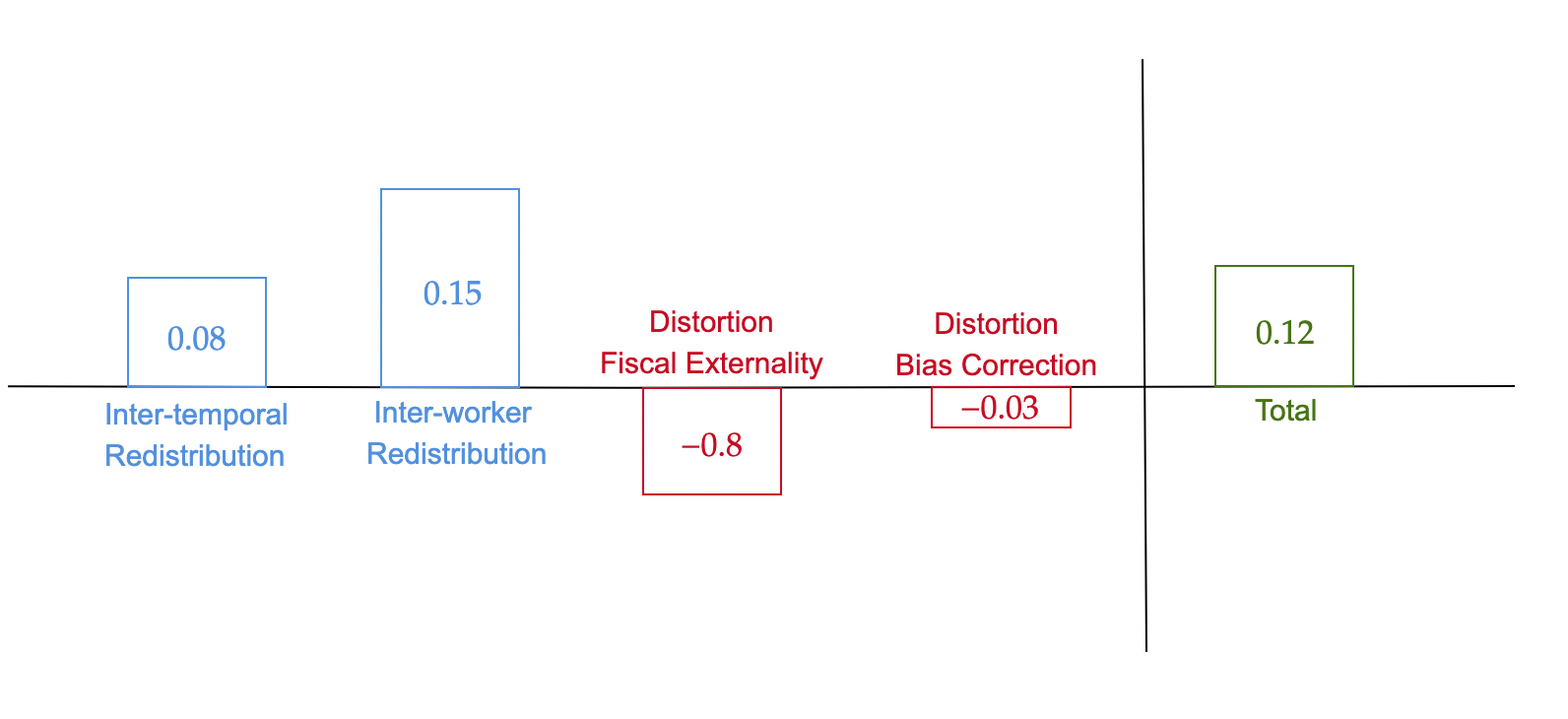}
    \caption{Decomposition of social gains of increasing pension contribution rate}
    \label{decomp_contribution}
\end{figure}

\begin{figure}
    \centering
    \includegraphics[scale=.5]{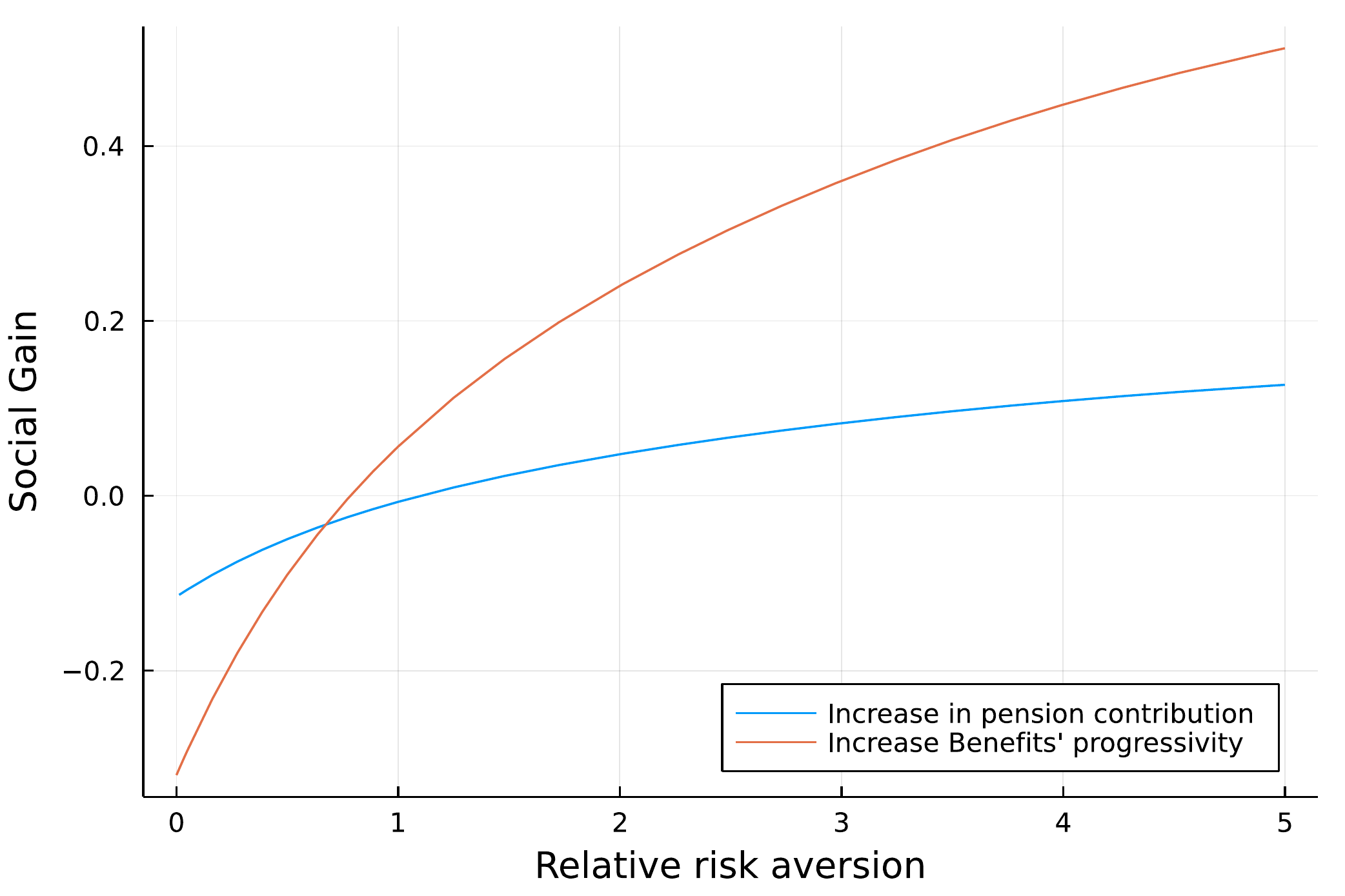}
    \caption{Relation between social gains of reforms and relative risk aversion}
    \label{comp_stat_rra}
\end{figure}

\begin{figure}
    \centering
    \includegraphics[scale=.4]{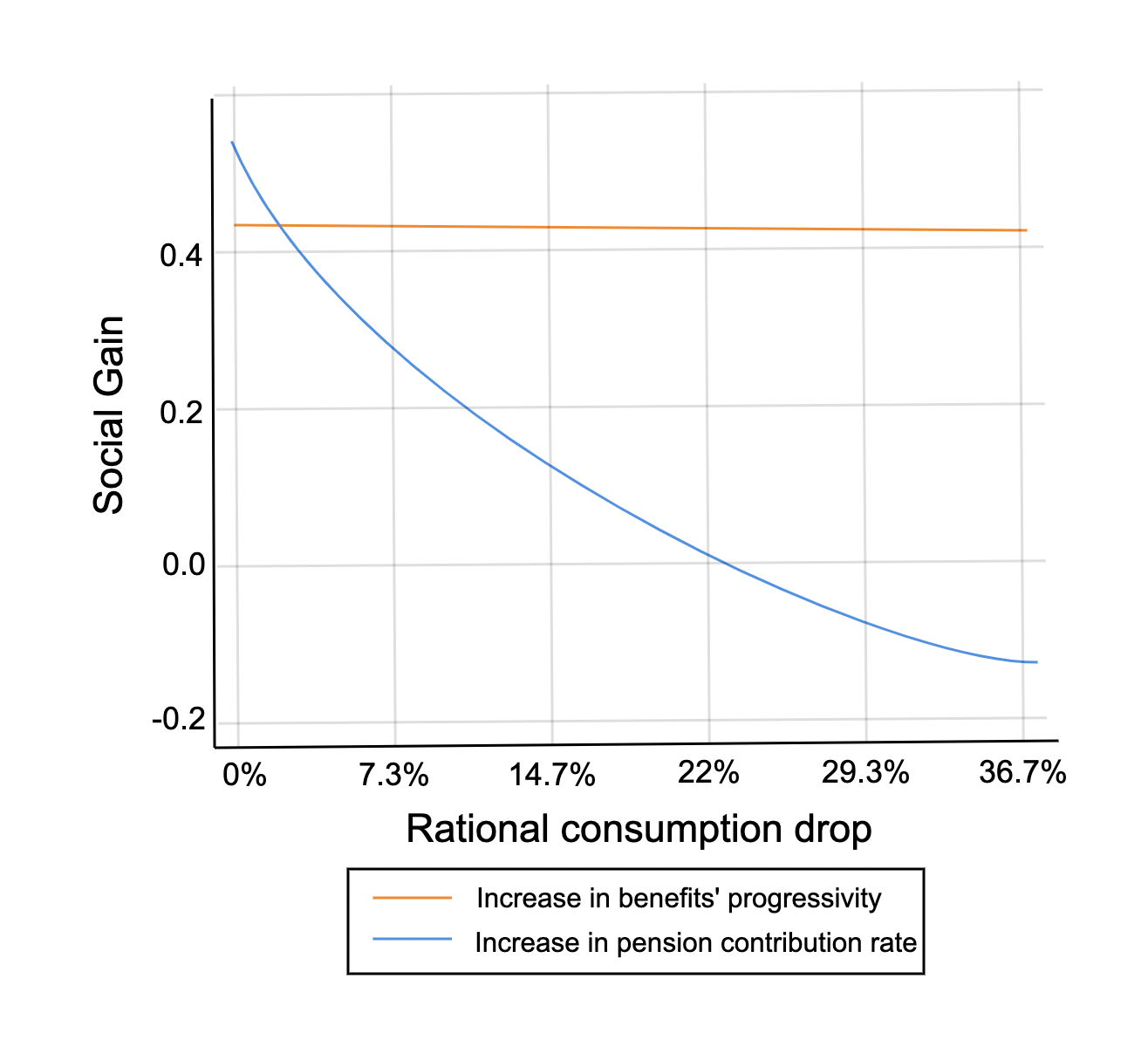}
    \caption{Relation between social gains of reforms and rational consumption drop at retirement}
    \label{comp_stat_consdrop}
\end{figure}

\begin{figure}
    \centering
    \includegraphics[scale=.4]{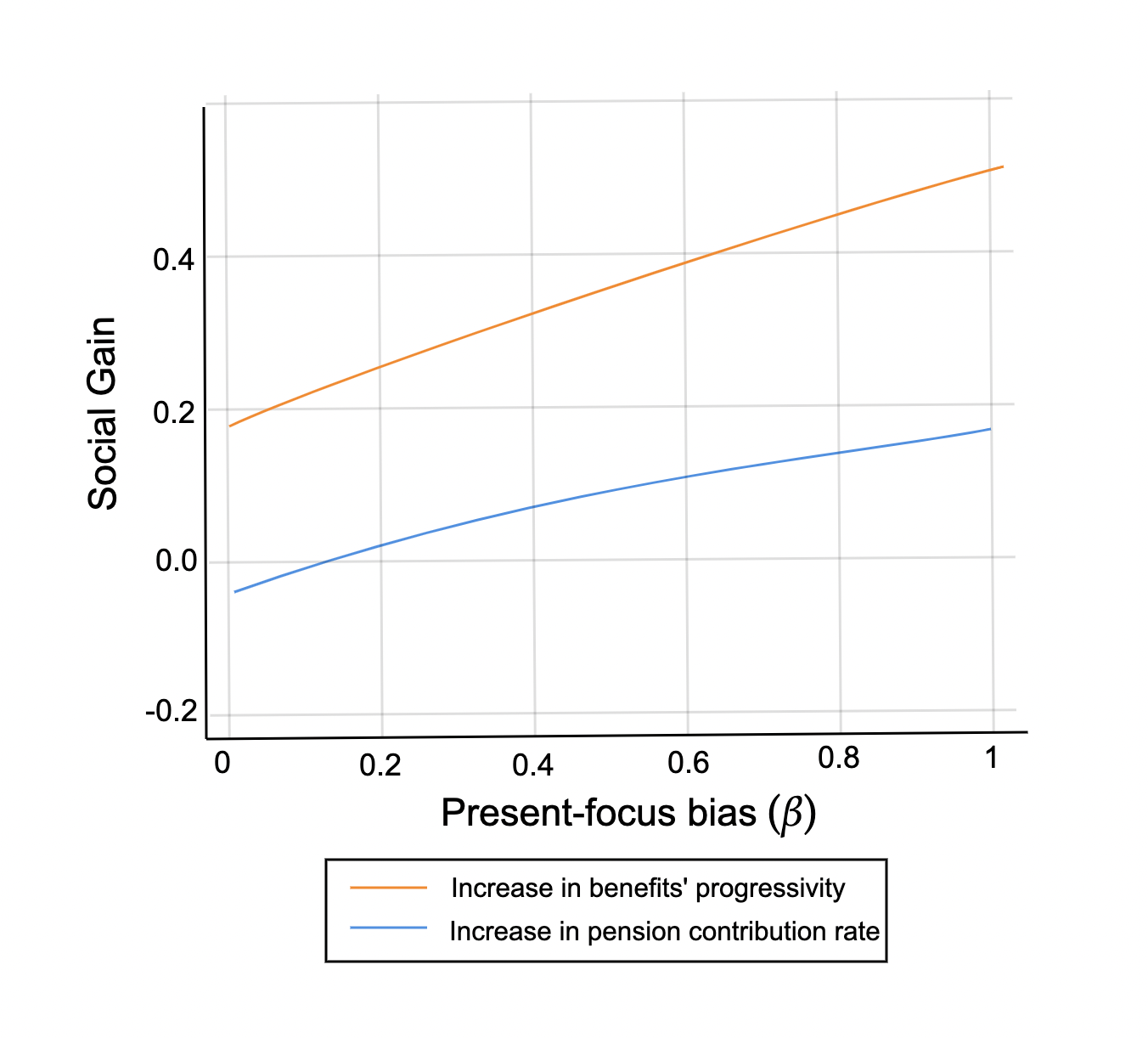}
    \caption{Relation between social gains of reforms and rational consumption drop at retirement}
    \label{comp_stat_bias}
\end{figure}

\begin{figure}
    \centering
    \includegraphics[scale=.6]{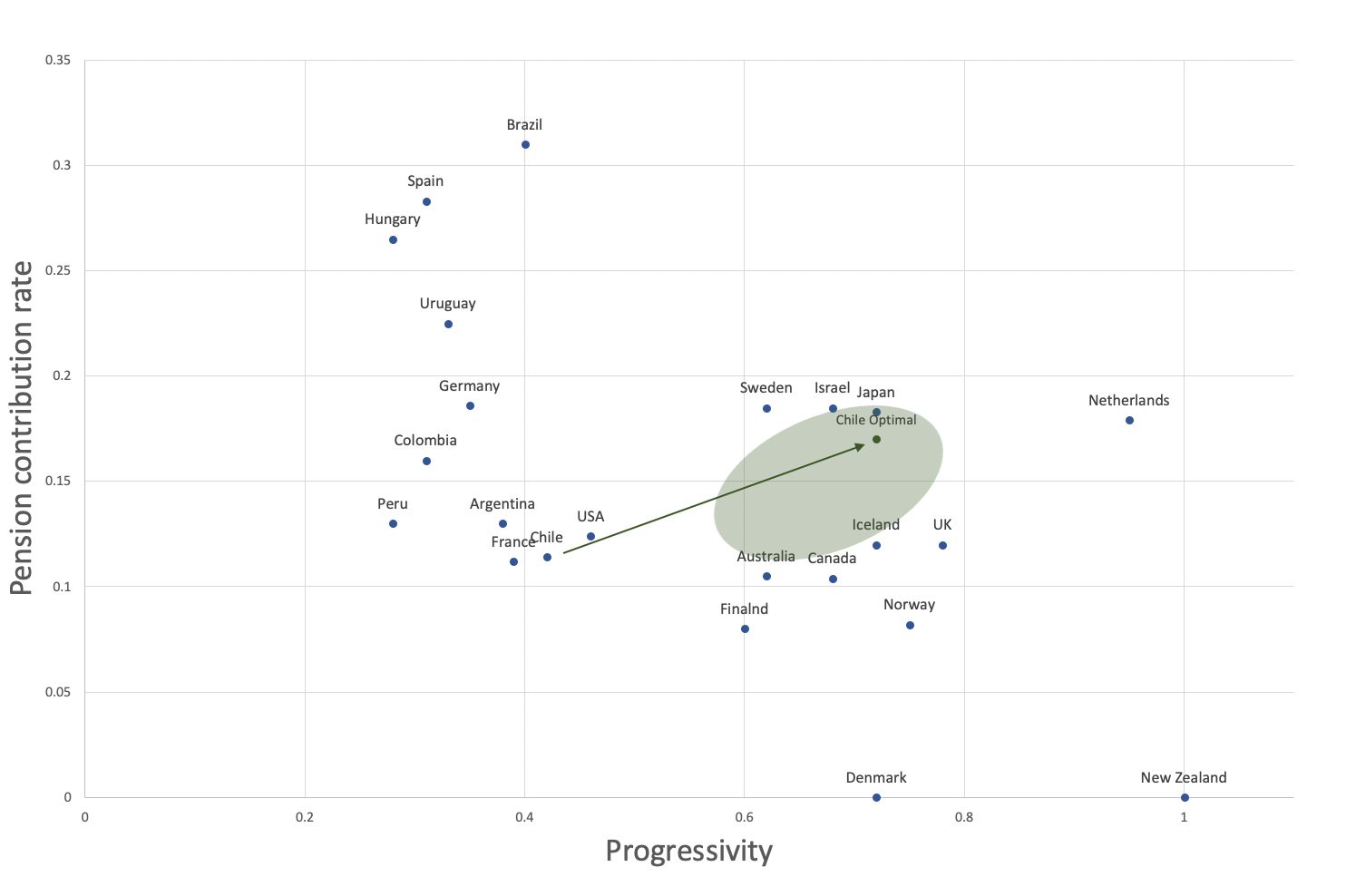}
    \caption{Pension contribution rate and progressivity of 23 public pension systems}
    \label{world_progr_contribution}
\end{figure}

\newpage

\FloatBarrier

\section*{Tables}

\begin{table}[h!]
    \centering
    \includegraphics[scale=0.5]{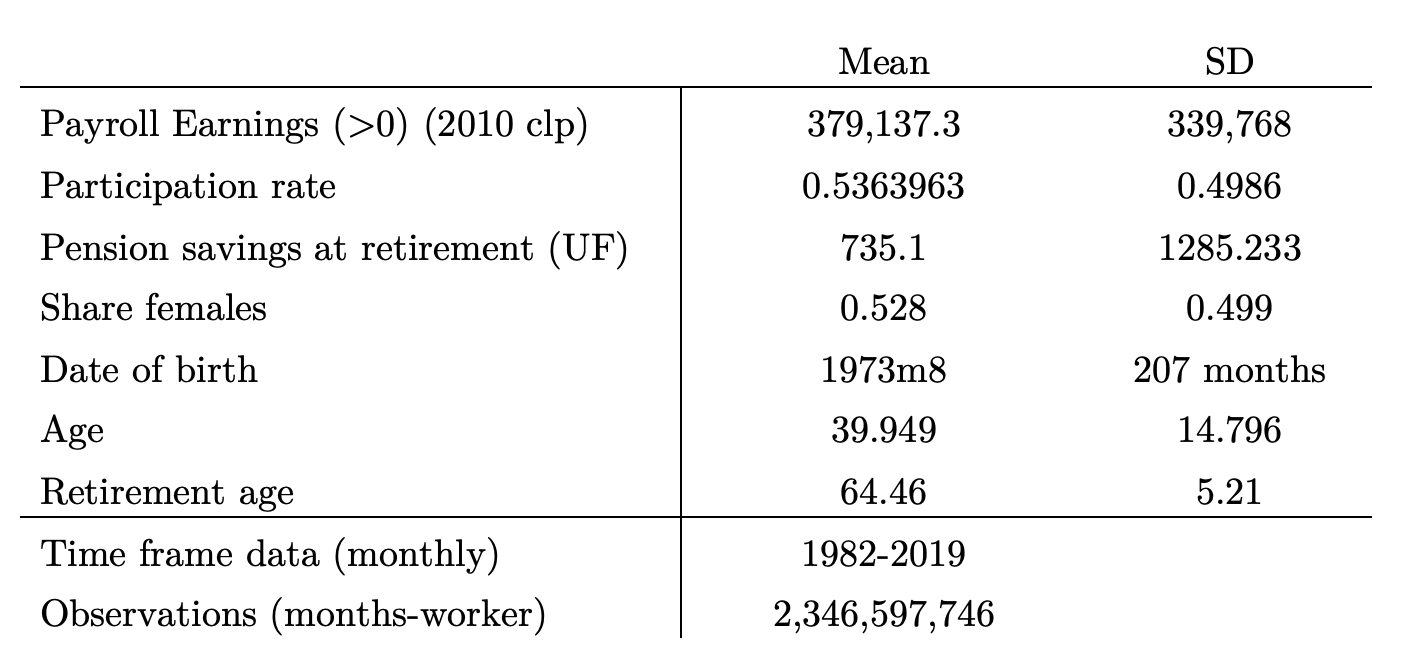}
    \caption{Summary statistics of pension system administrative data}
    \label{ETI_table}
    \medskip 
    \begin{minipage}{0.96\textwidth} 
    {\footnotesize \textit{Notes:}  \par} 
    \end{minipage}
    \label{sum_admin}
\end{table}

\begin{table}[h!]
    \centering
    \includegraphics[scale=0.5]{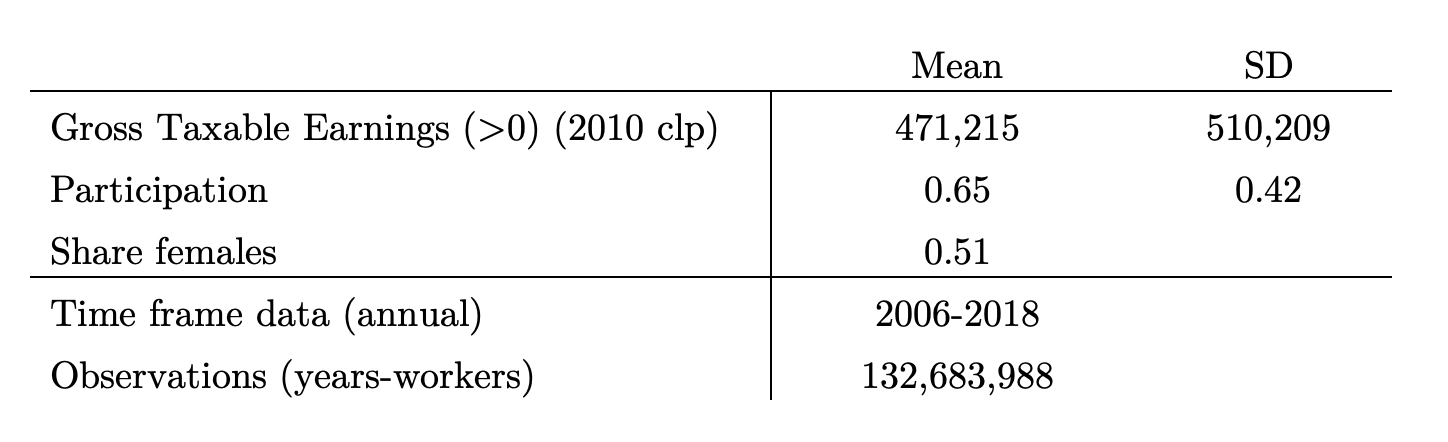}
    \caption{Summary statistics of admin tax data}
    \label{ETI_table}
    \medskip 
    \begin{minipage}{0.96\textwidth} 
    {\footnotesize \textit{Notes:}  \par} 
    \end{minipage}
   \label{sum_admin_tax} 
\end{table}

\begin{table}[h!]
    \centering
    \includegraphics[scale=0.5]{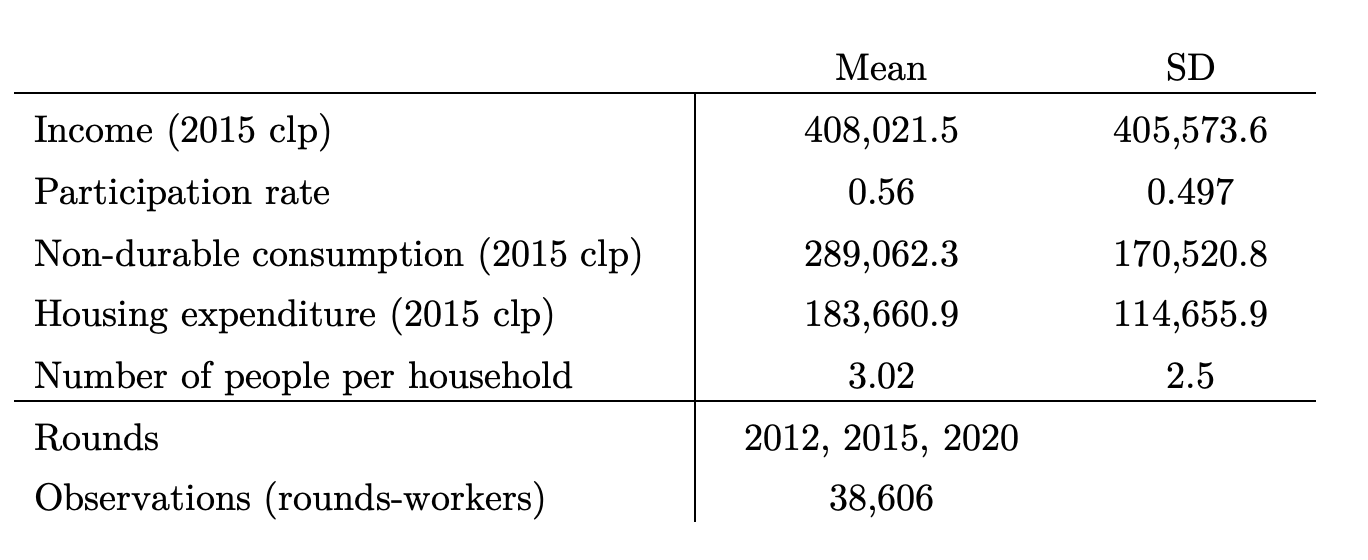}
    \caption{Summary statistics of survey data}
    \label{ETI_table}
    \medskip 
    \begin{minipage}{0.96\textwidth} 
    {\footnotesize \textit{Notes:}  \par} 
    \end{minipage}
    \label{sum_survey}
\end{table}


\begin{table}[h!]
    \centering
    \begin{tabular}{lccc} \hline
 & (1) & (2) & (3) \\
VARIABLES & log(pension) & log(pension) & log(pension) \\ \hline
 &  &  &  \\
log(pension savings GFC shock) & 0.552*** & 0.543*** & 0.528*** \\
 & [0.004] & [0.004] & [0.002] \\
 &  &  &  \\
Observations & 108,129 & 108,129 & 48,026 \\
R-squared & 0.684 & 0.602 & 0.645 \\
Age Controls & No & Yes & Yes \\
Other Controls & No & Yes & Yes \\
Crisis Definition & Short & Short & Long \\
Sample Definition & Wide & Wide & Narrow \\
 SE & Robust & Robust & Robust \\ \hline
\end{tabular}

    \caption{First-stage: Global Financial Crisis shock to future pension benefit}
    \label{first_stage_FC_shock}
    \medskip 
    \begin{minipage}{\textwidth} 
    {\footnotesize \textit{Notes:}  *, **, *** indicates significance at 90\%, 95\% and 99\% of confidence, respectively. Standard errors are inside [ ]. Estimation follows equation (\ref{first_stage_eq}). The dependent variable is the log of pension benefit at retirement adjusted by CPI. Controls include: age fixed-effects, gender and pre-GFC pension savings. \par} 
    \end{minipage}
\end{table}

\begin{table}[h!]
    \centering
    \begin{tabular}{lccc} \hline
 & (1) & (2) & (3) \\
VARIABLES & log(Income) & log(Income) & log(Income) \\ \hline
 &  &  &  \\
\% Shock to Pension & -0.155*** & -0.115*** & -0.121*** \\
 & [0.051] & [0.023] & [0.036] \\
 &  &  &  \\
Observations & 216,200 & 111,738 & 281,227 \\
R-squared & 0.550 & 0.579 & 0.562 \\
Controls & Yes & Yes & Yes \\
Crisis Definition & Long & Long & Short \\
Sample Definition & Wide & Narrow & Wide \\
 Clustered SE & Sex-Dob & Sex-Dob & Sex-Dob \\ \hline
\end{tabular}

    \caption{Taxable Income and future pension benefit.}
    \label{pen_shock}
    \medskip 
    \begin{minipage}{\textwidth} 
    {\footnotesize \textit{Notes:}  *, **, *** indicates significance at 90\%, 95\% and 99\% of confidence, respectively. Standard errors are inside [ ]. Estimation follows (\ref{second_stage_eq}). The dependent variable is the log of annual earnings adjusted by CPI. Controls include: age and year fixed-effects, gender, and pre-GFC pension saving. \par} 
    \end{minipage}
\end{table}

\begin{table}[h!]
    \centering
    \begin{tabular}{lccc} \hline
 & (1) & (2) & (3) \\
VARIABLES & Future Pension & Future Pension & Future Pension \\ \hline
 &  &  &  \\
GFC shock to pension savings   & 0.903*** & 0.899*** & 0.856*** \\
 & [0.008] & [0.008] & [0.004] \\
 &  &  &  \\
Observations & 108,129 & 108,129 & 48,026 \\
R-squared & 0.584 & 0.591 & 0.557 \\
Age Controls & No & Yes & Yes \\
Other Controls & No & Yes & Yes \\
Crisis Definition & Short & Short & Long \\
Sample Definition & Narrow & Narrow & Narrow \\
 SE & Robust & Robust & Robust \\ \hline
\end{tabular}

    \caption{First-stage: Global Financial Crisis shock to future pension benefit}
    \label{first_stage_FC_shock}
    \medskip 
    \begin{minipage}{0.96\textwidth} 
    {\footnotesize \textit{Notes:}  *, **, *** indicates significance at 90\%, 95\% and 99\% of confidence, respectively. Standard errors are inside [ ]. Estimation follows equation (\ref{first_stage_eq_MPC}). The dependent variable is pension benefit at retirement adjusted by CPI. Controls include: age fixed-effects, gender and pre-GFC pension savings. \par} 
    \end{minipage}
\end{table}

\begin{table}[h!]
    \centering
    \footnotesize
    \begin{tabular}{lccccc} \hline
 & (1) & (2) & (3) & (4) & (5) \\
VARIABLES & consumption & consumption & consumption &  consumption &  consumption \\ \hline
 &  &  &  &  &  \\
Pension benefit & 0.843*** & 0.798*** & 0.798*** & 0.798*** & 0.693** \\
 & [0.092] & [0.093] & [0.202] & [0.201] & [0.325] \\
 &  &  &  &  &  \\
Observations & 2,173 & 2,173 & 2,173 & 2,173 & 960 \\
Controls & No & Age-Gender & Age-Gender & Age-Gender & Age-Gender \\
SE & None & None & Robust & Cluster DoB & Cluster DoB \\
 Crisis definition & Long & Long & Long & Long & Short \\ \hline
\end{tabular}

    \caption{Marginal propensity to consume from pension benefits}
    \label{eld_disp_income_pension}
    \medskip 
    \begin{minipage}{0.96\textwidth} 
    {\footnotesize \textit{Notes:}  *, **, *** indicates significance at 90\%, 95\% and 99\% of confidence, respectively. Standard errors are inside [ ]. Estimation follows (\ref{second_stage_eq_mpc}). The dependent variable is the non-durable consumption in retirement. Controls include: age and year fixed-effects, gender, and pre-GFC pension saving. \par} 
    \end{minipage}
\end{table}

\begin{table}[h!]
    \centering
    \begin{tabular}{lccc} \hline
 & (1) & (2) & (3) \\
VARIABLES & Recipient & Recipient & Recipient  \\ \hline
 &  &  &    \\
Always receiver ($I_s=1$) & 0.581*** & 0.582*** & 0.563*** \\
 & [0.007] & [0.008] & [0.007]  \\
 &  &  &    \\
Observations & 46,851 & 46,851 & 22,151\\
R-squared & 0.632 & 0.693 & 0.602 \\
Controls & No & Yes & Yes \\
Bandwidth & 10\% & 10\% & 5\% \\
Clustered SE & DoB & DoB & DoB \\ \hline
\end{tabular}

    \caption{First-stage: Probability of receiving subsidy based on pre-subsidy Savings}
    \label{first_stage_receiver}
    \medskip 
    \begin{minipage}{\textwidth} 
    {\footnotesize \textit{Notes:}  *, **, *** indicates significance at 90\%, 95\% and 99\% of confidence, respectively. Standard errors are inside [ ]. Estimation follows equation (\ref{first_receiver}). The dependent variable is a dummy variable with value one for workers that receive subsidy at retirement. Always receiver ($I_s=1$) are those workers with pre-subsidy savings below lower bound ($\underline{a_i}$), while never receivers ($I_s=0$) are those with pre-subsidy savings above upper threshold ($\overline{a_i}$). Workers in between thresholds are not part in the sample. Bandwidth defines the distance of pre-subsidy savings to the threshold -- to lower for always receviers and to upper for always receiver-- that are used in the estimation. Only workers that are less than 6 years away to retirement at the subsidy introduction are used in the estimation. Controls are date of birth fixed effect; gender; and a 4th-degree polynomial of pre-retirement savings.
    \par} 
\end{minipage}
\end{table}

\begin{table}[h!]
    \centering
    \resizebox{12.2cm}{!}{
    \begin{tabular}{lccc} \hline
 & (1) & (2) & (3) \\
VARIABLES & log(earnings) & log(earnings) & log(earnings)  \\ \hline
 &  &  &    \\
 Post $\cdot$ Recipient$(=1)$& -0.0181*** & -0.0145*** & -0.0138***  \\
 & [0.002] & [0.002]  & [0.002]   \\
 &  &  &    \\
Observations & 281,106 &  281,106  & 140,553\\
R-squared & 0.053 & 0.182 & 0.192 \\
Controls & No & Yes & Yes \\
Bandwidth & 10\% & 10\% & 5\% \\
Clustered SE & DoB & DoB & DoB \\ \hline
\end{tabular}

  }
    \caption{Second-stage: pre-retirement earnings response to future subsidy}
    \label{second_receiver_table}
    \medskip 
    \begin{minipage}{\textwidth} 
    {\footnotesize \textit{Notes:}  *, **, *** indicates significance at 90\%, 95\% and 99\% of confidence, respectively. Standard errors are inside [ ]. Estimation follows equation (\ref{second_receiver}). Recipient comes from the first-stage estimated in Table \ref{first_stage_receiver}. Workers in between thresholds are not part in the sample. Bandwidth defines the distance of pre-subsidy savings to the threshold -- to lower for always receviers and to upper for always receiver-- that are used in the estimation. Only workers that are less than 6 years away to retirement at the subsidy introduction are used in the estimation. Controls are year, worker, and year fixed effects.    \par}  
    \end{minipage}
\end{table}

\begin{table}[h!]
    \centering
    \resizebox{12.2cm}{!}{
    \begin{tabular}{lcc} \hline
 & (1) & (2) \\
VARIABLES & Placebo lower & Placebo higher \\ \hline
 &  &    \\
Always receiver ($I_s=1$) & 0.000 & 0.000  \\
 & [0.007] & [0.008]   \\
 &  &     \\
Observations & 65,732 & 34,153 \\
R-squared & 0.321 & 0.345 \\
Placebo PMAS & 0.8$\cdot$ PMAS & 1.2$\cdot$ PMAS  \\
Controls & Yes & Yes  \\
Bandwidth & 10\% & 10\% \\
Clustered SE & DoB & DoB\\ \hline
\end{tabular}

}
    \caption{Placebo first-stage: Probability of receiving subsidy based on pre-subsidy savings}
    \label{first_stage_receiver_placebo}
    \medskip 
    \begin{minipage}{\textwidth} 
    {\footnotesize \textit{Notes:}  *, **, *** indicates significance at 90\%, 95\% and 99\% of confidence, respectively. Standard errors are inside [ ]. Estimation follows equation (\ref{first_receiver}). The dependent variable is a dummy variable with value one for workers that receive subsidy at retirement. Column (1) shows the first-stage for a placebo PMAS of 0.8 times the actual PMAS, while column (2) does it for a placebo PMAS of 1.2 the actual one. Always below ($I_s=1$) are those workers with pre-subsidy savings below lower bound ($\underline{a_i}$), while always above ($I_s=0$) are those with pre-subsidy savings above upper threshold ($\overline{a_i}$), where the bounds are defined using the placebo PMAS. Workers between thresholds are not part in the sample. Bandwidth defines the distance of pre-subsidy savings to the threshold -- to lower for always receviers and to upper for always receiver-- that are used in the estimation. Only workers that are less than 6 years away to retirement at the subsidy introduction are used in the estimation. Controls are date of birth fixed effect; gender; and a 4th-degree polynomial of pre-retirement savings.
    \par} 
\end{minipage}
\end{table}

\begin{table}[h!]
    \centering
    \resizebox{12.2cm}{!}{
    \begin{tabular}{lcc} \hline
 & (1) & (2) \\
VARIABLES & log(earnings) & log(earnings)  \\ \hline
 &  &     \\
 Post $\cdot$ Placebo Recipient & -0.001 & -0.003   \\
 & [0.002] & [0.008]    \\
 &  &     \\
Observations & 394,392 &  204,918  \\
R-squared &  0.152 & 0.142 \\
Controls & Yes & Yes \\
Bandwidth & 10\% & 10\% \\
Clustered SE & DoB & DoB \\ \hline
\end{tabular}

}
    \caption{Placebo second-stage: Probability of receiving subsidy based on pre-subsidy savings}
    \label{second_stage_receiver_placebo}
    \medskip 
 \begin{minipage}{\textwidth} 
    {\footnotesize \textit{Notes:}  *, **, *** indicates significance at 90\%, 95\% and 99\% of confidence, respectively. Standard errors are inside [ ]. Estimation follows equation (\ref{second_receiver}). Placebo recipient comes from the first-stage estimated in Table \ref{first_stage_receiver}, that is, I assume that worker below the lower bound have a 58\% larger probability of receiving the subsidy at retirement. The sample include workers with pre-subsidy earnings larger than $0.9 \cdot \underline{a}_i$ and smaller than  $1.1 \cdot \overline{a}_i$,  not in between thresholds, and are less than 6 years away to retirement age at the subsidy introduction. Controls are year, worker and age fixed effects.    \par}  
    \end{minipage}
\end{table}

\begin{table}[h!]
    \centering
    \begin{tabular}{lcccc} \hline
 & (1) & (2) & (3) & (4) \\
\hline
 &  &  &  &  \\
ETI & 0.421*** & 0.383*** & 0.383*** & 0.383** \\
 & [0.066] & [0.043] & [0.039] & [0.16] \\
 &  &  &  &  \\
Observations & 1,298,805 & 1,298,802 & 1,298,802 & 1,298,802 \\
R-squared & 0.012 & 0.572 & 0.572 & 0.572 \\
Switchers & No & No & No & No \\
Controls & No & Yes & Yes & Yes \\
PFA Trend & No & Yes & Yes & Yes \\
 Clustered SE & None & None & PFAxDate & Worker \\ \hline
\end{tabular}

    \caption{Elasticity of Taxable Income with respect to Net-of-Tax Rate.}
    \label{ETI_table}
    \medskip 
    \begin{minipage}{\textwidth} 
    {\footnotesize \textit{Notes:}  *, **, *** indicates significance at 90\%, 95\% and 99\% of confidence, respectively. Standard errors are inside [ ]. Estimation follows equation (\ref{dif_dif}). The dependent variable is the log of earnings. Controls are time and pension fixed effects, age, gender, earnings and participation before treatment, and pension savings before treatment. Workers that have ever switched between pension fund (7.8\% of the sample). \par} 
    \end{minipage}
\end{table}

\begin{table}[h!]
\centering
\begin{tabular}{ |p{5cm}||p{1.7cm}|p{2.7cm}|p{5.6cm} |} \hline  
 & Model's Notation & Value & Source \\
 \hline
  Taxable earnings elasticity to payroll tax  & \centering $\varepsilon_{(1-\tau)\overline{z}}$  & \centering 0.38 \ \ \ $[0.27, 0.49]$ & Reduction on Planvital's pension fund management fee in 2014  \\
   \hline
     Taxable earnings elasticity to benefits progressivity tax  & \centering $\varepsilon_{(1-\phi)\overline{z}}$  & \centering 0.22  \ \ \ \ \  $[0.10, 0.32]$ & Pension subsidy introduction in 2008  \\
   \hline
    Taxable earnings elasticity to pension benefits  & \centering $\varepsilon_{b\overline{z}}$  &  \centering 0.11 \ \ \ \ $[0.08, 0.14]$ & Pension savings' exposition to Global financial crisis returns  \\
    \hline
    Marginal propensity to consume from pension benefit  & \centering $\mu$  &  \centering 0.79 \ \ \ \ $[0.55, 1.2]$ & Pension savings' exposition to Global financial crisis returns  \\
   \hline
    Relative risk aversion  & \centering $\gamma$  & \centering $4$ & \cite{landais2021value} \\
   \hline
       Retirement consumption preferences  & \centering $\theta$  & \centering $\{0.62,0.81\}$  & \cite{battistin2009retirement} \\
   \hline
             Lifetime payroll earnings  & \centering $z_{i}$  &  & Admin data \\
   \hline
          Active period non-durable consumption   & \centering $c_{i1}$  &  & Survey data \\
   \hline
        Retired period non-durable consumption   & \centering $c_{i2}$  &  & Survey data \\
   \hline
           Pension investment return   & \centering $R$  &  \centering 1.042 & Historical 10-year geometric average return \\
   \hline
            Discount factor   & \centering $\delta$  & \centering $R^{-1}=0.96$   & Assumption \\
    \hline
            Hyperbolic discounting   & \centering $\beta$  & \centering $0.82$ \\ $[0.74, 0.9]$   & \cite{hyperbolic_meta} \\
   \hline
\end{tabular}
    
        \medskip 
    \begin{minipage}{0.96\textwidth} 
    {\footnotesize \textit{Notes:} 95\% of confidence interval is inside brackets ($[ \ ]$). \par} 
    \end{minipage}
    \caption{Model parametrization}
        \label{model_parameters}
\end{table}


\newpage

\FloatBarrier

\section*{Appendix}

\renewcommand{\thesection}{A}
\section{Proofs}

\begin{proof}[\textbf{Proof Lemma 1}]

In an interior solution with preferences separable between consumption and taxable earnings, the optimal taxable earning ($z$) is given by:

$$w'(z)+(1-\kappa-\tau)u'+\kappa R (1-\phi)v'=0$$

Differentiating with respect to $\tau, \kappa, \phi$ we get:

\begin{equation} \label{tax}
    \frac{\partial z}{\partial \tau}=\frac{z(1-\kappa-\tau)u''+u'-((1-\kappa-\tau)u''g'(\chi)+\kappa R (1-\phi)v'' h'(x))\frac{\partial \chi}{\partial \tau}}{w''(z)+(1-\kappa-\tau)^2 u''+\kappa^2 R^2 (1-\phi)^2 v''}
\end{equation}

\begin{equation} \label{contribution}
    \frac{\partial z}{\partial \kappa}=\frac{z(1-\kappa-\tau)u''+u'+\kappa R^2(1-\phi)^2 v'' z-R(1-\phi)v'-((1-\kappa-\tau)u''g'(\chi)+\kappa R (1-\phi)v'' h'(\chi))\frac{\partial \chi}{\partial \kappa}}{w''(z)+(1-\kappa-\tau)^2 u''+\kappa^2 R^2 (1-\phi)^2 v''} 
\end{equation}

\begin{equation} \label{progressivity}
    \frac{\partial z}{\partial (1-\phi)}=\frac{-\kappa^2 R^2(1-\phi) v'' z+\kappa Rv'-((1-\kappa-\tau)u''g'(\chi)+\kappa R (1-\phi)v'' h'(\chi))\frac{\partial \chi}{\partial (1-\phi)}}{w''(z)+(1-\kappa-\tau)^2 u''+\kappa^2 R^2 (1-\phi)^2 v''} 
\end{equation}

Conjecture: 

\begin{align}
    \frac{\partial z}{\partial \kappa}\kappa-\frac{\partial z}{\partial (1-\phi)}=\frac{\partial z}{\partial \tau}\kappa \notag \\
    \frac{\partial \chi}{\partial \kappa}\kappa-\frac{\partial \chi}{\partial (1-\phi)}=\frac{\partial \chi}{\partial \tau}\kappa  \notag
\end{align}

Using the conjecture in equations (\ref{tax}), (\ref{contribution}) and (\ref{progressivity}) I show that they are satisfied. 
\end{proof}

The government's problem first order conditions are:

\begin{align}
\label{foc_pensiontax}
    \frac{dW}{d\phi}= &\underbrace{\kappa R^P \int_i U_{c_{i2}} (z_i-\overline{z}) di}_\text{Inter-worker redistribution} - \underbrace{ \overline{ U_{c_{i2}}}\left(\kappa R^P+\frac{\tau}{\phi}\right)\frac{d \overline{z}}{d\phi} }_\text{Moral Hazard externality} \notag\\ 
    &\underbrace{+\int_i\left(U_{c_{i1}}-\widehat{U}_{c_{i1}}\right) (1-\kappa -\tau)\frac{\partial z_i}{\partial \phi}di +\int_i\left(U_{c_{i2}}-\widehat{U}_{c_{i2}}\right) (1-\phi)\kappa R \frac{\partial z_i}{\partial \phi}di}_\text{Labor supply internality} \notag \\ 
     &\underbrace{+\int_i\left(U_{c_{i1}}-\widehat{U}_{c_{i1}}\right) \frac{\partial g_i}{\partial \chi_i}\frac{\partial \chi_i}{\partial \phi}di +\int_i\left(U_{c_{i2}}-\widehat{U}_{c_{i2}}\right) \frac{\partial h_i}{\partial \chi_i}\frac{\partial \chi_i}{\partial \phi}di}_\text{Consumption timing internality}=0
\end{align}

\begin{align} \label{foc_contribution_na}
\frac{dW}{d\kappa}= &\underbrace{\int_i (U_{c_{i2}}R^P-U_{c_{i1}})z_i d\mu}_\text{Intra-worker redistribution} - \underbrace{R^P\phi \int_{i} U_{c_{i2}} (z_i-\overline{z}) d\mu}_\text{Inter-worker redistribution}-\underbrace{\overline{ U_{c_{2}}}\left(R^P \phi+\frac{\tau}{\kappa}\right)\left(\varepsilon_{(1-\kappa)\overline{z}}+\varepsilon_{b\overline{z}}\right) R^P \overline{z}}_\text{Moral hazard externality} \notag \\
&\underbrace{+\int_i\left(U_{c_{i1}}-\widehat{U}_{c_{i1}}\right) (1-\kappa -\tau)\frac{\partial z_i}{\partial \kappa}di +\int_i\left(U_{c_{i2}}-\widehat{U}_{c_{i2}}\right) (1-\phi)\kappa R \frac{\partial z_i}{\partial \kappa}di}_\text{Labor supply internality} \notag  \\ 
     &\underbrace{+\int_i\left(U_{c_{i1}}-\widehat{U}_{c_{i1}}\right) \frac{\partial g_i}{\partial \chi_i}\frac{\partial \chi_i}{\partial \kappa}di +\int_i\left(U_{c_{i2}}-\widehat{U}_{c_{i2}}\right) \frac{\partial h_i}{\partial \chi_i}\frac{\partial \chi_i}{\partial \kappa}di}_\text{Consumption timing internality}=0
\end{align}

where an overline variable indicates the population average.   \\ 

\noindent\textbf{Discussion-.} The optimality conditions have five elements. The first two elements are the social welfare gains of doing income redistribution across time, from active to retired periods, and across workers at retirement. These elements are captured by the intra-worker and inter-worker redistribution terms, respectively. The third elements is given by the externality on the fiscal budget that is generated by the behavioral response to the reforms. This is captured by of the moral hazard externality term. The last two elements are the cost on workers welfare (internalities) generated by the behavioral response to the reforms, given that workers choose taxable earnings and consumption allocation unoptimally. The first internality is generated by workers misoptimal taxable earnings generation, and the second one by the misoptimal consumption timing. These two internalities are captured by the labor supply internality and consumption timing internality terms, respectively. 

The intra-worker redistribution is an extension of \cite{baily1978some} and \cite{chetty2006general} social insurance formula, where the extension is to incorporate  heterogeneity on workers' productivity and ability to prepare for retirement. In the other hand, the inter-worker redistribution is a variation of the linear-income tax, which has been extensively studied by the literature.\footnote{\cite{sheshinski1972optimal}, \cite{atkinson1995public}, \cite{itsumi1974distributional}, \cite{stern1976specification}, \cite{dixit1977some}, \cite{helpman1978optimal}, \cite{deaton1983explicit}, and \cite{tuomala1985simplified}, among others.} The variation is that the redistribution of the taxed income is given back at the retirement period, therefore, the retirement preparedness and it's relation with taxable earnings plays a role. As larger is the relationship between retirement preparation and lifetime taxable earnings, larger is the inter-worker redistribution value of pensions. I show the equivalence between the inter-temporal and inter-workers redistribution of equations (\ref{foc_pensiontax}) and (\ref{foc_contribution_rate}) with the social insurance and optimal linear-income tax literature in the Appendix \ref{appendix_B}.

The taxable earning response to the reforms generates externalities, through the fiscal budget, and internalities, through worker welfare. These behavioral responses to the reforms are driven by two forces. First, the reforms have a direct effect on the marginal value of taxable earnings. An increase in the pension contribution reduces the net-of-tax rate while active and increases the effect of taxable earnings on future pension payment. Similarly, an increase in the benefits' progressivity, reduces the benefit-contribution link of the future pension payment. The second force is given by the income effect generated by the lump-sum transfer at retirement. An increase in pension contribution rate or in the benefits' progressivity rate ($\phi$), increases the lump-sum transfer at retirement. The welfare cost is driven by the overall response of taxable earnings to the reforms. 

In sum, the social desirability of an increase in pension progressivity is a trade-off between the inter-temporal and inter-worker income redistribution, and the externalities and internalities generated by the behavioral responses to reform.

\subsection{Framework to Data}

This subsection connects the elements of the model with moments estimable in the data. Behavioral responses and marginal value of consumption. 

I first show, in lemma 1, that the response of taxable earnings to an increase in the pension contribution rate can be decomposed on how taxable earnings respond to payroll taxes and the benefits progressivity. I have variation to identify these two elasticities, and therefore the response to pension contribution rate. 

\begin{lemma}
Let $m=\kappa(1-\phi)$ be the slope of the benefits and lifetime earnings function. Then, 
the behavioral response of earnings to a change in pension contribution rate ($\kappa$) is given by:
$$\frac{dz}{d \kappa}=\underbrace{\frac{\partial z}{\partial \tau}}_\text{$\Delta$ Net-of-tax rate}\underbrace{+\frac{\partial z}{\partial m}(1-\phi)}_\text{$\Delta$ Slope}+\underbrace{\frac{\partial z}{\partial b}\phi \overline{z}R}_\text{$\Delta$ Intercept}$$
and to a change in benefits' progressivity ($\phi$) is: 

$$\frac{dz}{d \phi}=\underbrace{-\frac{\partial z}{\partial m}\kappa}_\text{$\Delta$ Slope}+\underbrace{\frac{\partial z}{\partial b}\kappa \overline{z}R}_\text{$\Delta$ Intercept}$$
\end{lemma}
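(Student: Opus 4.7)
The plan is to exploit the fact that the policy parameters $(\kappa,\tau,\phi)$ enter the worker's problem only through three reduced-form objects, and then apply the chain rule. First, I would rewrite the budget constraints in the form
\[
c_1 = (1-\kappa-\tau)z + s + g(\chi),\qquad c_2 = mR\,z + b + f(s),
\]
where $m=\kappa(1-\phi)$ and $b=\phi\kappa\overline{z}R+\overline{E}$. In this form, the worker's first-order conditions (and hence her optimal $z$) depend on $(\kappa,\tau,\phi)$ only through the triple $(t,m,b)$ with $t=\kappa+\tau$. Two consequences follow: (i) holding $(m,b)$ fixed, $\partial z/\partial\tau=\partial z/\partial\kappa$, because $\tau$ and $\kappa$ appear symmetrically in $t$; and (ii) the total derivatives of $z$ with respect to $\kappa$ or $\phi$ can be obtained by differentiating $t,m,b$ with respect to the parameter and applying the chain rule.

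Second, I would compute
\[
\tfrac{\partial t}{\partial\kappa}=1,\ \ \tfrac{\partial m}{\partial\kappa}=1-\phi,\ \ \tfrac{\partial b}{\partial\kappa}=\phi\overline{z}R,\qquad
\tfrac{\partial t}{\partial\phi}=0,\ \ \tfrac{\partial m}{\partial\phi}=-\kappa,\ \ \tfrac{\partial b}{\partial\phi}=\kappa\overline{z}R,
\]
treating $\overline{z}$ as given from an individual worker's perspective. Applying the chain rule to $z=z(t,m,b)$ and identifying $\partial z/\partial t$ with $\partial z/\partial \tau$ yields
\[
\frac{dz}{d\kappa}=\frac{\partial z}{\partial \tau}+\frac{\partial z}{\partial m}(1-\phi)+\frac{\partial z}{\partial b}\phi\overline{z}R,\qquad
\frac{dz}{d\phi}=-\frac{\partial z}{\partial m}\kappa+\frac{\partial z}{\partial b}\kappa\overline{z}R,
\]
which are the two identities stated in the lemma.

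The only subtle step is that the intercept $b$ depends on the population average $\overline{z}$, which itself responds to the reform. Because every worker is atomistic, she takes $\overline{z}$ as exogenous when choosing $z$, so $\partial b/\partial\kappa=\phi\overline{z}R$ and $\partial b/\partial\phi=\kappa\overline{z}R$ in the decomposition of an \emph{individual} response; the equilibrium feedback through $\overline{z}$ is absorbed in the welfare formulas (\ref{foc_progressivity})–(\ref{foc_contribution_rate}) rather than in this lemma. The main obstacle is therefore conceptual rather than computational: one must recognize the parsimonious $(t,m,b)$ parameterization of the worker's problem so that the three elasticities estimated from payroll-tax, benefit-link, and future-benefit variation in Sections \ref{pension_benefit_estimation}–\ref{payroll_tax_estimation} map cleanly onto the behavioral response to $\kappa$ and $\phi$.
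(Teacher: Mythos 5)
Your proof is correct, and it reaches the lemma by a genuinely cleaner route than the paper. The paper's own proof works directly from the worker's interior first-order condition $w'(z)+(1-\kappa-\tau)u'+\kappa R(1-\phi)v'=0$, implicitly differentiates it to obtain explicit closed-form expressions for $\partial z/\partial\tau$, $\partial z/\partial\kappa$ and $\partial z/\partial(1-\phi)$, and then verifies by substitution a conjectured linear identity among these derivatives (together with a parallel identity for the auxiliary choice $\chi$). You instead observe upfront that the policy parameters enter the worker's problem only through the triple $(t,m,b)$ with $t=\kappa+\tau$, $m=\kappa(1-\phi)$ and $b=\phi\kappa\overline{z}R+\overline{E}$, and obtain both displayed identities by the chain rule after computing $\partial t/\partial\kappa$, $\partial m/\partial\kappa$, $\partial b/\partial\kappa$ and their $\phi$-counterparts. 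The two arguments rest on the same structural fact, but yours states it explicitly rather than discovering it computationally; it is also more general, since it does not require separability, an interior solution, or the explicit form of the first-order condition, and it covers the unobserved choice $\chi$ automatically because every choice variable inherits the $(t,m,b)$ dependence. What the paper's computation buys in exchange is the explicit formulas for the individual partial derivatives, which it reuses when constructing the elasticities. Your closing remark---that an atomistic worker takes $\overline{z}$ as given so that $\partial b/\partial\kappa=\phi\overline{z}R$ and $\partial b/\partial\phi=\kappa\overline{z}R$, with the equilibrium feedback through $\overline{z}$ deferred to the welfare formulas---correctly disposes of the one genuinely delicate point in the decomposition.
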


\begin{proof}
In the appendix.
\end{proof}

The intuition of this result is on Figure \ref{graph_reform}. A change in benefits' progressivity has two effects: it changes the lump-sum transfer $b$ (intercept in Figure \ref{graph_reform} panel (a)) and the relationship between lifetime earnings and benefit $m$ (slope in Figure \ref{graph_reform} panel (a)). Therefore, the response to this reforms depends on how earnings respond to these two changes and the magnitude of the changes generated by the reform. On the other hand, a change in pension contribution rate also changes the lump-sum transfer and the slope of benefit-earnings relationship, although in different magnitude than benefits' progressivtiy. Additionally, a change in the pension contribution rate also generates a change in the net-of-tax rate when worker is active, which is is equivalent to a change in the payroll tax.





I make two assumption over workers preferences in order to simplify the optimality conditions. I assume that workers preferences are separable, and that the behavioral bias has the particular form of present-focused bias. With this assumption I can build a lower bound for the social gains of reforming pension contribution rate and benefits' progressivity.  

\begin{assumption}[\textbf{Separable preferences}]
Preferences are separable:
$\frac{\partial^2 U}{\partial k \partial l}=0 \ \ for \ k,l \in \{c_1,c_2,z,\chi\} \ and \  k\neq l$
\end{assumption}

This assumption is standard in the social insurance literature. 

\begin{assumption}
Workers have present-bias preferences with respect to consumption. While they are active, they discount retired period consumption by the factor $\delta$: $$\frac{\partial \widehat{U}}{\partial c_2}= \beta \frac{\partial U}{\partial c_2}$$
where $\widehat{U}$ is worker perceived preferences and $U$ is central planner preference.
\end{assumption}

These two expressions are

Finally, I make two assumptions over workers' valuation of consumption. These assumption are standard in the social insurance literature and are aim to simplify the connection between the marginal utility of consumption and consumption data. 

I do this by doing a Taylor expansion around retired consumption.
\begin{lemma}
Let $\overline{c}$ be the consumption such that $U_{c_2}(\overline{c})=\int_i U_{c_2}(c_{i2}) di$. Then:
\begin{align*}
    &Cov\left[\frac{d_i}{\overline{U_{c_2}}},zi \right] \approx Cov\left[\gamma \frac{(\theta c_{i1}-c_{i2})}{\overline{c}}, z_i \right]\\
    &Cov\left[\frac{U_{c_2}(c_{i2})}{\overline{U_{c_2}}},z_i\right] \approx Cov\left[\gamma \frac{c_{i2}}{\overline{c}}, z_i \right]= Cov\left[\gamma \frac{c_{i1}}{\overline{c}}, z_i \right] +  Cov\left[\gamma \frac{c_{i2}-c_{i1}}{\overline{c}}, z_i \right]
\end{align*}
\end{lemma}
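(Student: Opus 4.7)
My plan is to reduce both covariance expressions to first-order Taylor expansions of the CRRA marginal utilities around the reference $\overline{c}$, and then exploit the fact that additive constants (in $i$) drop out of $Cov[\cdot,z_i]$. The argument is conceptually identical to the Taylor expansion used in the main-text Lemma 1; the only new inputs are that $\overline{c}$ here is defined through $U_{c_2}(\overline{c})=\overline{U_{c_2}}$ rather than through $U_{c_1}$, and that we ultimately care about covariances rather than the pointwise approximation.

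First I would handle $Cov[d_i/\overline{U_{c_2}},z_i]$. Starting from Definition \ref{def_retirement_prepa}, $d_i = U_{c_1}(c_{i1}) - R\,U_{c_2}(c_{i2})$, I apply Assumption 4 (state dependence, $\beta R = 1$) to rewrite $d_i = U_{c_1}(c_{i1}) - \theta\,U_{c_1}(c_{i2})$, and then Assumption 5 (CRRA) to express this as $c_{i1}^{-\gamma} - \theta c_{i2}^{-\gamma}$. A first-order Taylor expansion of $c^{-\gamma}$ around $\overline{c}$ yields
\[
d_i \;\approx\; (1-\theta)\,\overline{c}^{-\gamma} \;-\; \gamma\,\overline{c}^{-\gamma-1}\bigl(c_{i1} - \theta c_{i2}\bigr) \;+\; \text{const.\ in } i.
\]
Dividing by the scalar $\overline{U_{c_2}} = \beta\theta\,\overline{c}^{-\gamma}$ produces a constant plus a term linear in $(c_{i1} - \theta c_{i2})/\overline{c}$, and taking $Cov[\cdot,z_i]$ kills the constant, leaving exactly the claimed expression (up to the multiplicative normalization $\beta\theta$ absorbed into the approximation).

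Second, for $Cov[U_{c_2}(c_{i2})/\overline{U_{c_2}},z_i]$, I apply the same first-order expansion directly to $U_{c_2}(c_{i2}) = \beta\theta c_{i2}^{-\gamma}$ around $\overline{c}$, obtaining $U_{c_2}(c_{i2})/\overline{U_{c_2}} \approx (1+\gamma) - \gamma\, c_{i2}/\overline{c}$. The constant drops from the covariance, yielding the linear-in-$c_{i2}$ expression on the right. The subsequent equality on the same line is a one-line linearity identity: write $c_{i2} = c_{i1} + (c_{i2} - c_{i1})$ and use linearity of $Cov[\cdot,z_i]$ in its first argument.

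The main obstacle will be careful bookkeeping rather than any deep step: tracking signs (since $c\mapsto c^{-\gamma}$ is decreasing, the Taylor coefficient carries a minus) and the scalar $\beta\theta$ arising when dividing by $\overline{U_{c_2}}$, and ensuring the reference consumption $\overline{c}$ used here is treated consistently with the one used in the main-text Lemma 1 (which is instead anchored on $U_{c_1}$). I would fix a single sign convention at the outset, matching the main text, and keep every multiplicative constant outside the covariances so that they cancel cleanly and only the $(\theta c_{i1}-c_{i2})/\overline{c}$ and $c_{i2}/\overline{c}$ pieces remain.
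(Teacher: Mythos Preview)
Your approach is correct and is essentially the same Taylor-expansion argument the paper uses: linearize the CRRA marginal utility, invoke state dependence to replace $R\,U_{c_2}$ by $\theta\,U_{c_1}$, and let additive constants disappear inside the covariance. The only difference is the expansion point: the paper's appendix proof expands $U_{c_2}(c_{i2})$ around the worker's own $c_{i1}$ (yielding the main-text Lemma~1 form with $c_{i1}$ in the denominator) and then substitutes into the FOCs, whereas you expand both $c_{i1}^{-\gamma}$ and $c_{i2}^{-\gamma}$ around the common $\overline{c}$. For the present statement, which has $\overline{c}$ in the denominator, your choice is actually the more direct route---it lands on the stated expressions in one step rather than requiring a further approximation of $U_{c_1}(c_{i1})$ around $\overline{c}$. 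Your caveat about the sign and the $\beta\theta$ normalization is apt; the paper is itself loose on both (note the sign slip in the first displayed Taylor line of its proof), so fixing a convention and carrying the scalar outside the covariance, as you propose, is exactly the right bookkeeping discipline.
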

\begin{proof}
In the appendix.
\end{proof}

This lemma captures that reforms' transfer receiver are determined by the lifetime earnings. With an increase in pension contribution rate, workers are forced to save for retirement. The amount forced to save is given by payroll earnings, while the welfare value of that transfer is given by the distance to the Euler equation. Therefore, the inter-temporal value is inversely proportional to $Cov[]$. If low income workers are those that further away from the Euler equation, then forced savings will be bringing income to retirement to those that need it less. In the other hand, the amount transfer between workers is given by the distance to mean earnings and the value of that transfer is given by the marginal utility of consumption at retirement. The retirement consumption can be mechanically decomposed in consumption while active and the consumption drop at retirement. Therefore, the inter-worker value is directly proportional to the covariance between consumption drop and lifetime earnings.

\renewcommand{\thesection}{B}

\section{Decomposition of reforms' redistribution value} \label{appendix_B}

\noindent \textbf{Inter-temporal.-} The inter-temporal redistribution value is an extension of the \cite{baily1978some}-\cite{chetty2006general} formula. To see that, let's start with the case where there are no heterogeneity between workers. Then, the social gains of increasing pension contribution rate are given by:

\begin{align*}
   \frac{dW}{d\kappa} = U_{c_2}\frac{(U_{c_2}-U_{c_1})}{U_{c_2}} z -U_{c_2}\frac{\tau}{\kappa}(\varepsilon_{(1-\kappa)z})z R
\end{align*}

Expressing it as retirement money-metric, and using assumptions 1-4 in a second degree Taylor approxiamtion we get: 

\begin{align*}
   \frac{dW}{d\kappa} \approx \left(\gamma \frac{\Delta c}{c_2}+\theta\right) z+  \frac{\tau}{\kappa}\varepsilon_{(1-\kappa)z} z R
\end{align*}

which is equivalent to the formula under state-dependent utility case in \cite{chetty2013social}.

By adding heterogeneity on productivity, we extend this formula and get:

\begin{align*}
   \frac{dW}{d\kappa} \approx \left(\gamma \frac{\Delta c}{c_2}+\theta\right) \int_i \frac{U_{c_{i2}}}{\overline{U_{c_2}}} z_i di+  \frac{\tau}{\kappa}\varepsilon_{(1-\kappa)z} z R
\end{align*}

The intuition of the inter-temporal redistribution is the following. There is a positive value of doing inter-temporal redistribution if workers are not preparing well for retirement, i.e., $\gamma \frac{\Delta c}{c_2}+\theta>0$. The total value of improving retirement preparation is the sum of each worker's social value and the amount in which the preparation increases. The former is given by their consumption at retirement relative to the average, while the latter is given by the taxable earnings, which define the amount the worker is forced to save. 

By adding heterogeneity on the preparation for retirement, we get:

\begin{align*}
   \frac{dW}{d\kappa} \approx  \int_i  \left(\gamma \frac{\Delta c_i}{c_{i2}}+\theta\right)\frac{U_{c_{i2}}}{\overline{U_{c_2}}} z_i di+  \frac{\tau}{\kappa}\varepsilon_{(1-\kappa)z} z R
\end{align*}

Now, there is one extra component to the the inter-temporal redistribution value. The value depends on how well the worker that receive the transfer is prepared for retirement, the amount of the transfer that he receives, and the social value of that worker.\\ 

\noindent \textbf{Inter-worker.-} The inter-worker redistribution is equivalent to that of an linear-income tax with the difference that the redistribution of the taxed amount is done at retirement.  This difference afects the social value of the inter-worker redistribution by adding retirement preparation to it. To see this, let's start with the case where workers are perfectly prepared for retirement. Then, the social gains of increasing benefits' progressivity are given by:  

\begin{align*}
    \frac{dW}{d\phi} = \kappa \int_i  \mu U_{c_{i1}} (z_i-\overline{z}) d\mu -  \overline{g U_{c_{i2}}}(\kappa \phi+\tau)\left(\varepsilon_{(1-\phi)\overline{z}}-\varepsilon_{b\overline{z}} \right) \overline{z} \\ = \kappa COV(U_{c_1}(c_{i1}),z_i)+ \overline{g U_{c_{i2}}}(\kappa \phi+\tau)\left(\varepsilon_{(1-\phi)\overline{z}}-\varepsilon_{b\overline{z}} \right) \overline{z}
\end{align*}

which is equivalent to \cite{sheshinski1972optimal}. The transfer generated by an increase in $\phi$ is decreasing on active life earnings, becoming negative at the average earnings ($\overline{z}$). The social value of the transfer receiver is given by the marginal utility of his consumption, given by $U_{c_{1}}(c_{i1})$. Therefore, the value of the inter-worker redistribution is proportional to $-COV(U_{c_{1}}(c_{i1}), z_i)$. 

By adding unoptimal preparation for retirement, we get:

\begin{align*}
    \frac{dW}{d\phi} = \kappa \int_i \mu U_{c_{i1}}\theta \left(1+\gamma\frac{(c_{i1}-c_{i2})}{c_{i1}}\right) (z_i-\overline{z}) d\mu -  \overline{g U_{c_{i2}}}(\kappa \phi+\tau)\left(\varepsilon_{(1-\phi)\overline{z}}-\varepsilon_{b\overline{z}} \right) \overline{z} 
\end{align*}

When we introduce unoptimal preparation for retirement, there is an extra component for the value of inter-worker redistribution has an extra component: retirement preparedness. The transfer is given at retirement, therefore the social value is larger for workers that are less prepare for retirement  the transfer increases more for those worker

\begin{proof}[\textbf{Proof of Lemma 1}]
Doing a Taylor expansion of $\partial U/ \partial c_2$ around active consumption we get:

\begin{align}
    \frac{\partial U}{\partial c_2} (c_2) \approx&  \frac{\partial U}{\partial c_2}(c_1) -  \frac{\partial^2 U}{\partial c_2^2} (c_1)[c_2-c_1] \notag \\
    =& \frac{\partial U}{\partial c_2}(c_1) \left( 1- \frac{c_1 \frac{\partial^2 U}{\partial c_2^2}}{ c_1 \frac{ \partial U}{\partial c_2}} [c_2-c_1] \right) \notag \\
    =& \frac{\partial U}{\partial c_2} \left( 1- \gamma(c_1) \frac{[c_2-c_1]}{c_1} \right) \label{taylor_expansion_c1}
\end{align}

where $\gamma$ is the relative risk aversion. 

Under state dependent preferences:

\begin{align}
    \frac{\partial U}{\partial c_2} (c_2) \approx \beta \theta \frac{\partial U}{\partial c_1r}(c_1) \left( 1 -\gamma(c_1) \frac{[c_2-c_1]}{c_1} \right) \label{taylor_expansion_c1}
\end{align}

Replacing (\ref{taylor_expansion_c1}) in the first order conditions (\ref{foc_pensiontax}) and (\ref{foc_contribution}), and using the assumption that $\beta=R^{P-1}$ we get the expression of lemma 2. 

\end{proof}

\renewcommand{\thesection}{C}
\setcounter{figure}{0}
\renewcommand\thefigure{C.\arabic{figure}}
\section{Retirement preparation}
\label{appendix_c}

In this appendix, I use the survey to study the mechanism behind the consumption drop at retirement and its relationship with lifetime earnings. I find that low-income workers are income drops more at retirement, that they have less savings and depend more on government support, and they are more likely to retire because an unexpected health or unemployment shock.  

\subsection{Figures}

\begin{figure}[h!]
    \centering
    \begin{subfigure}[b]{\textwidth}
        \centering
        \includegraphics[width=0.35\linewidth]{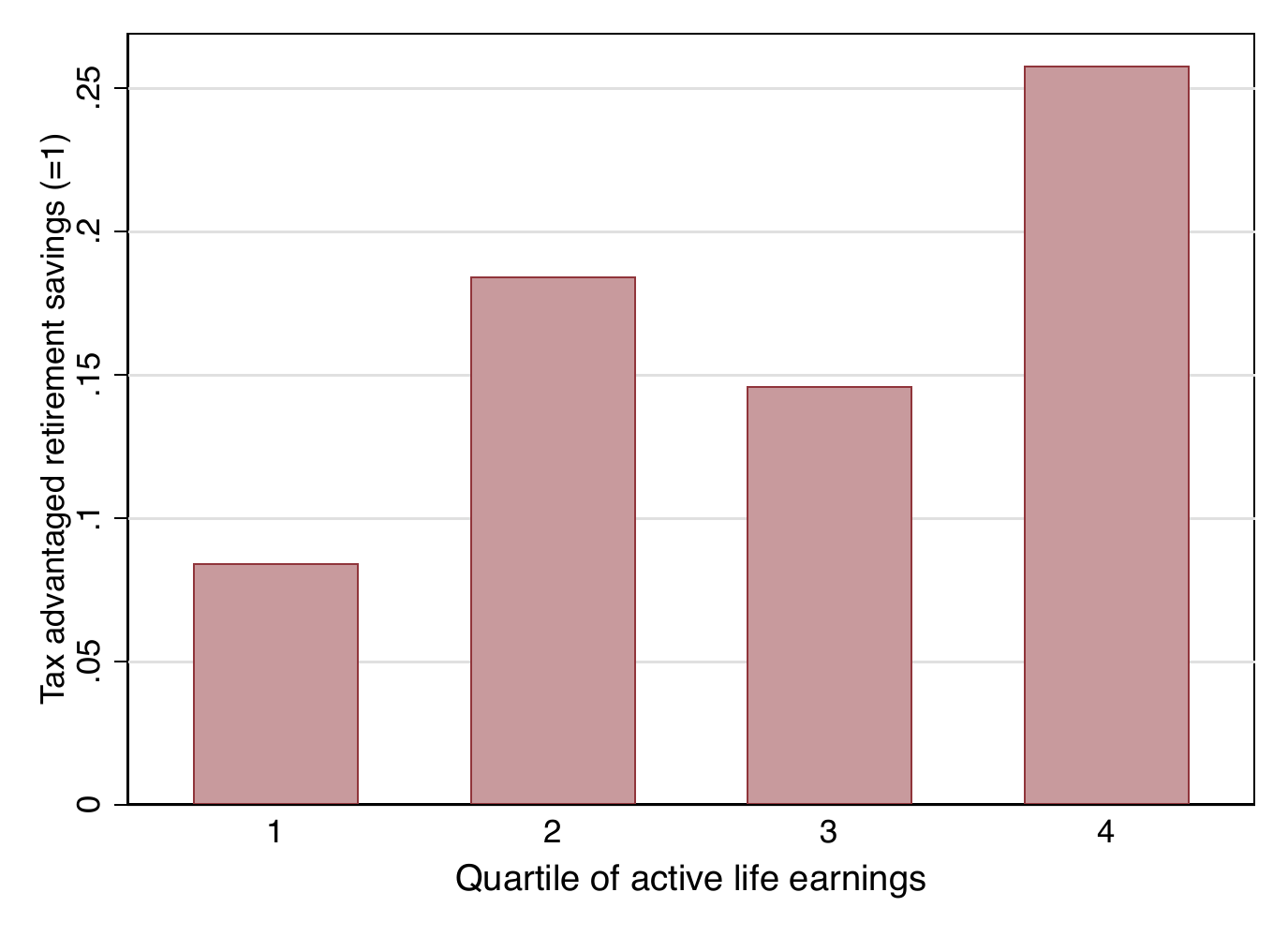}%
        \hfill
        \includegraphics[width=0.35\linewidth]{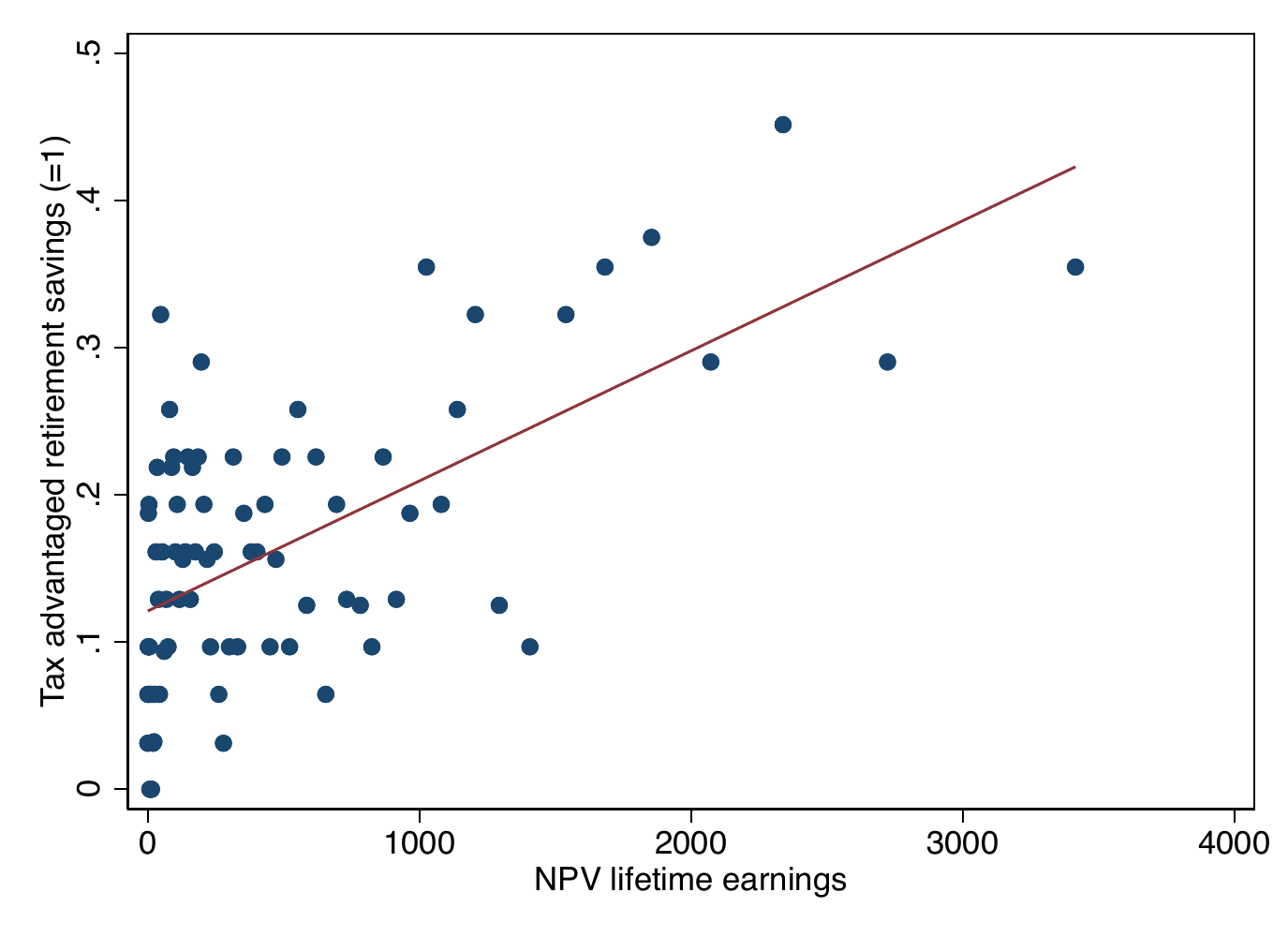}
        \caption{Tax advantaged savings for retirement}
    \end{subfigure}
    \vskip\baselineskip
    \begin{subfigure}[b]{\textwidth}
        \centering
        \includegraphics[width=0.35\linewidth]{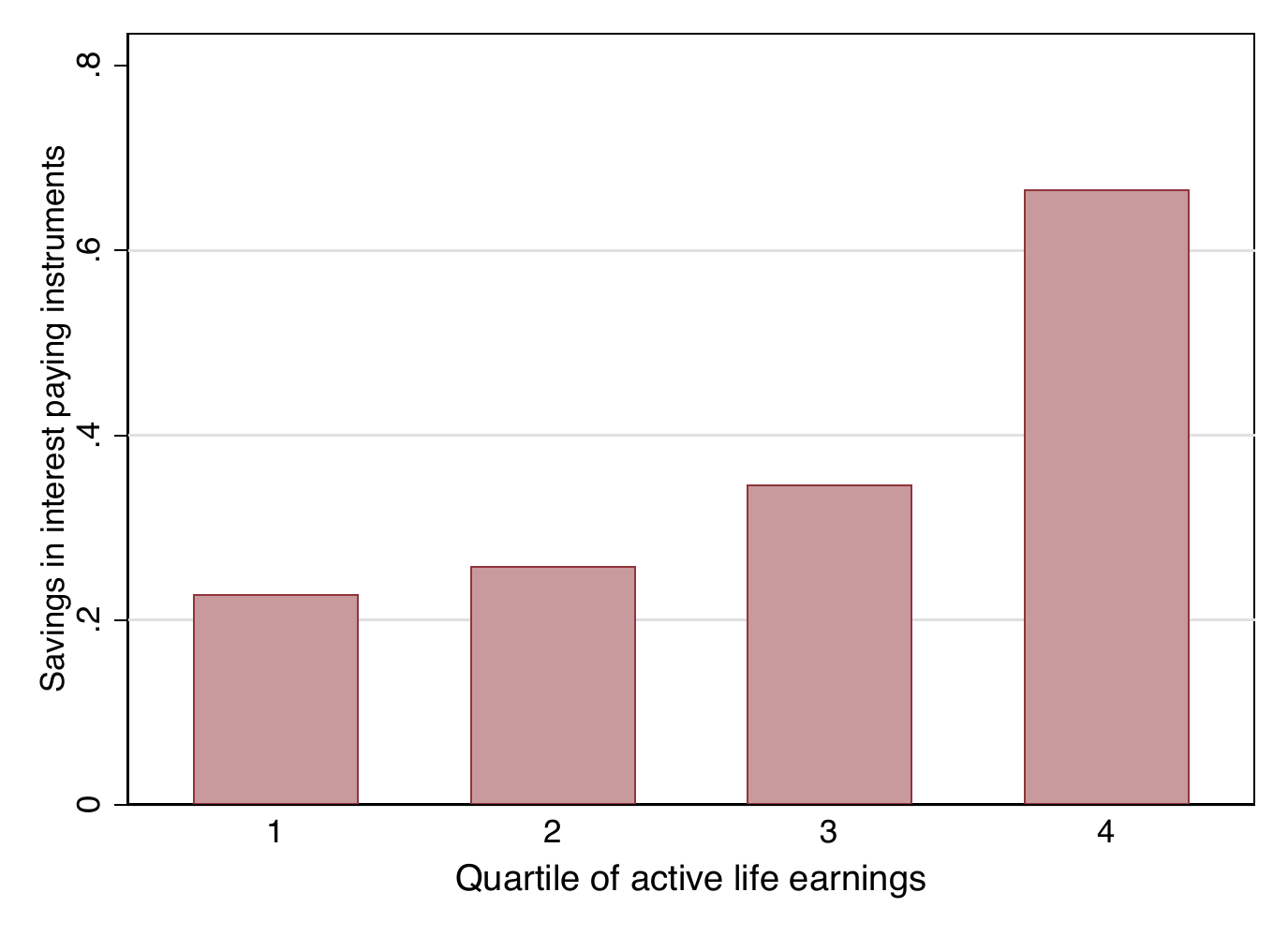}%
        \hfill
        \includegraphics[width=0.35\linewidth]{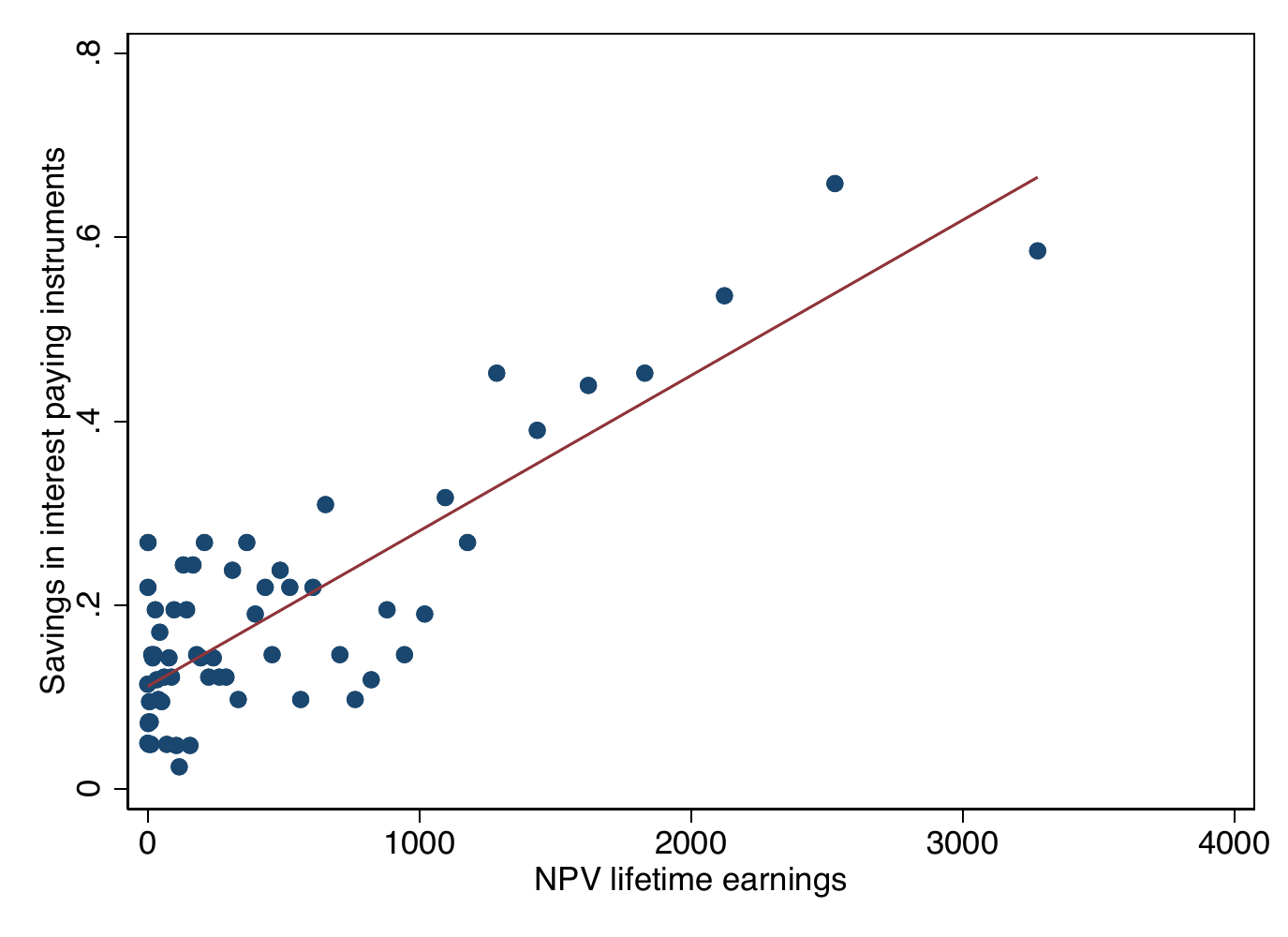}
        \caption{Reports access to interrest payer instruments}
    \end{subfigure}
        \vskip\baselineskip
    \begin{subfigure}[b]{\textwidth}
        \centering
        \includegraphics[width=0.35\linewidth]{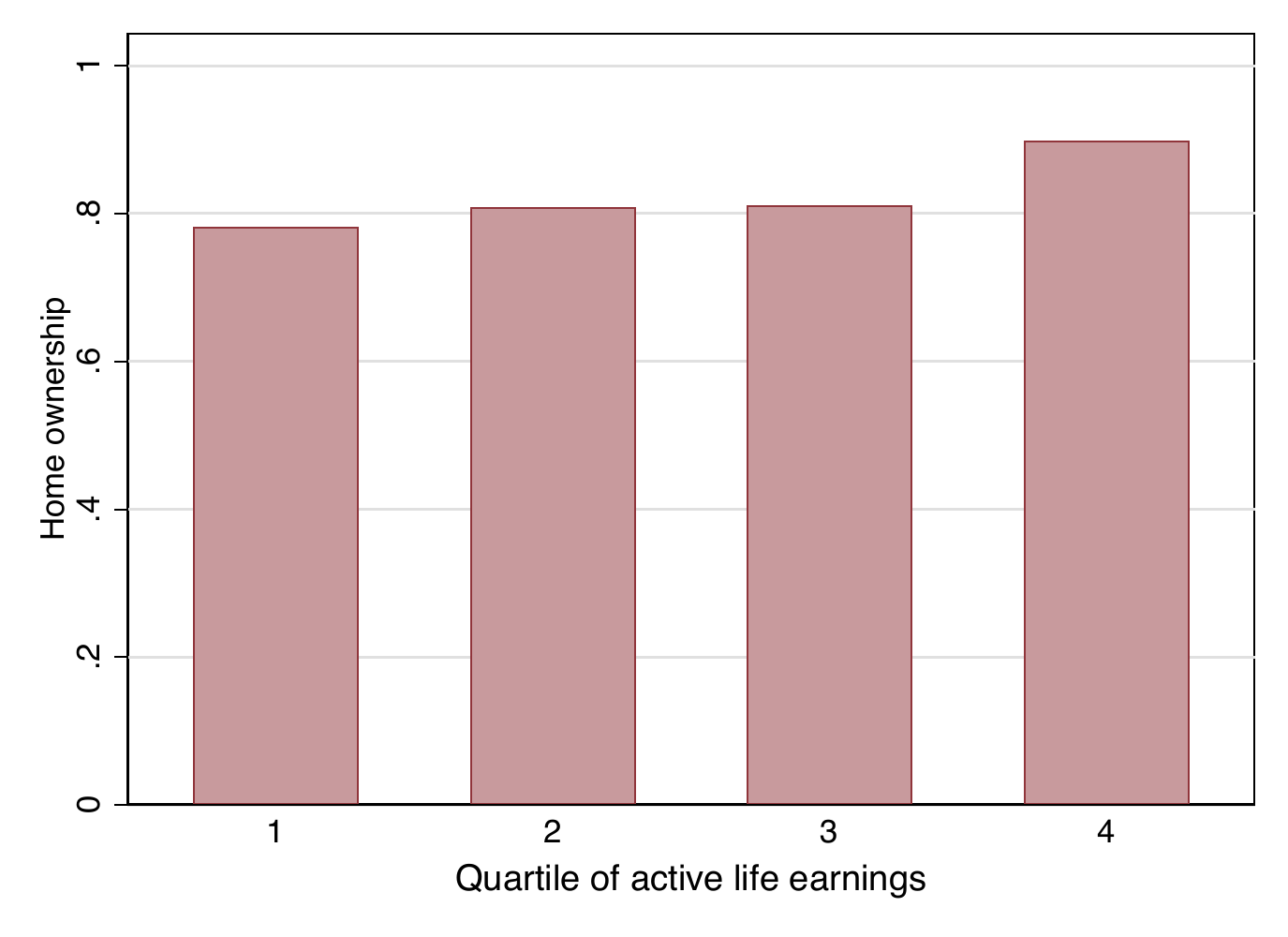}%
        \hfill
        \includegraphics[width=0.35\linewidth]{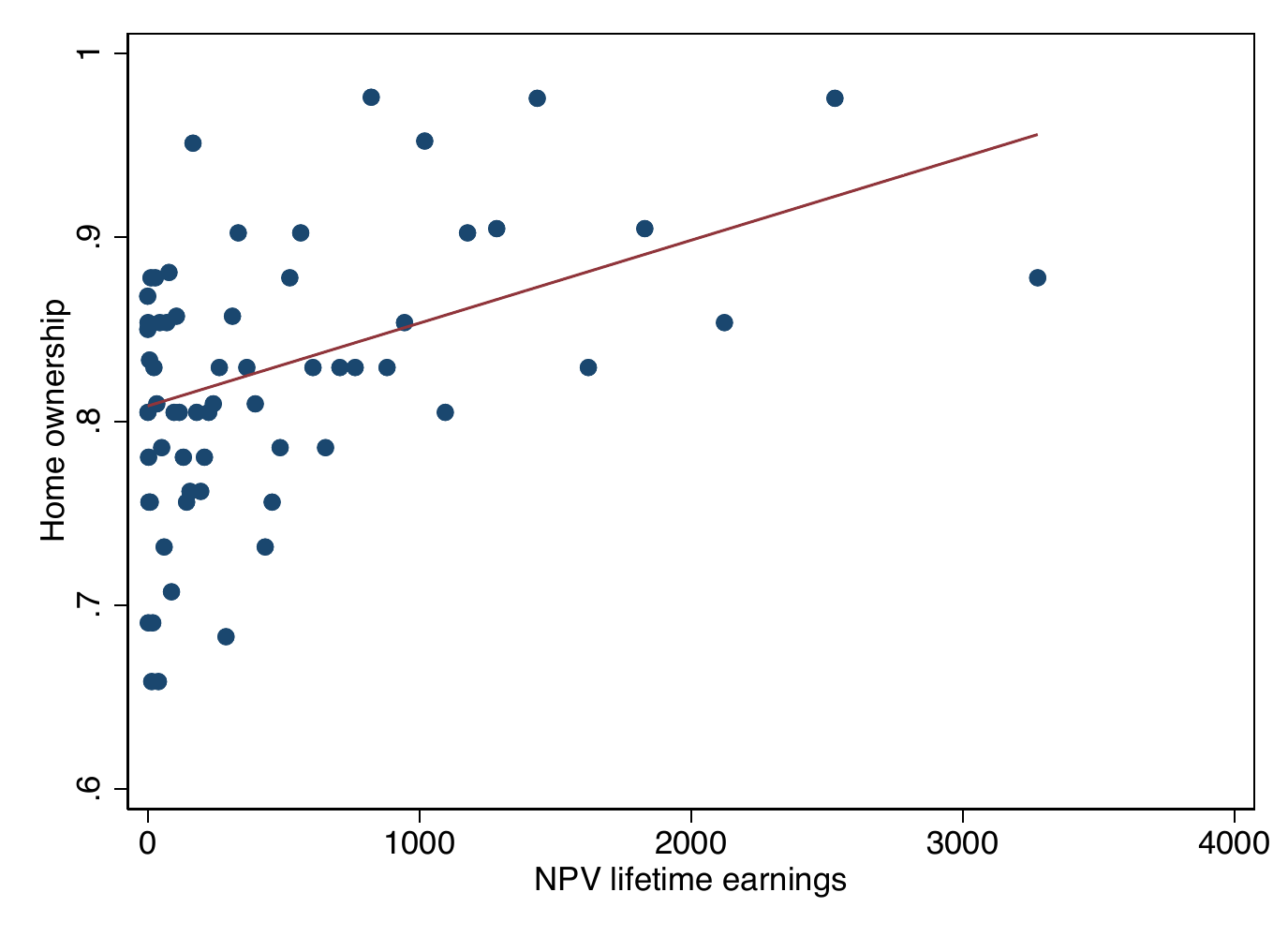}
        \caption{House ownership}
    \end{subfigure}
    \caption{Savings for retirement}
                \label{first_ret_preparedness}
\end{figure}

\begin{figure}[htb]
    \centering
    \begin{subfigure}[b]{\textwidth}
        \centering
        \includegraphics[width=0.475\linewidth]{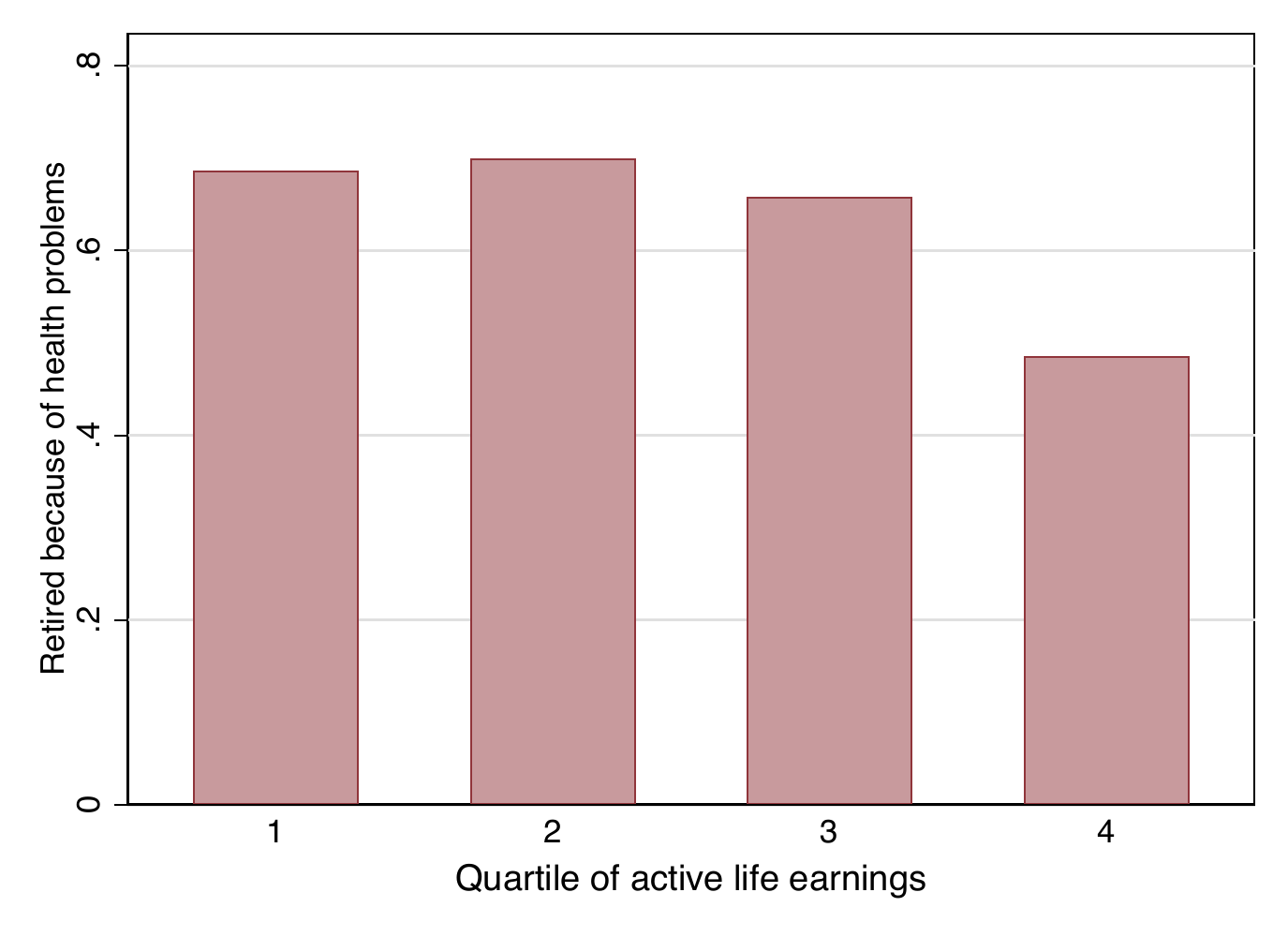}%
        \hfill
        \includegraphics[width=0.475\linewidth]{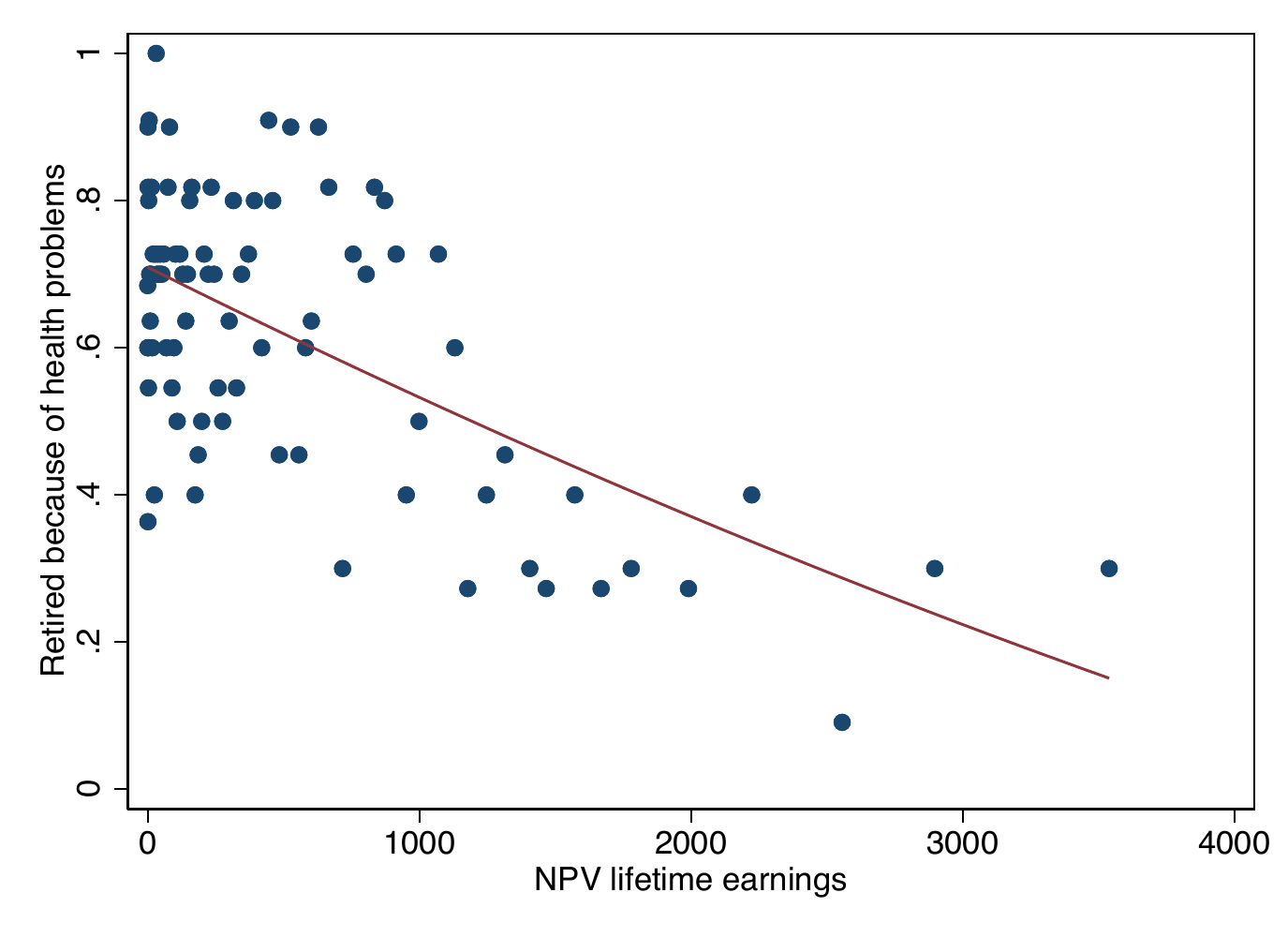}
        \caption{Stopped working because of health problems}
    \end{subfigure}
    \vskip\baselineskip
    \begin{subfigure}[b]{\textwidth}
        \centering
        \includegraphics[width=0.475\linewidth]{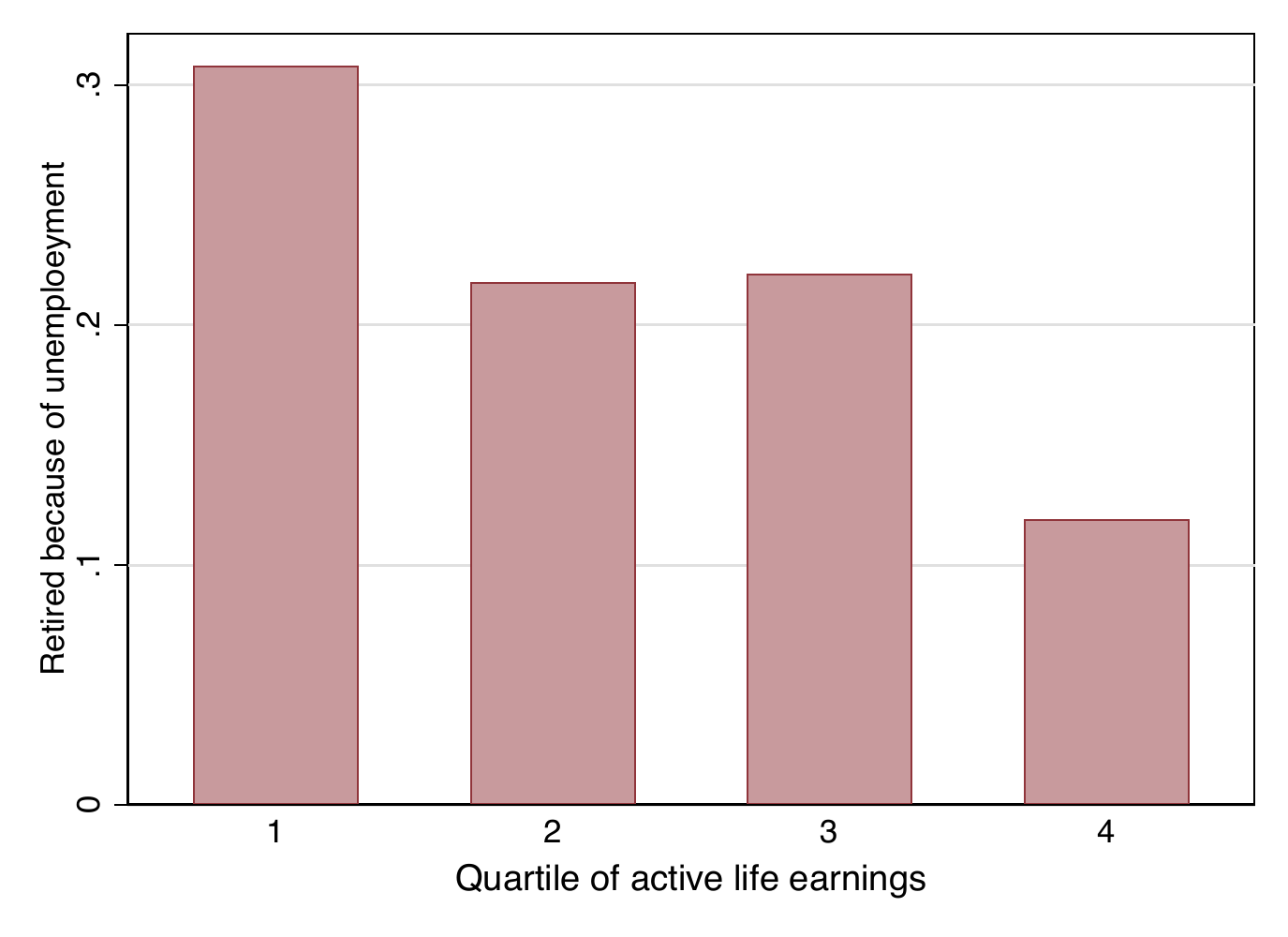}%
        \hfill
        \includegraphics[width=0.475\linewidth]{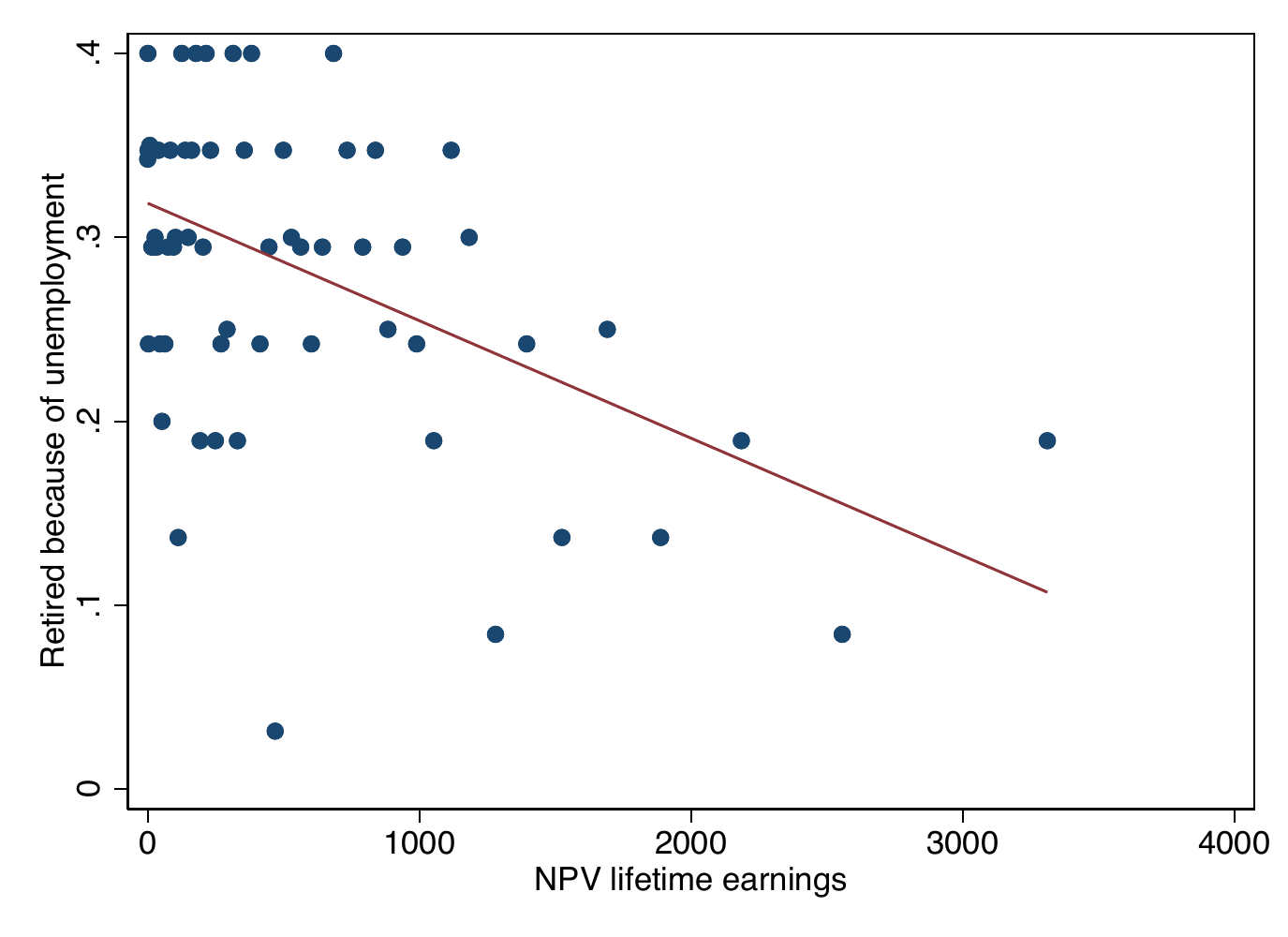}
        \caption{Stopped working because of end of labor relation}
    \end{subfigure}
    \caption{Reason for retirement}
\end{figure}

\begin{figure}[htb]
    \centering
    \begin{subfigure}[b]{\textwidth}
        \centering
        \includegraphics[width=0.475\linewidth]{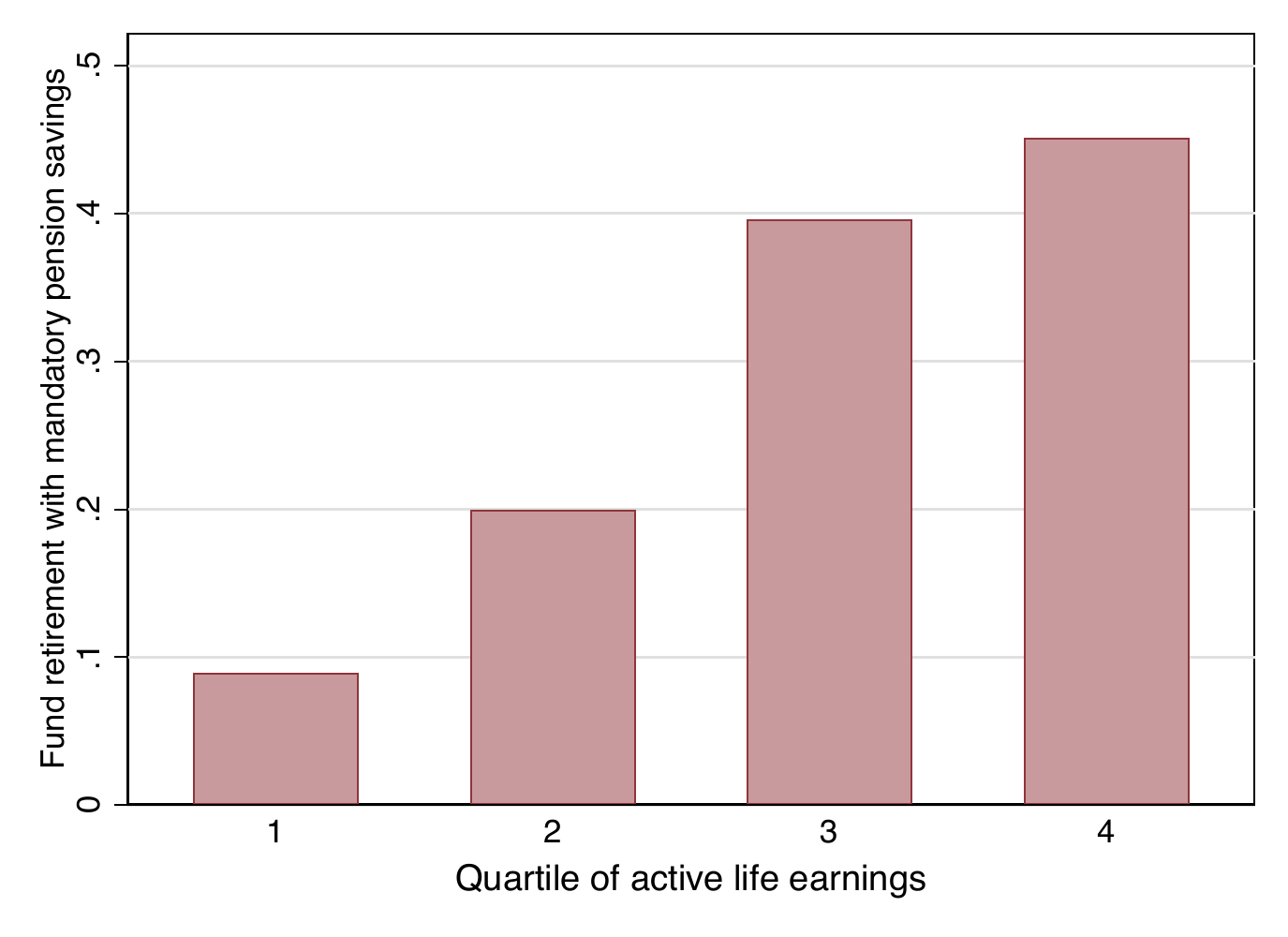}%
        \hfill
        \includegraphics[width=0.475\linewidth]{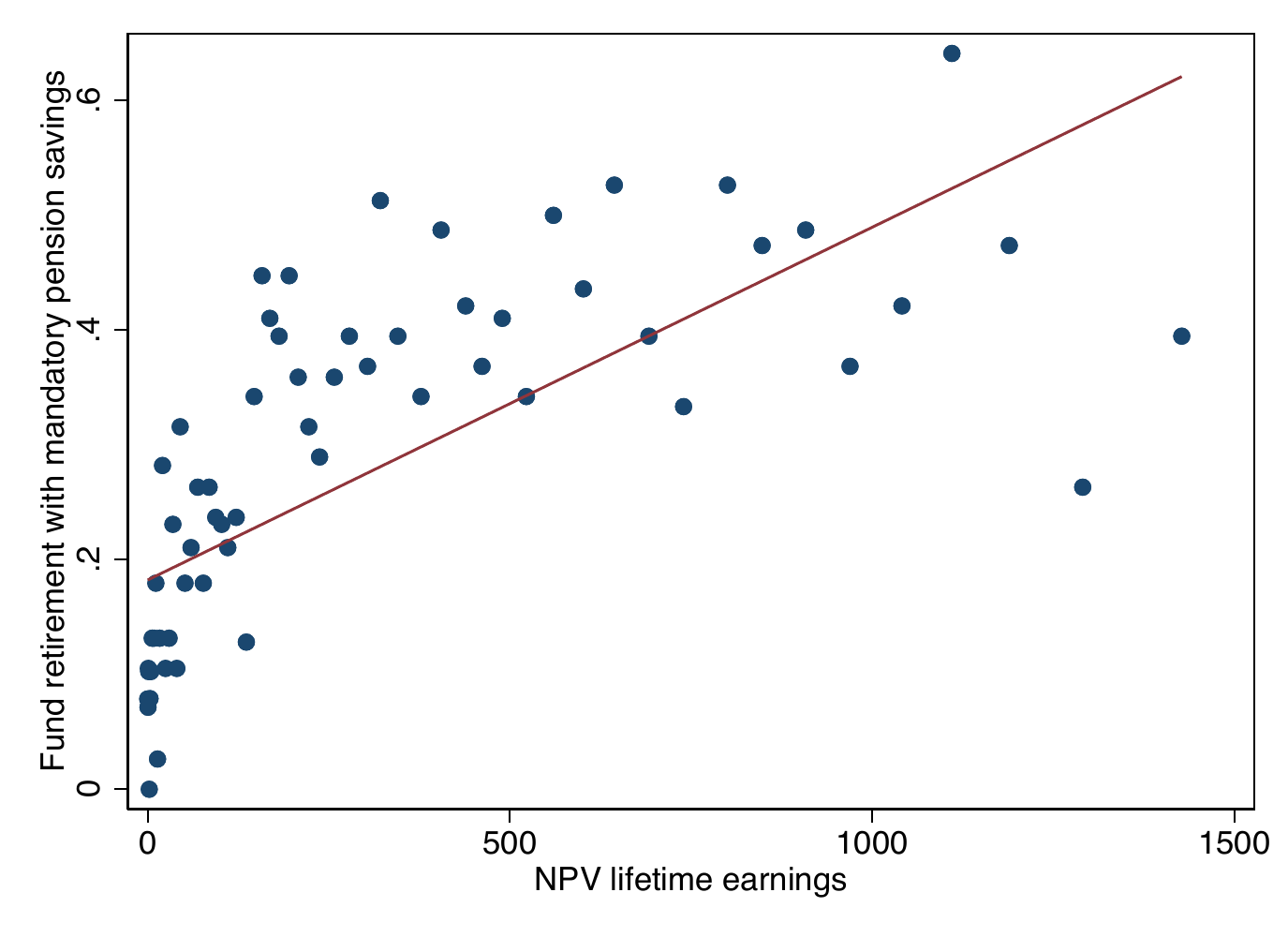}
        \caption{Fund retirement with mandatory pension savings}
    \end{subfigure}
    \vskip\baselineskip
    \begin{subfigure}[b]{\textwidth}
        \centering
        \includegraphics[width=0.475\linewidth]{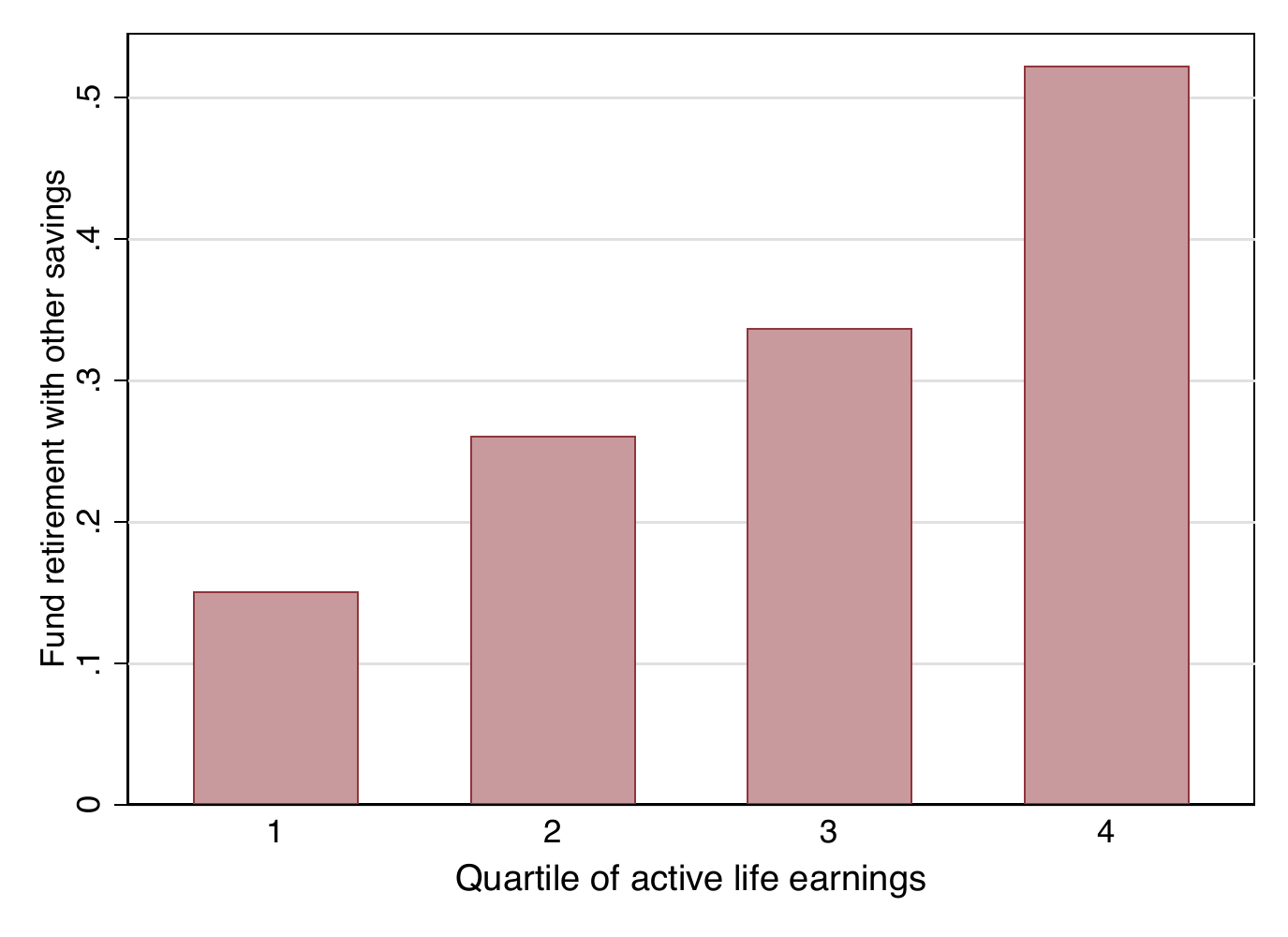}%
        \hfill
        \includegraphics[width=0.475\linewidth]{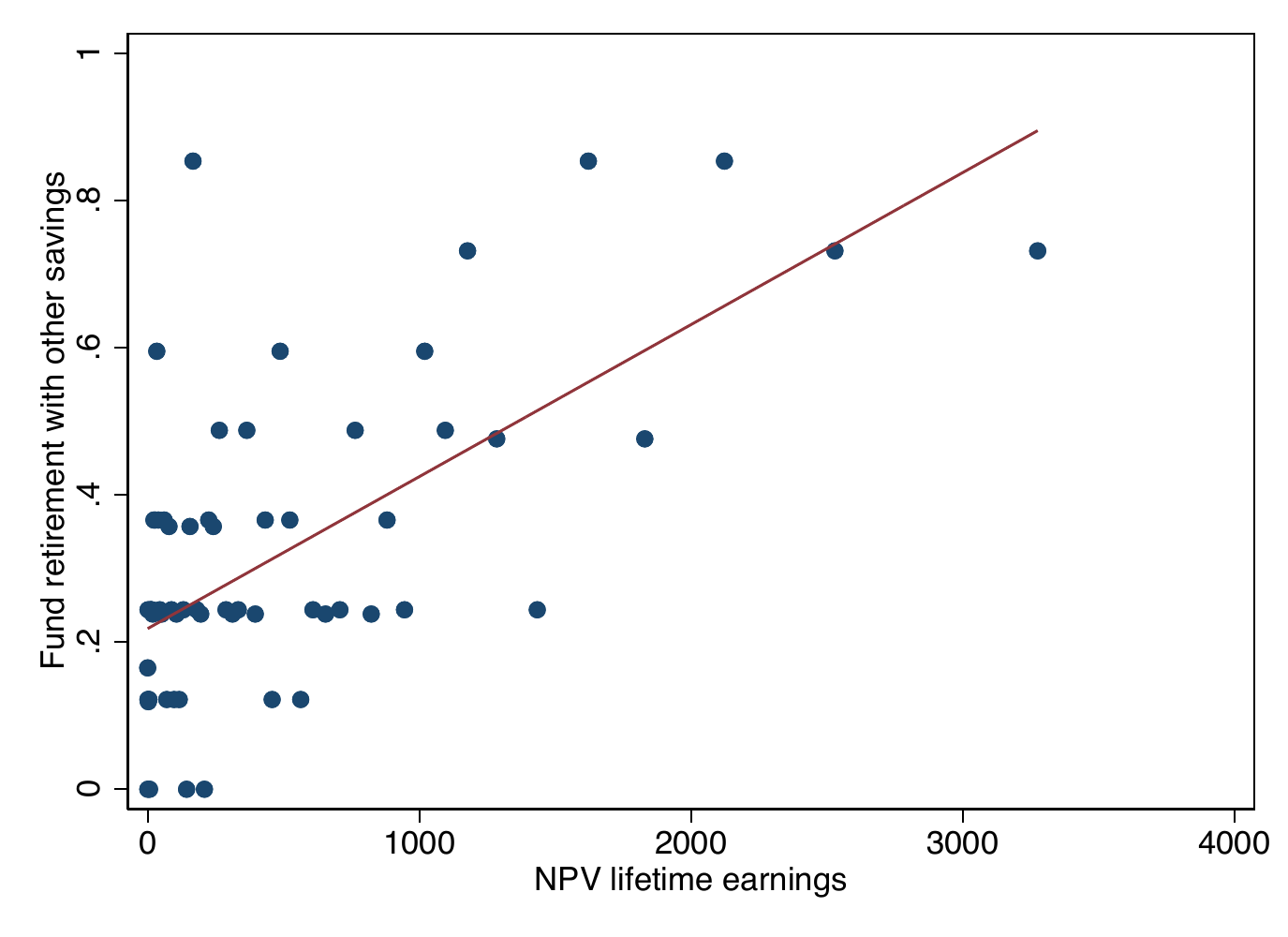}
        \caption{Fund retirement with other (non-mandatory) savings}
    \end{subfigure}
        \vskip\baselineskip
    \begin{subfigure}[b]{\textwidth}
        \centering
        \includegraphics[width=0.475\linewidth]{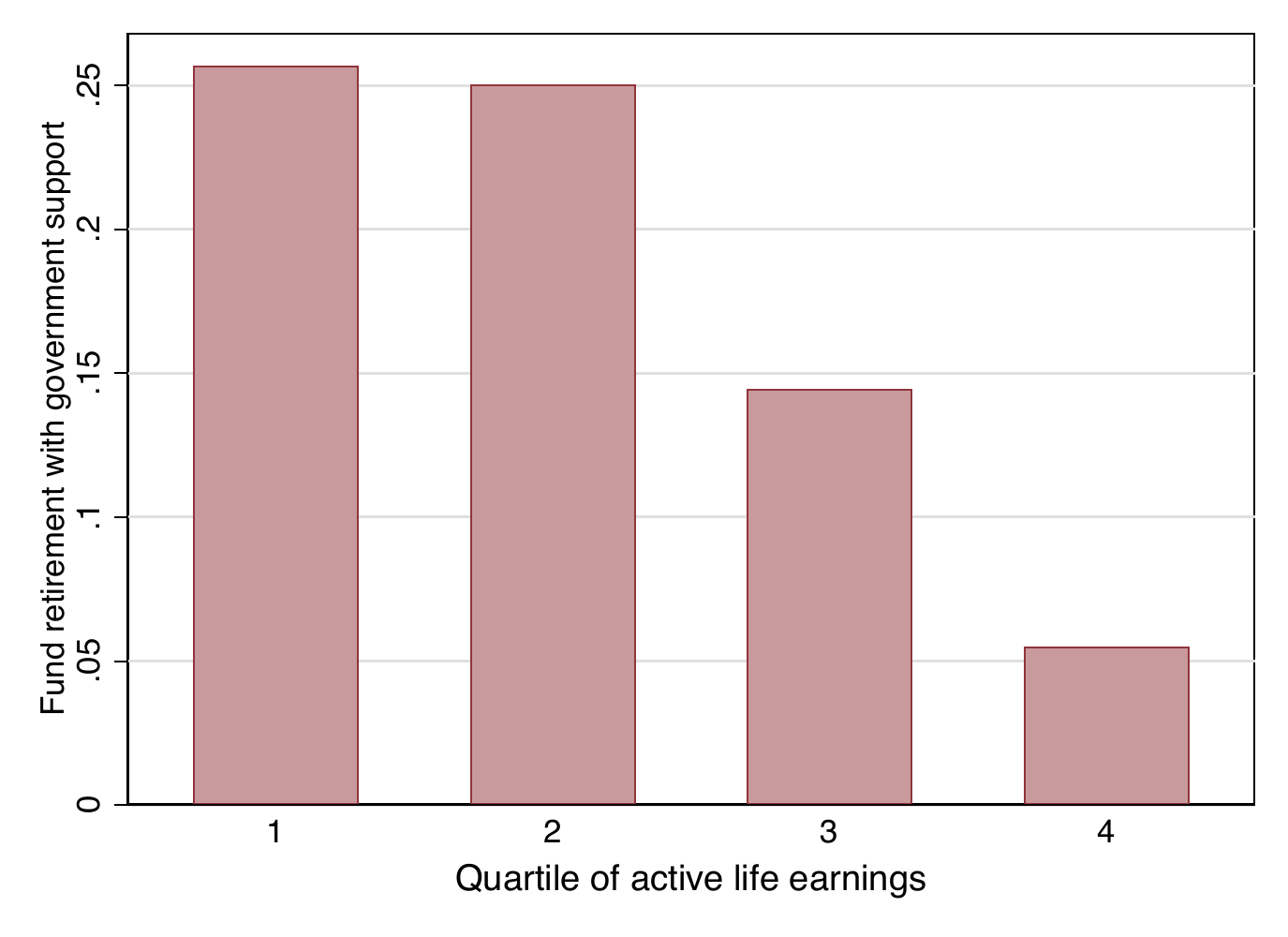}%
        \hfill
        \includegraphics[width=0.475\linewidth]{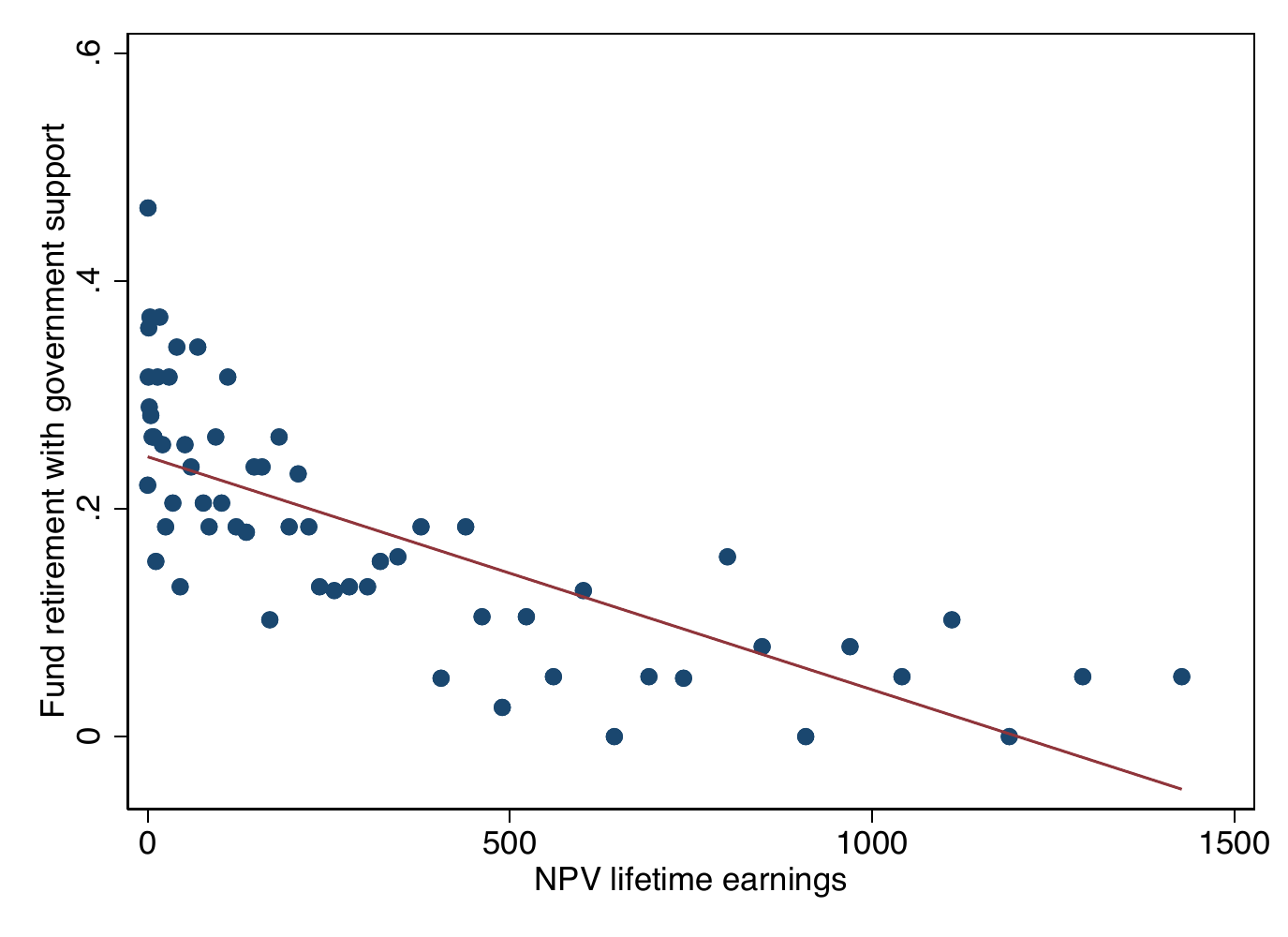}
        \caption{Fund retirement with government support}
    \end{subfigure}
    \caption{Source of income to support household at retirement}
\end{figure}

\begin{figure}[htb]
    \centering
    \begin{subfigure}[b]{\textwidth}
        \centering
        \includegraphics[width=0.475\linewidth]{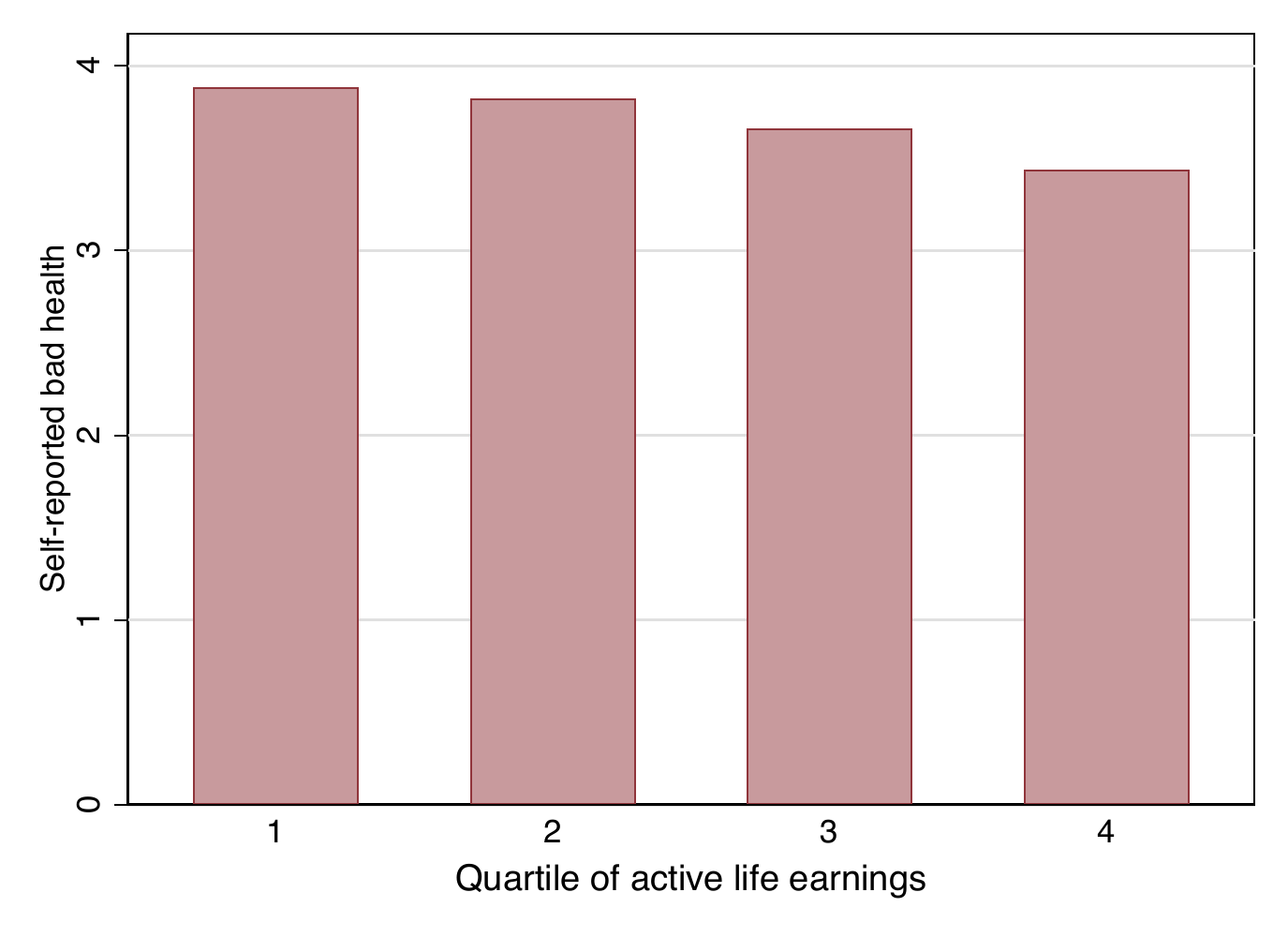}%
        \hfill
        \includegraphics[width=0.475\linewidth]{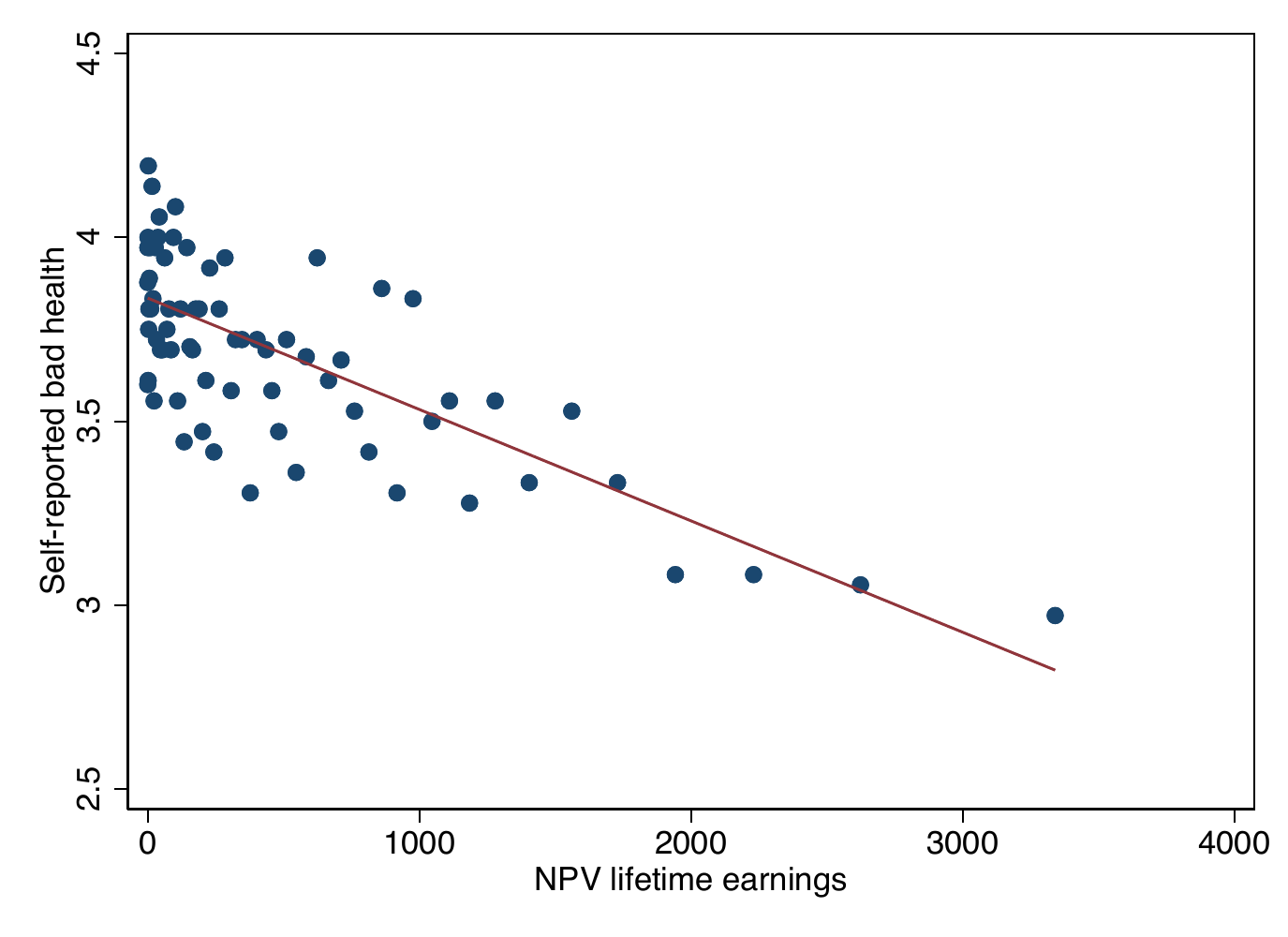}
        \caption{Self-reported health}
    \end{subfigure}
    \vskip\baselineskip
    \begin{subfigure}[b]{\textwidth}
        \centering
        \includegraphics[width=0.475\linewidth]{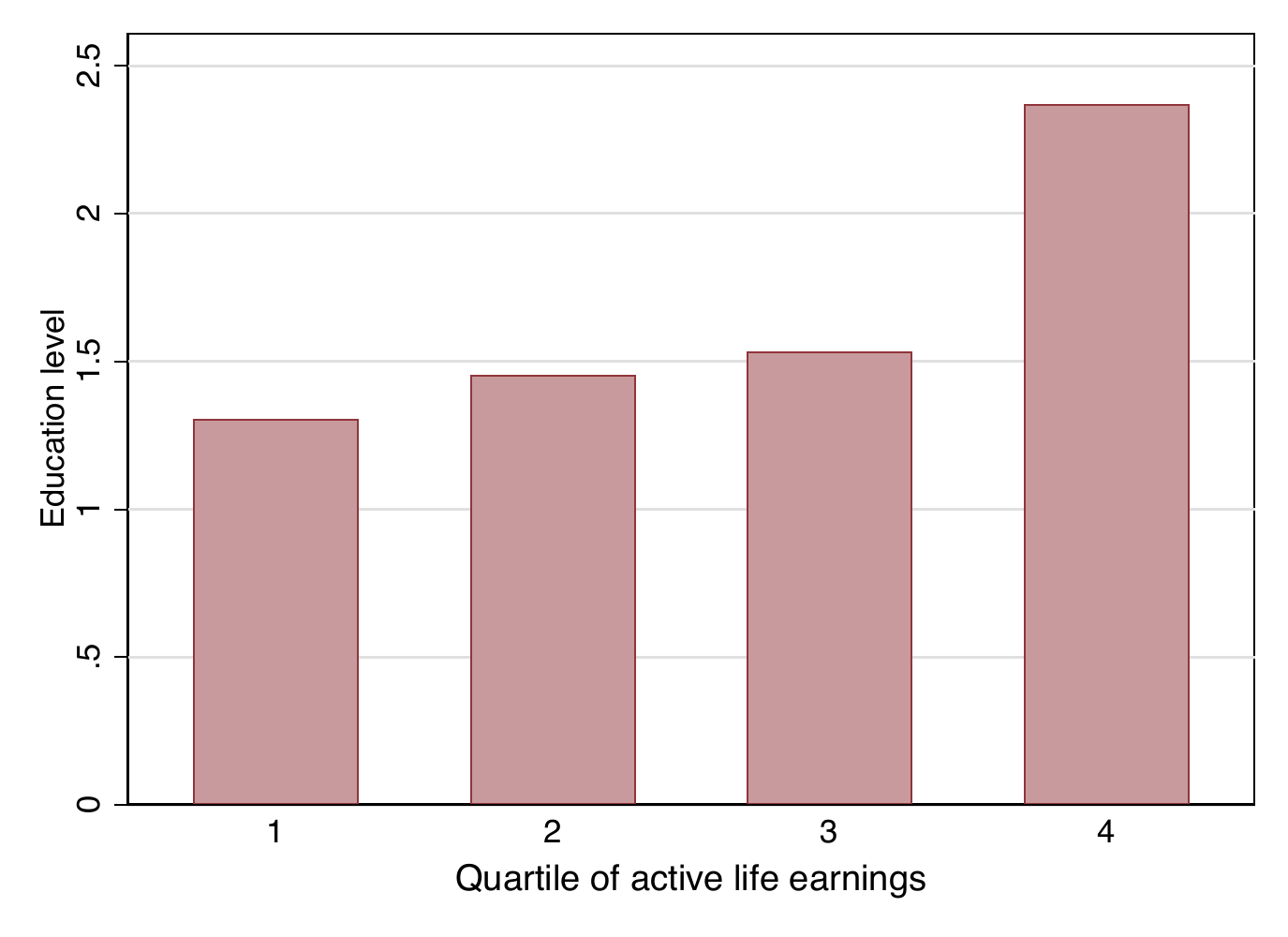}%
        \hfill
        \includegraphics[width=0.475\linewidth]{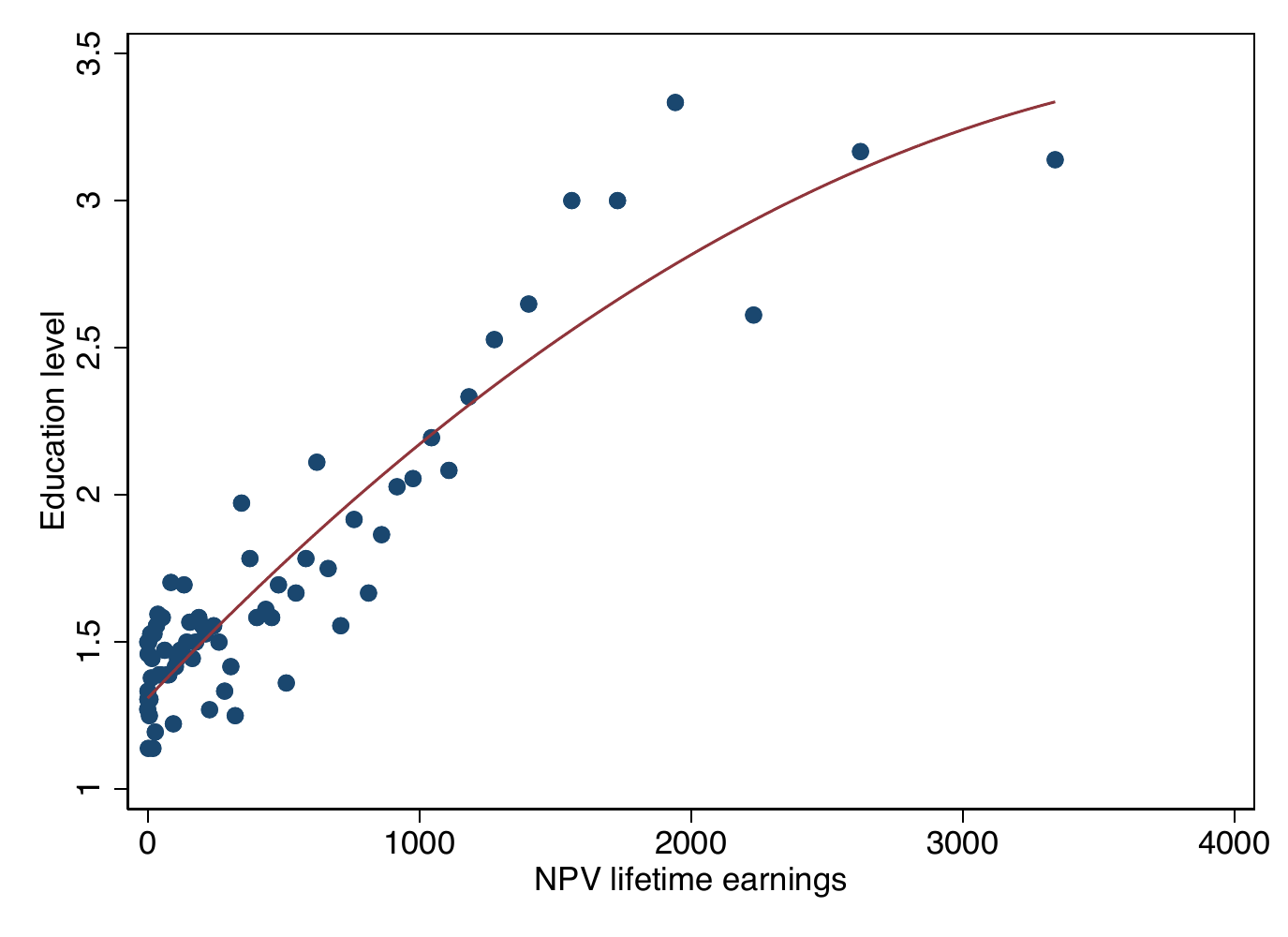}
        \caption{Education level}
    \end{subfigure}
        \caption{Health and education}
\end{figure}

\begin{figure}[htb]
    \centering
    \begin{subfigure}[b]{\textwidth}
        \centering
        \includegraphics[width=0.4\linewidth]{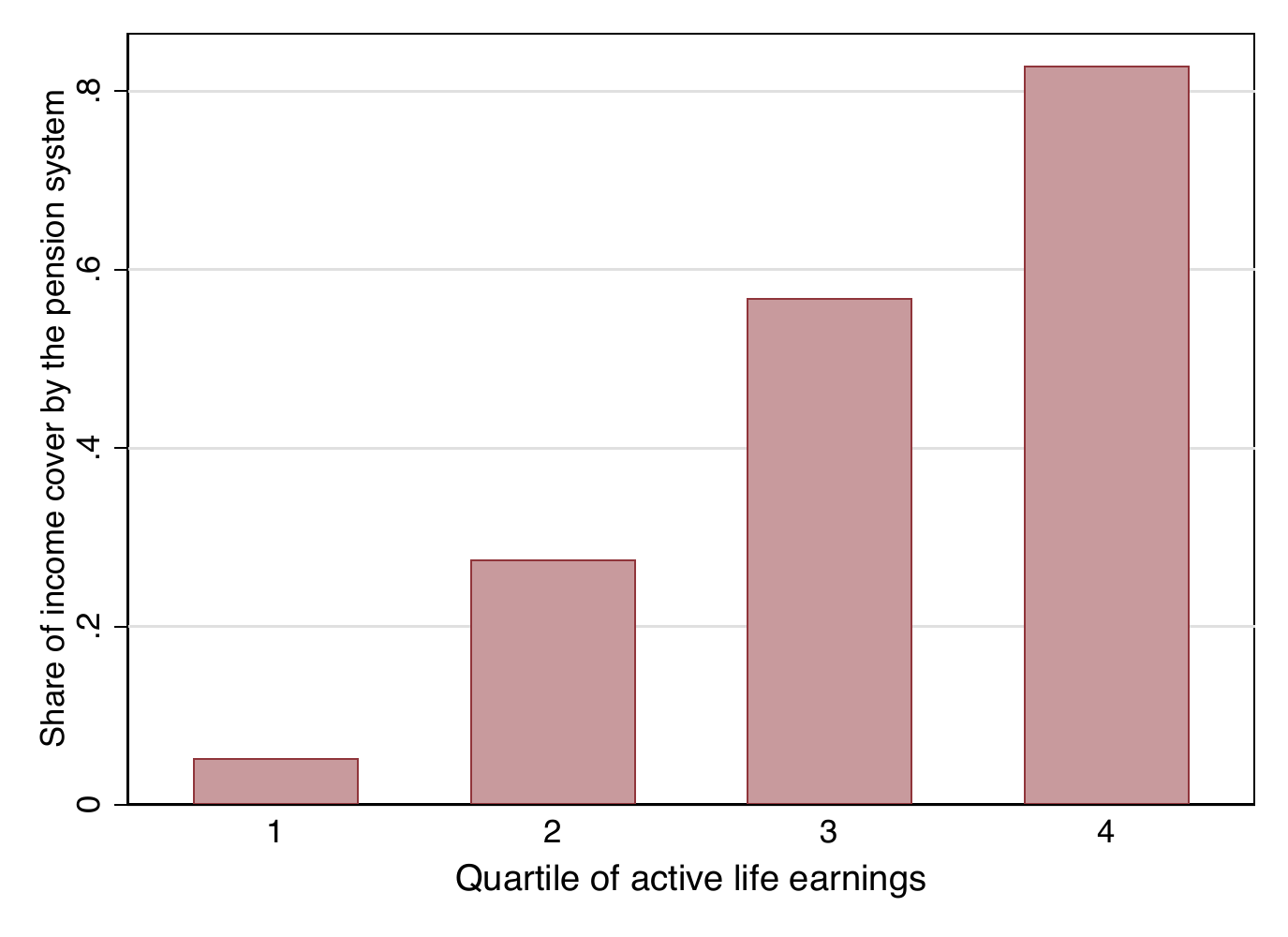}%
        \hfill
        \includegraphics[width=0.4\linewidth]{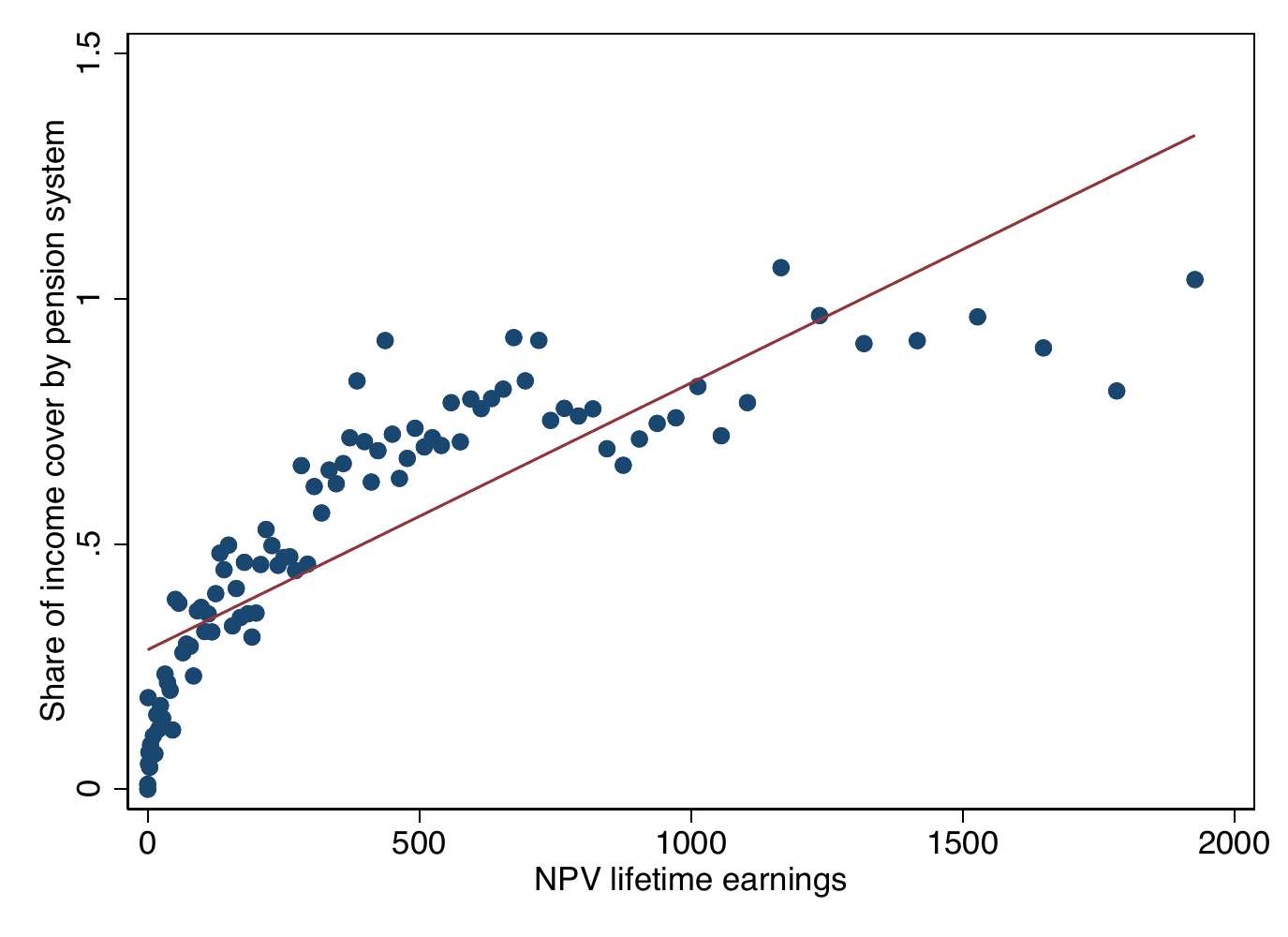}
        \caption{Share of self-reported income covered by the pension system}
    \end{subfigure}
    \vskip\baselineskip
    \begin{subfigure}[b]{\textwidth}
        \centering
        \includegraphics[width=0.4\linewidth]{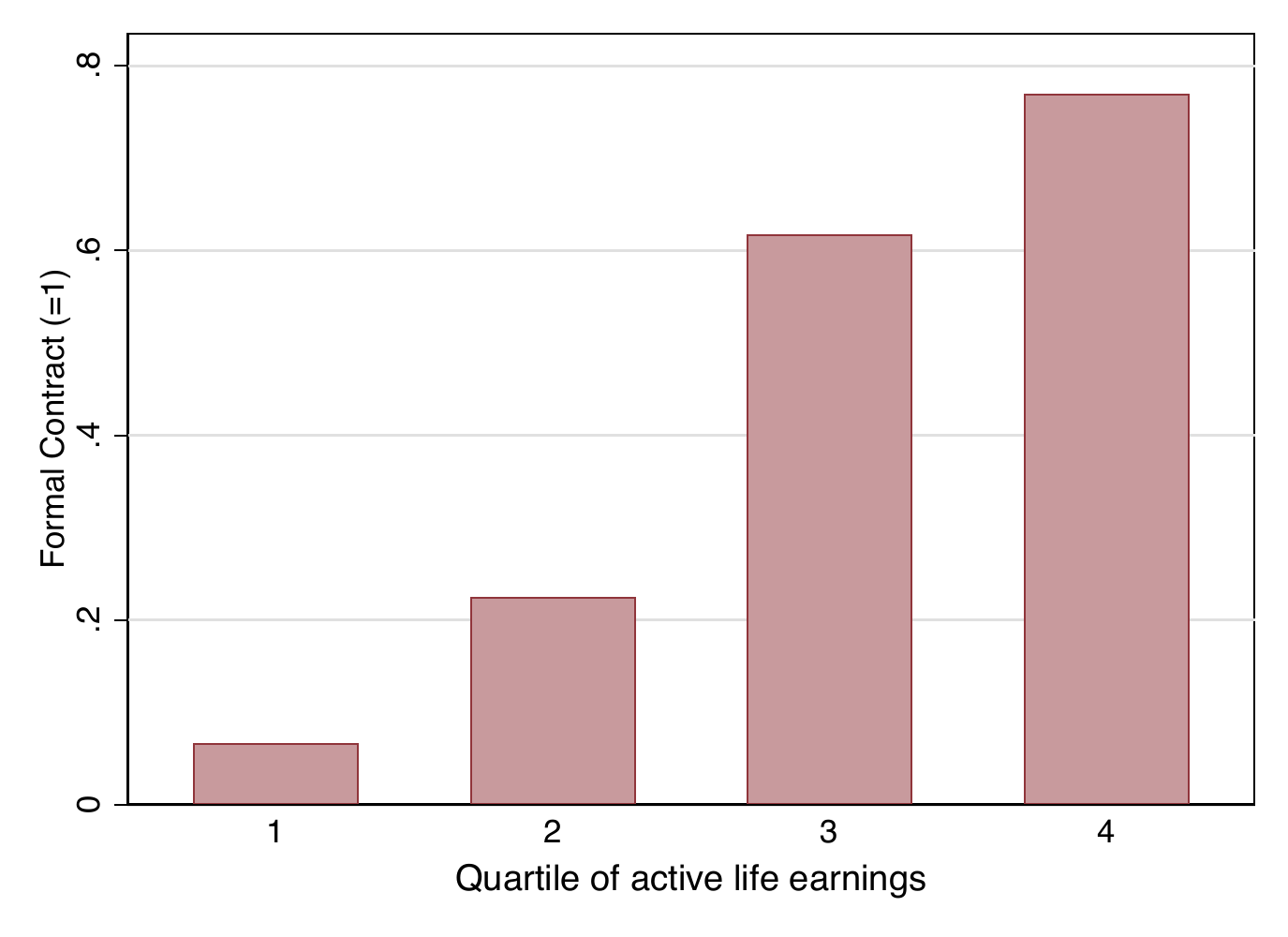}%
        \hfill
        \includegraphics[width=0.4\linewidth]{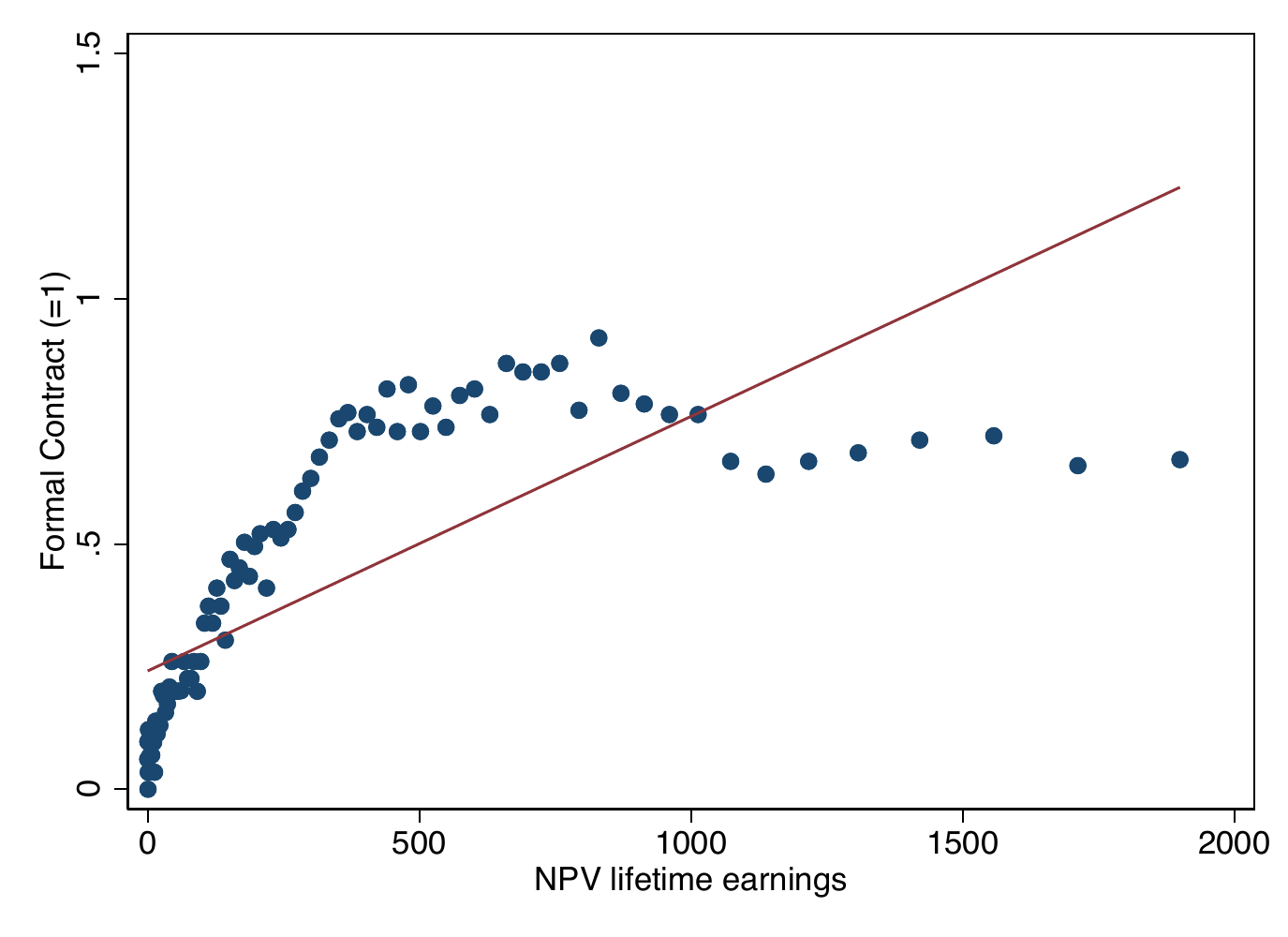}
        \caption{Formal contract}
    \end{subfigure}
            \vskip\baselineskip
    \begin{subfigure}[b]{\textwidth}
        \centering
        \includegraphics[width=0.4\linewidth]{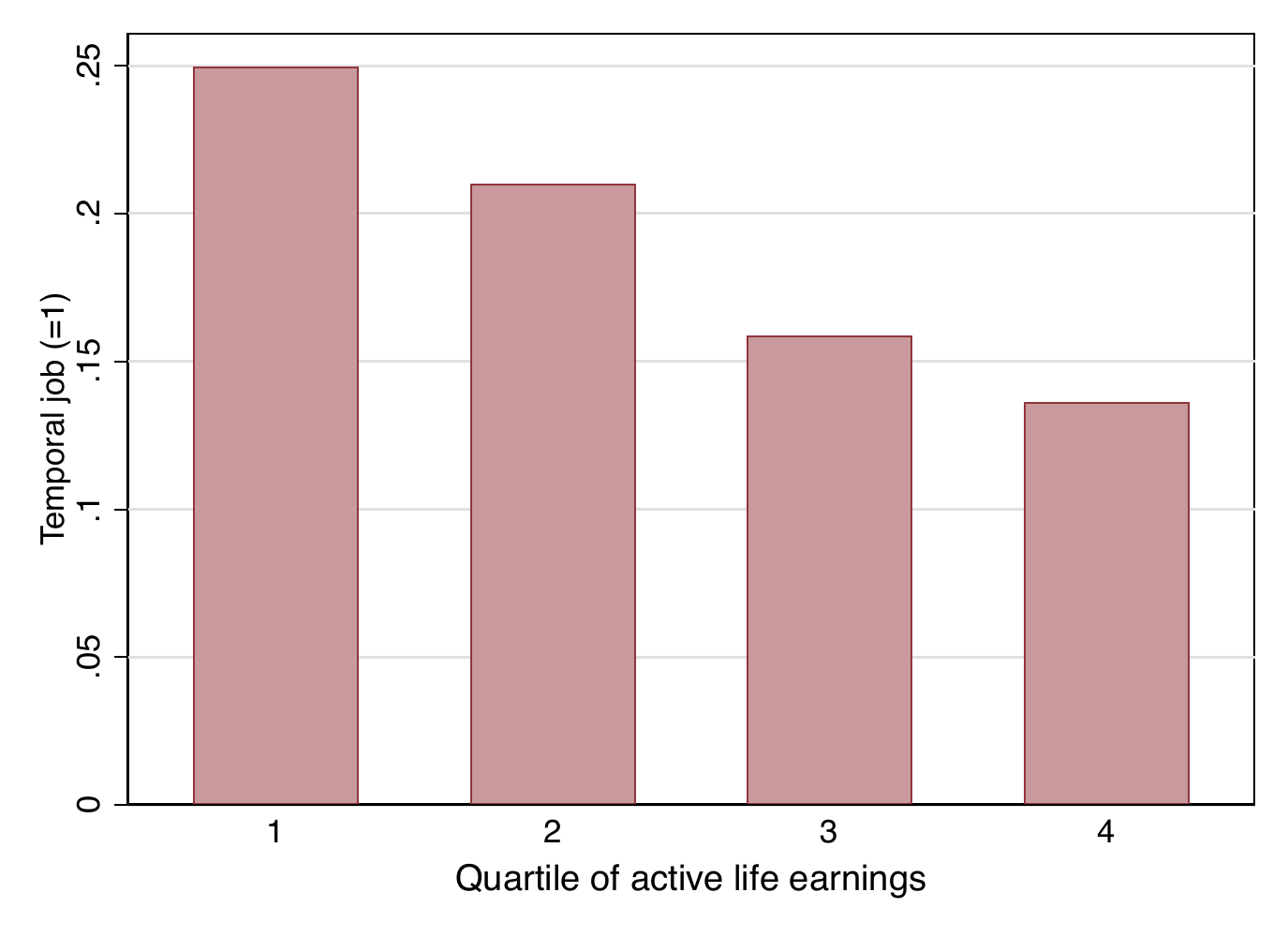}%
        \hfill
        \includegraphics[width=0.4\linewidth]{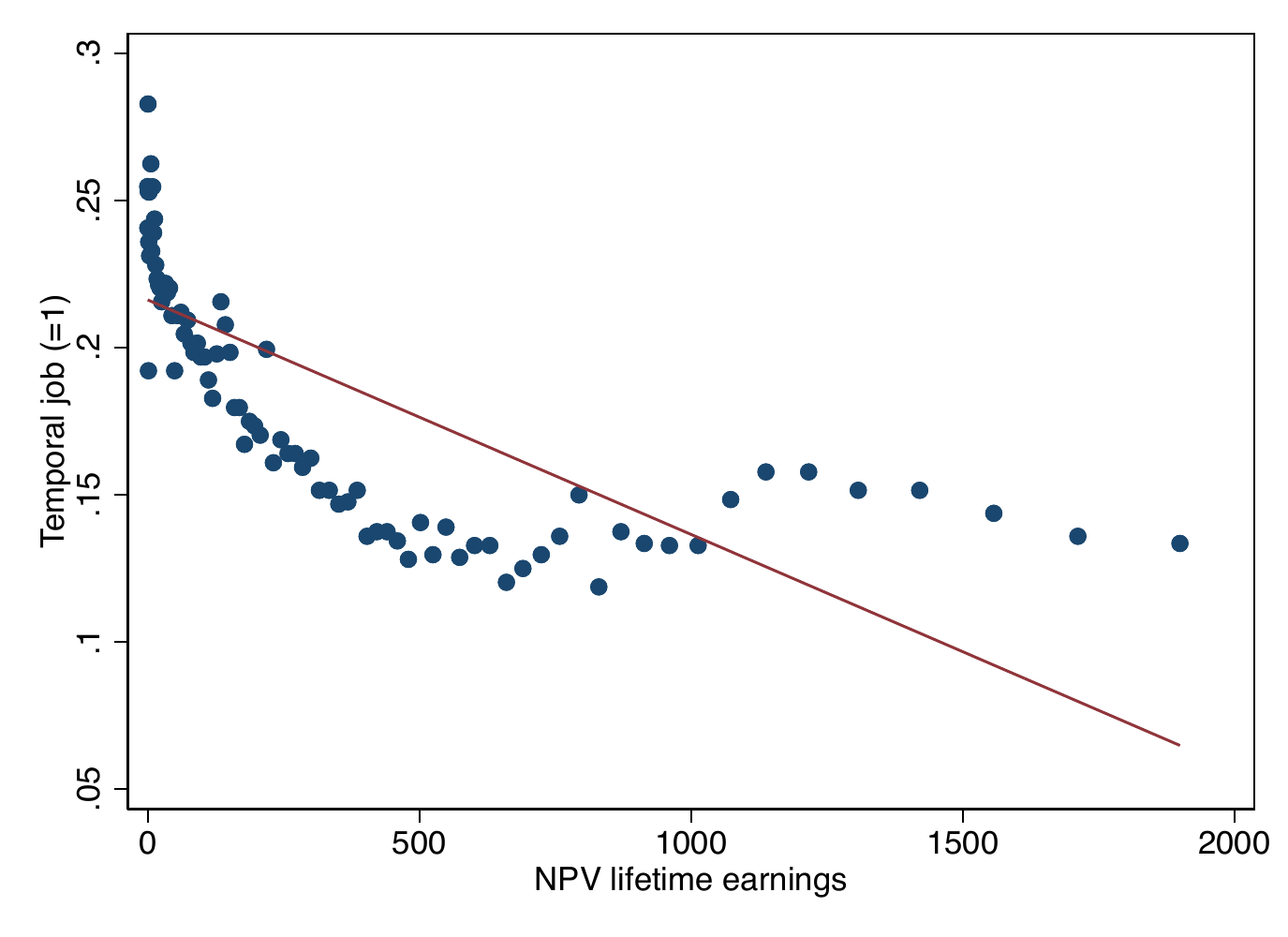}
        \caption{Temporal Job}
    \end{subfigure}
            \vskip\baselineskip
    \begin{subfigure}[b]{\textwidth}
        \centering
        \includegraphics[width=0.4\linewidth]{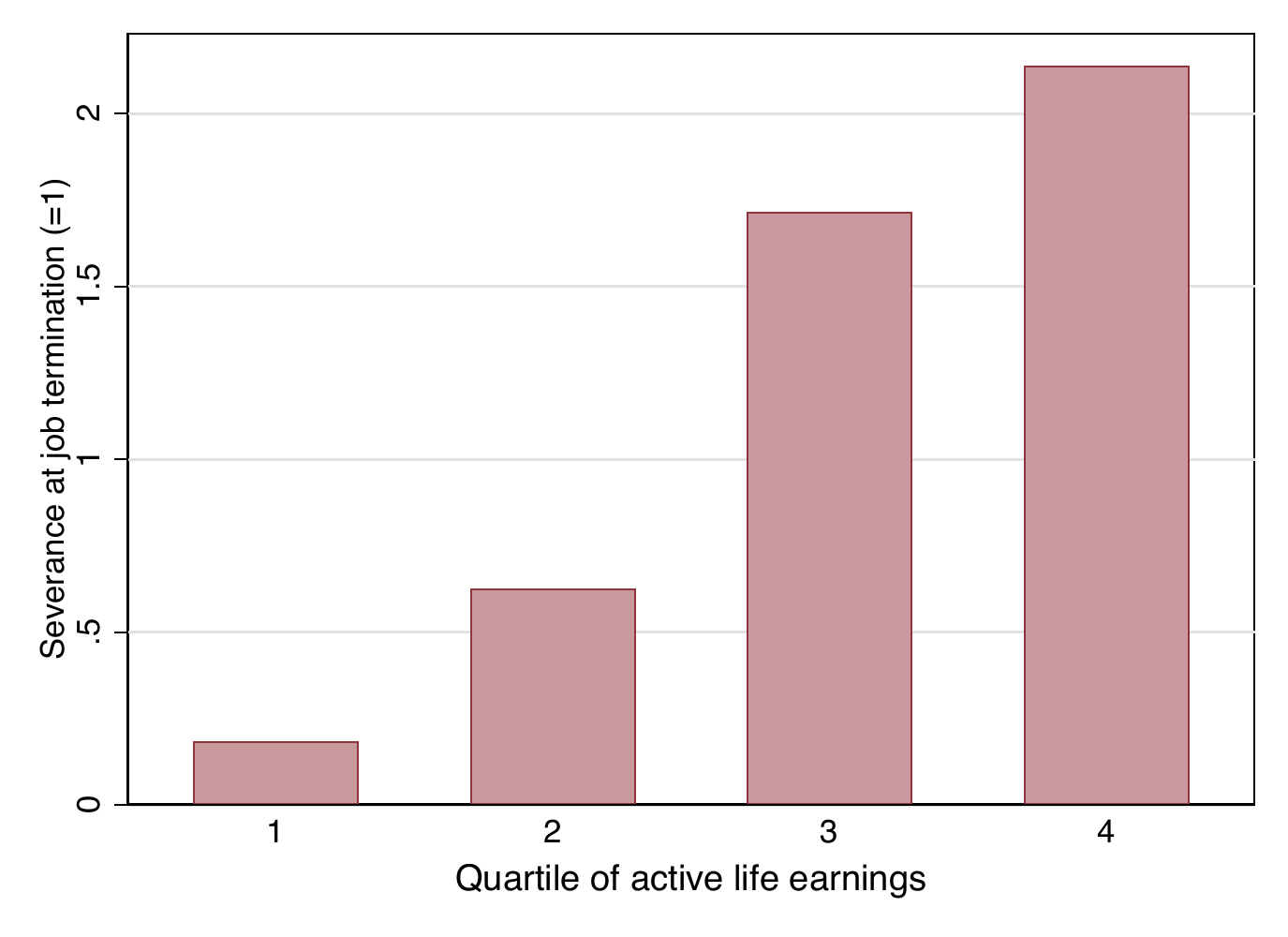}%
        \hfill
        \includegraphics[width=0.4\linewidth]{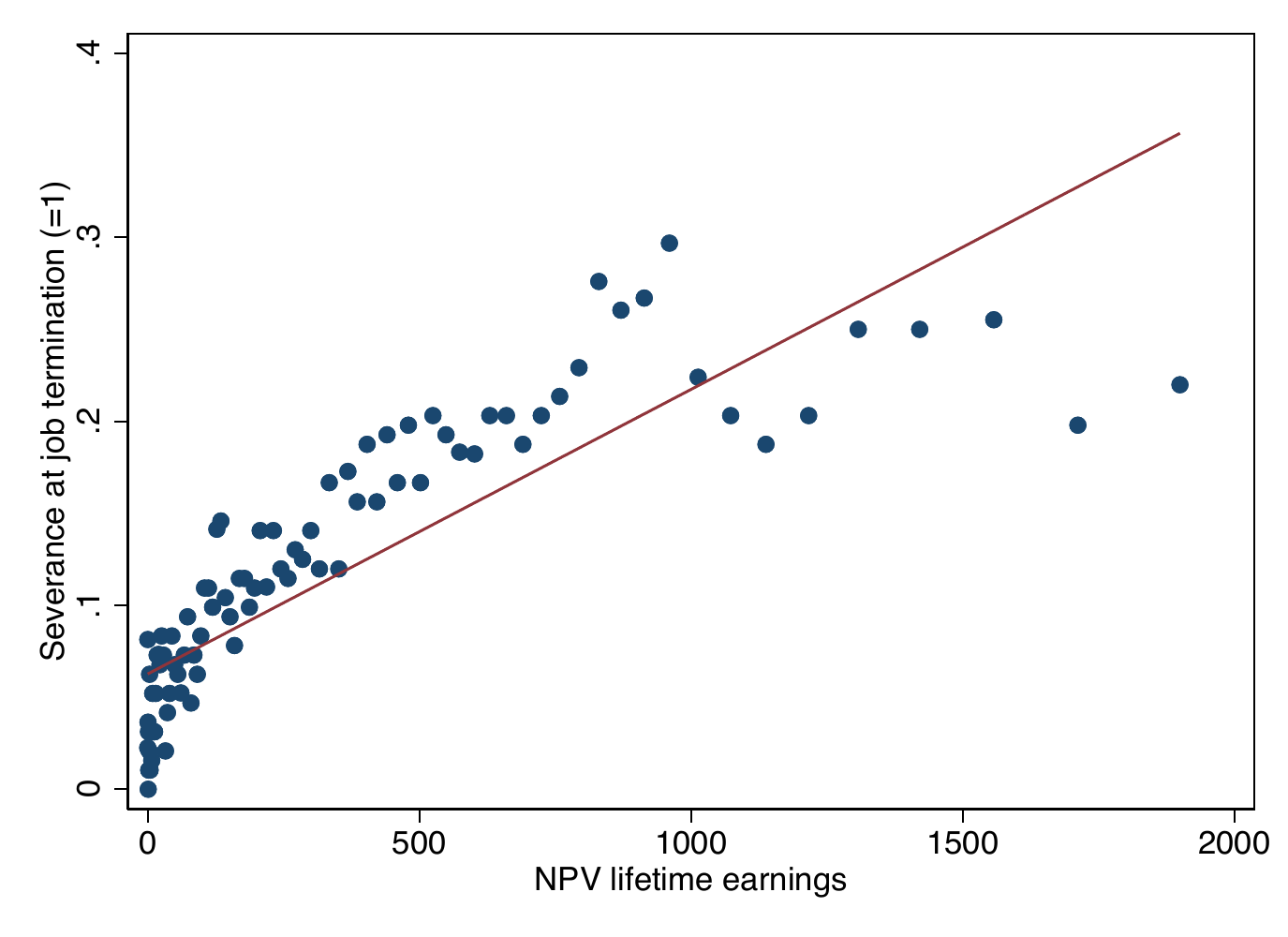}
        \caption{Severance payment at job termination}
    \end{subfigure}
        \caption{Labor characteristics prior retirement}
\end{figure}

\begin{figure}[htb]
    \centering
    \begin{subfigure}[b]{\textwidth}
        \centering
        \includegraphics[width=0.4\linewidth]{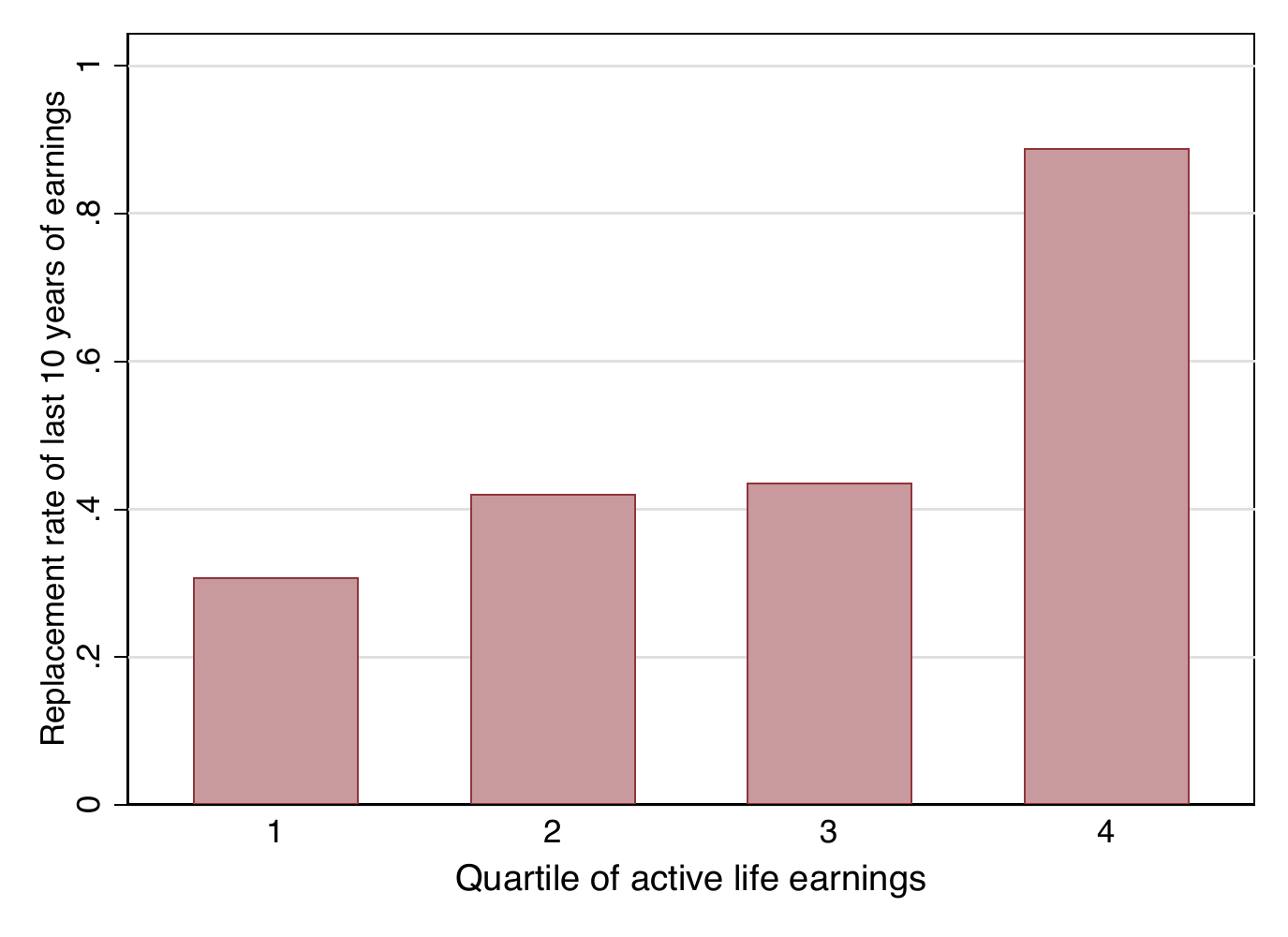}%
        \hfill
        \includegraphics[width=0.4\linewidth]{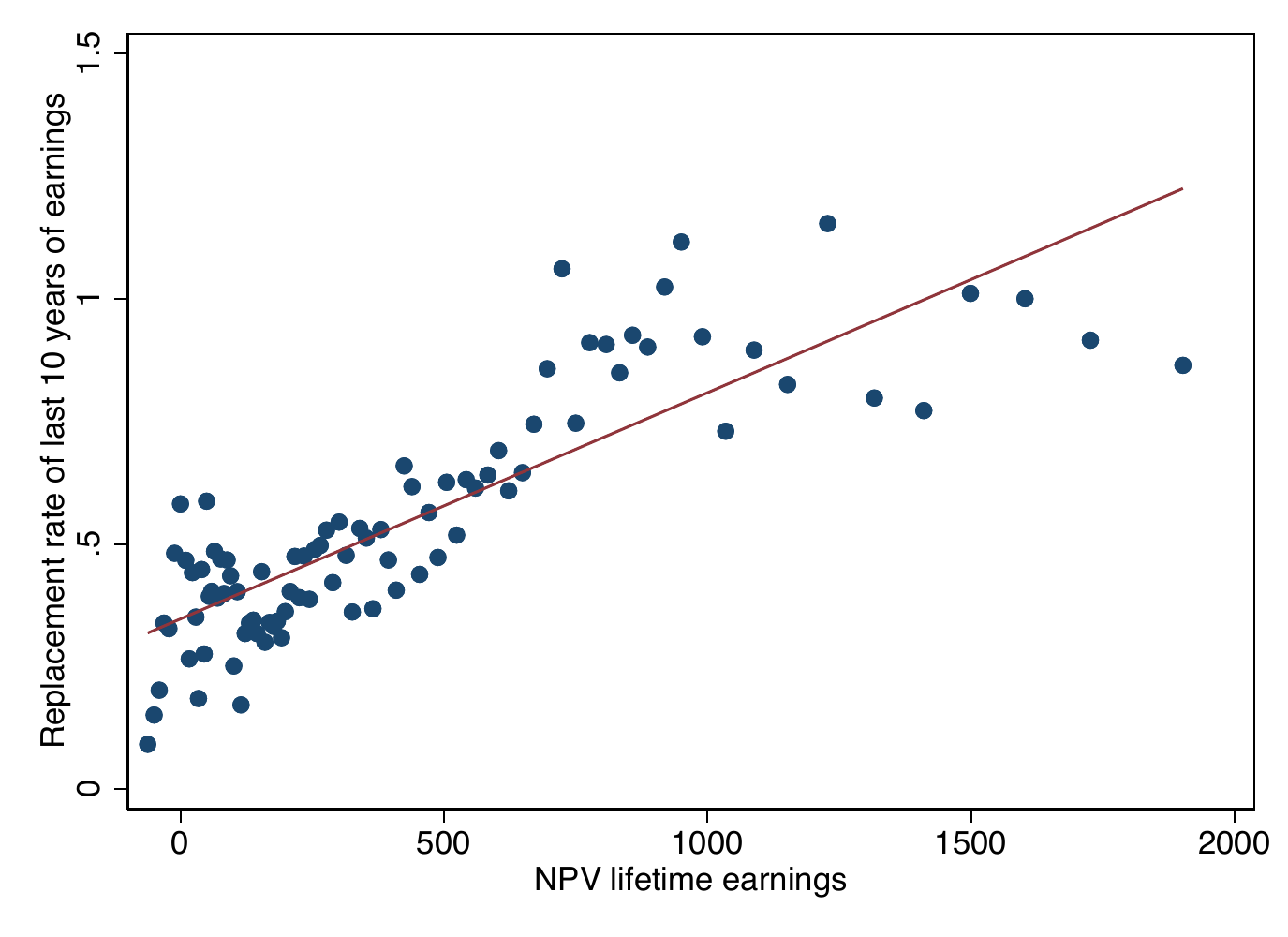}
        \caption{Pension benefit replacement rate of last 10 years of self-reported income}
    \end{subfigure} 
    \vskip\baselineskip
    \begin{subfigure}[b]{\textwidth}
        \centering
        \includegraphics[width=0.4\linewidth]{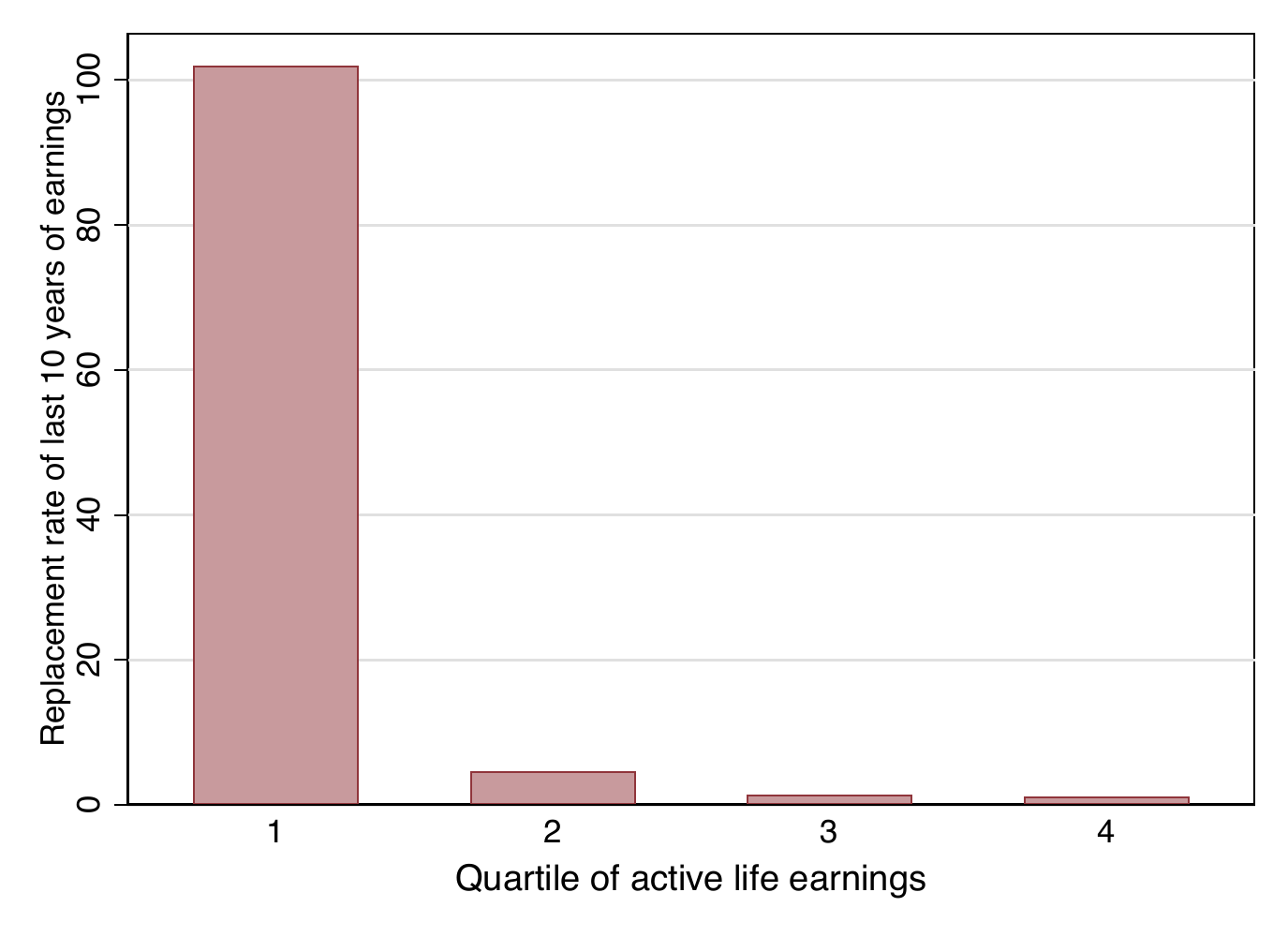}%
        \hfill
        \includegraphics[width=0.4\linewidth]{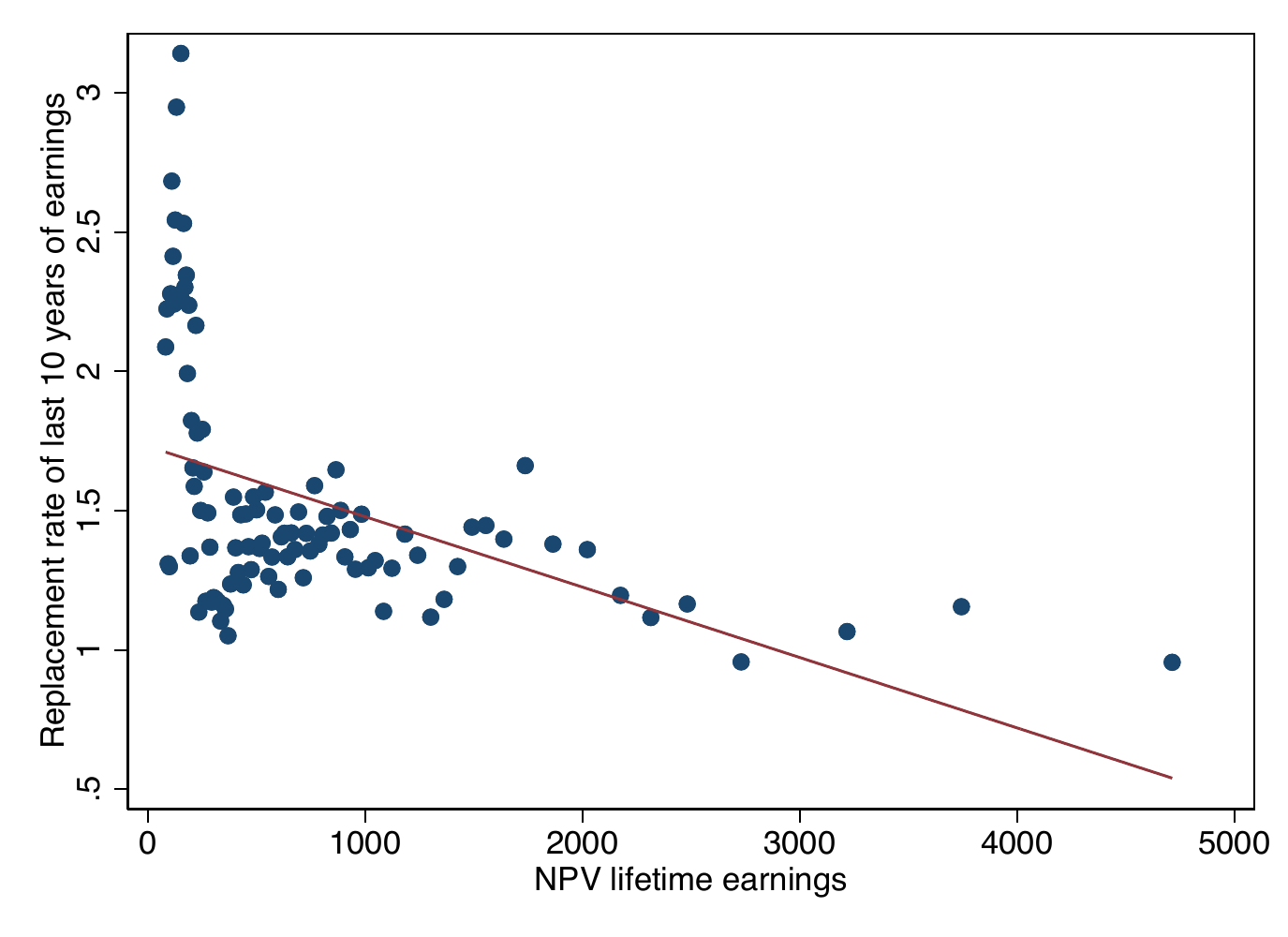}
        \caption{Return rate on pension contribution}
    \end{subfigure}
        \caption{Pension system return on contribution and replacement rate}
             \label{last_ret_preparedness}
\end{figure}

\renewcommand{\thesection}{D}
\setcounter{figure}{0}
\renewcommand\thefigure{D.\arabic{figure}}
\setcounter{table}{0}
\renewcommand\thetable{D.\arabic{figure}}
\section{Heterogeneous life expectancy}
\label{appendix_d_LE}

Life expectancy is not constant across income (see \cite{elo2009social} for a review). I find that workers with a lower present value of lifetime payroll income have a shorter life expectancy. 

To compute the life expectancy by income level I follow \cite{chetty2016association} methodology. Using the survey and administrative data, I restrict the sample to all of the workers that reach retirement age after 2004: 4,114 workers and of those, 631 died during the period 2005-2020. Then, I use the admin data to split the sample of retired workers in those with lifetime payroll earnings above and below the median. To make payroll earnings comparable across time and gender, I control lifetime payroll earnings on retirement date and gender in order to assign workers to each income category. Then, I build empirical survival rates for each group for the 12 years after retirement. Using the fact that mortality rates are log linear (see Chetty et. al. (2016)) I do a income specific log linear approximation of mortality rates to extrapolate survival rates for ages 77 to 86. The fit is almost perfect, with a R-squared of above 97\%. Table \ref{mortality_rate} shows the regressions between the log of mortality rate and age. Note that this extrapolation is conditional on income. Then, I use reports of the Chilean statistical institute of mortality rates for the whole population to extrapolate survival rates for the years 86-94. 

Figure \ref{survival_rate} shows the survival rate for each income group. Workers with lifetime payroll income above the median live longer in average. The life expectancy conditional on reaching retirement age is 86.4 years for those above the median and 83.1 for those below the median. This means that in average those with payroll income above the median have a 18.2\% larger retirement time than those below the median.

\begin{figure}[h!]
    \centering
    \includegraphics[scale=.8]{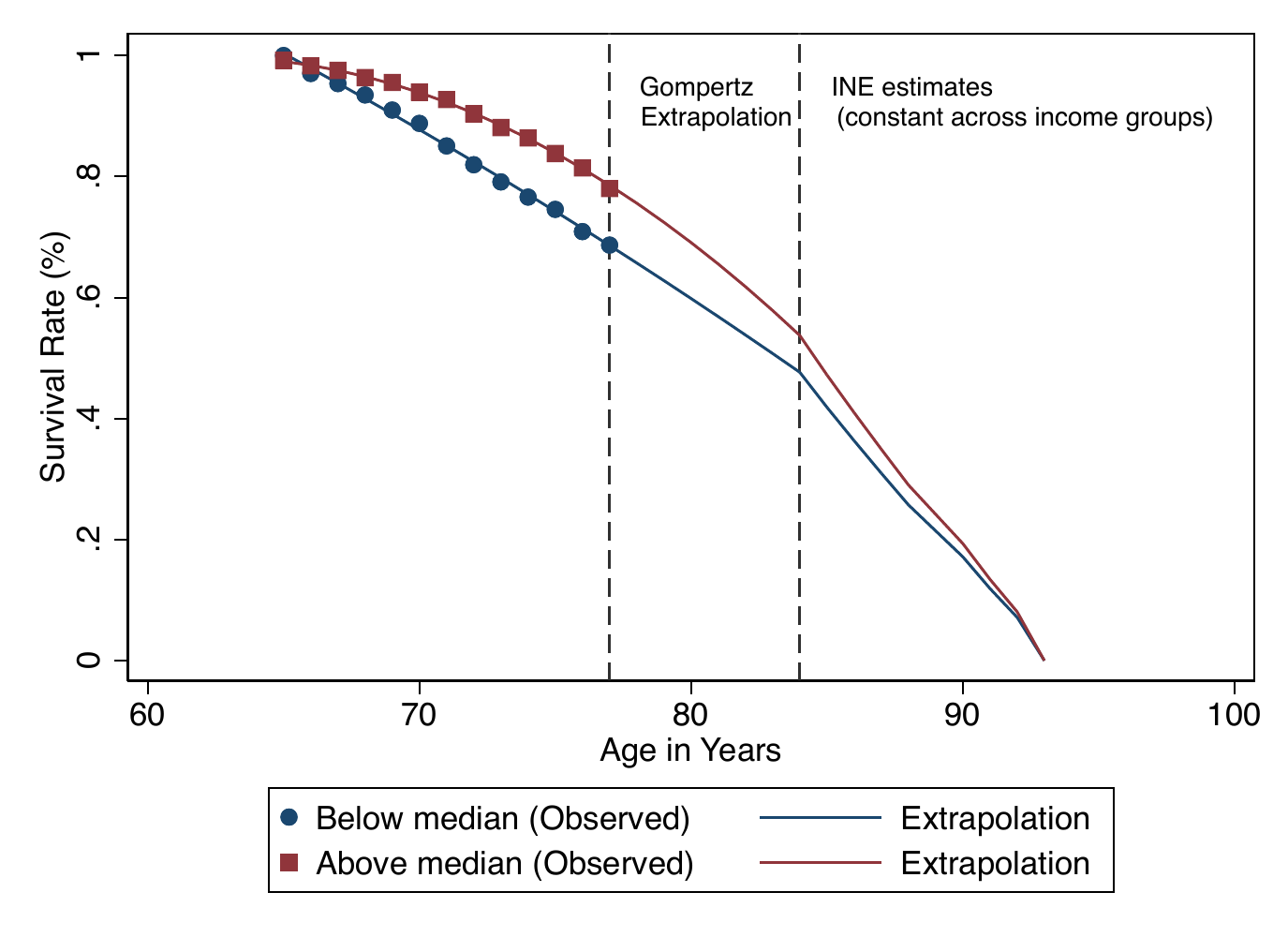}
    \caption{Survival rate and lifetime payroll income}
          \label{survival_rate}
\end{figure}

\begin{table}[h!]
    \centering
    \begin{tabular}{lcc} \hline
 & (1) & (2) \\
 & log(mortality rate) & log(mortality rate) \\ \hline
 &  &  \\
Age & 0.027*** & 0.019*** \\
 & [0.000] & [0.001] \\
Constant & -1.746*** & -1.258*** \\
 & [0.029] & [0.074] \\
 &  &  \\
Observations & 15 & 15 \\
R-squared & 0.997 & 0.963 \\
 Income & Below Median & Above Median \\ \hline
\end{tabular}

    \caption{Mortality rate and age}
    \label{mortality_rate}
    \medskip 
    \begin{minipage}{0.96\textwidth} 
    {\footnotesize \textit{Notes:}  *, **, *** indicates significance at 90\%, 95\% and 99\% of confidence, respectively. Standard errors are inside []. } 
    \end{minipage}
\end{table}

\renewcommand{\thesection}{E}
\setcounter{figure}{0}
\renewcommand\thefigure{E.\arabic{figure}}
\section{Other Figures and Tables}
\label{apendix_E}
\begin{figure}[t]
\centering
   \begin{subfigure}{0.49\linewidth} \centering
     \includegraphics[scale=0.58]{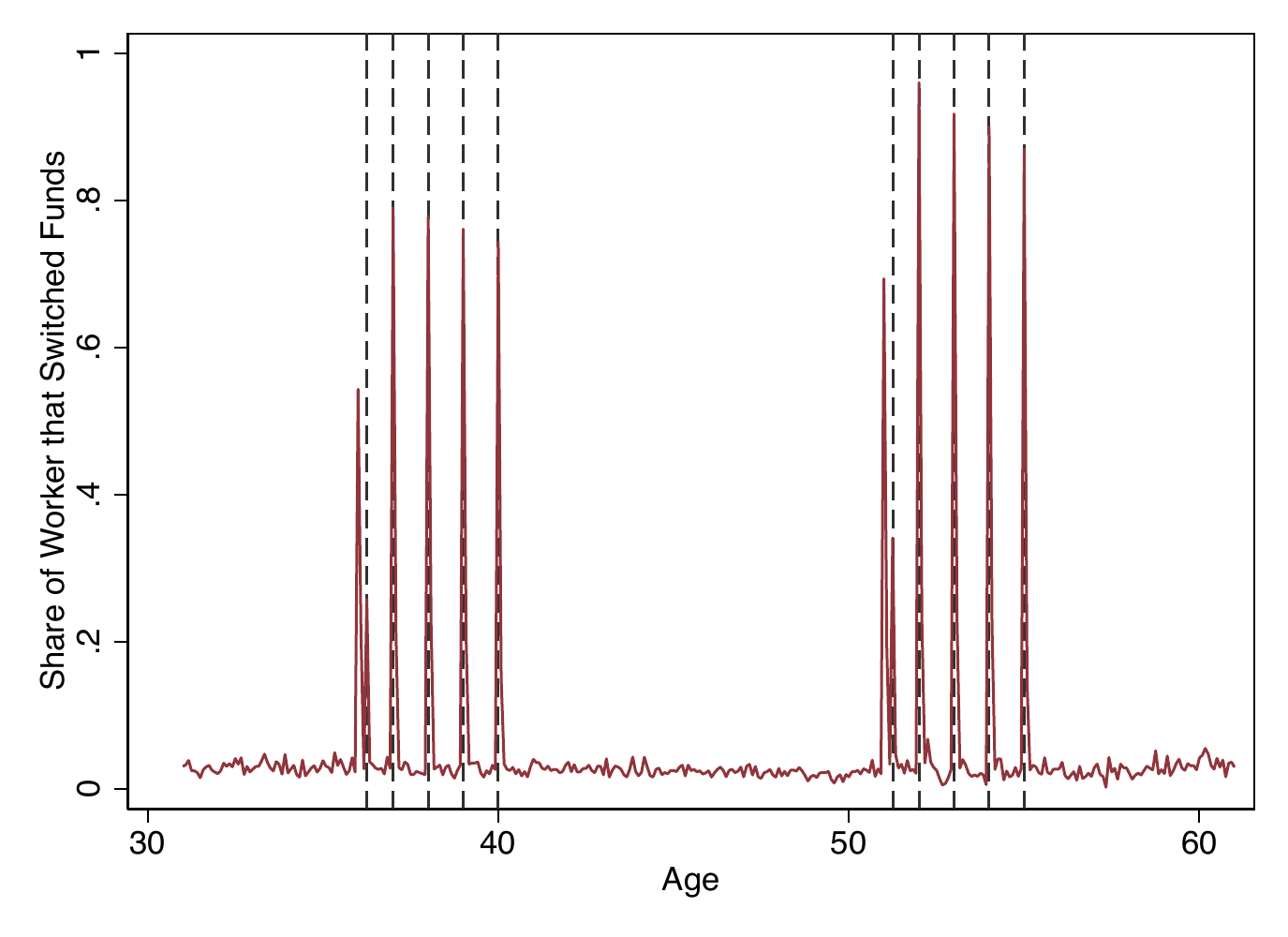}%
     \caption{Females}\label{fig:figA}
   \end{subfigure}
   \begin{subfigure}{0.49\linewidth} \centering
     \includegraphics[scale=0.58]{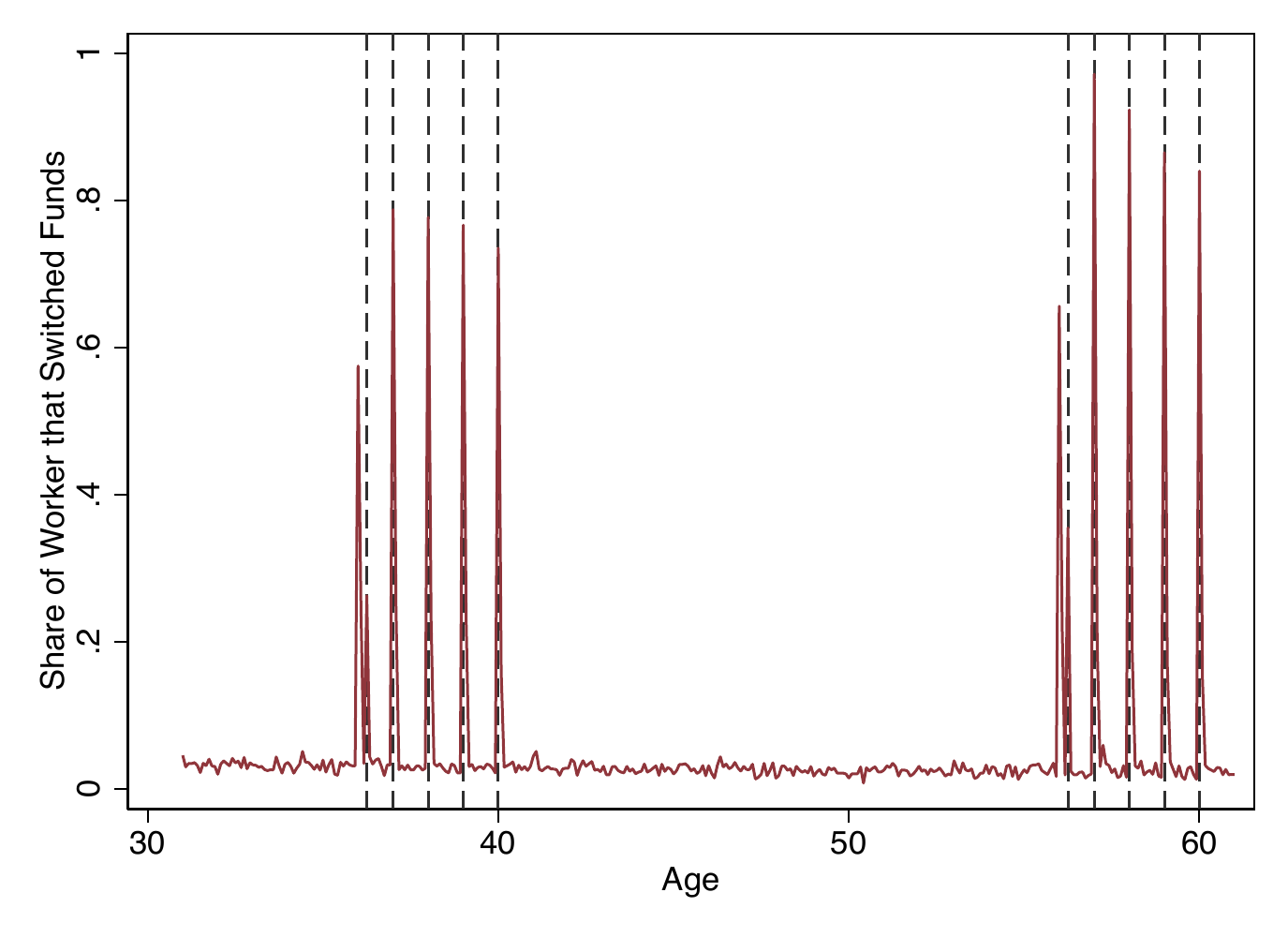}
     \caption{Males}\label{fig:figB}
   \end{subfigure}
    \caption{Share of switchers by age (months) in period 2008-2014}
\end{figure}

\begin{figure}[h!] 
  \begin{subfigure}[b]{0.5\linewidth}
    \centering
    \includegraphics[width=1\linewidth]{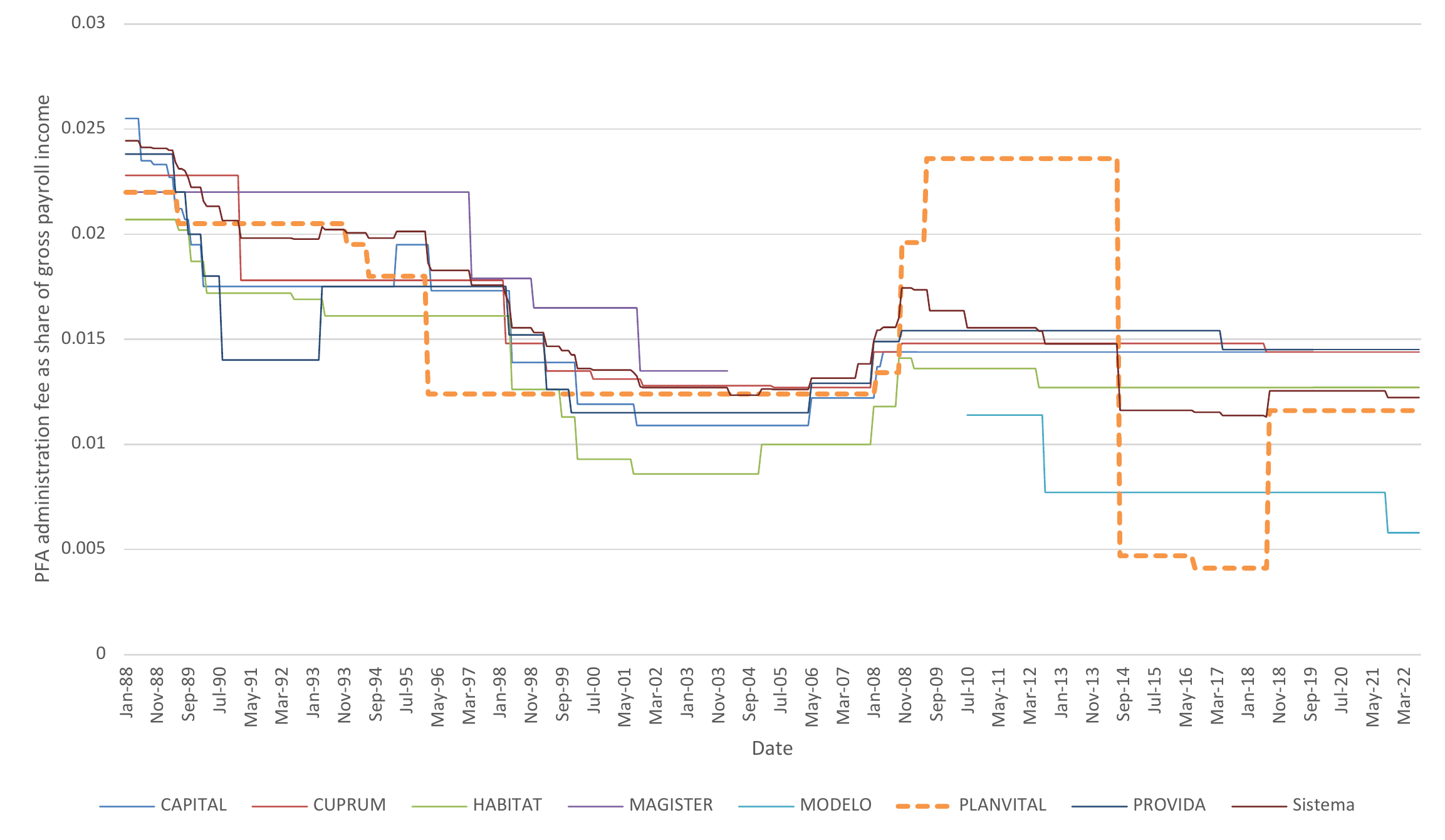} 
    \caption{Management fees since the system inception} 
    \label{fig7:a} 
    \vspace{4ex}
  \end{subfigure}
  \begin{subfigure}[b]{0.5\linewidth}
    \centering
    \includegraphics[width=1\linewidth]{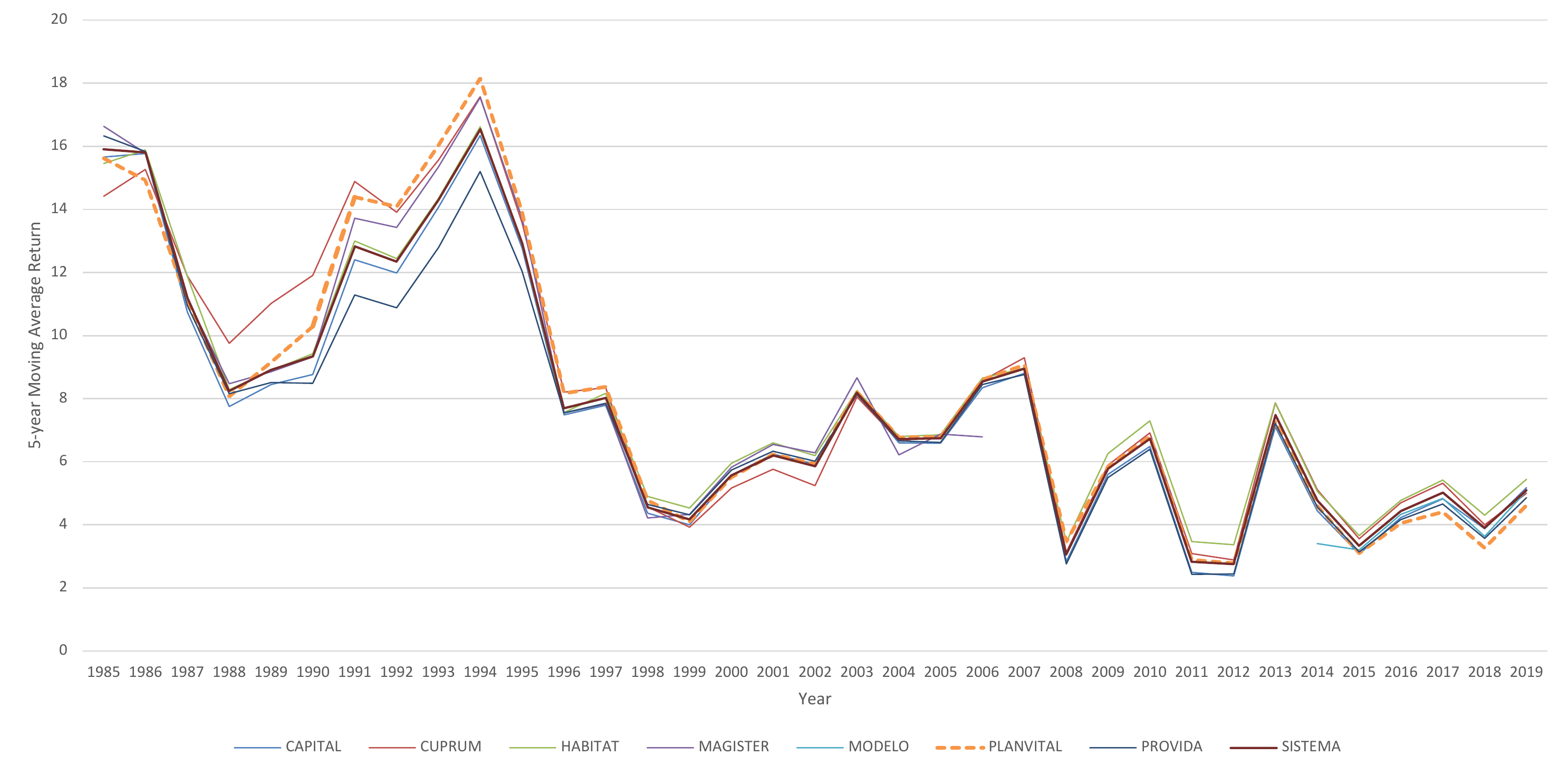} 
    \caption{5-year moving average since the inception of the system for fund C} 
    \label{fig7:b} 
    \vspace{4ex}
  \end{subfigure} 
  \caption{Pension fund administrators management fees and returns}
  \label{historical_fees_return_ts} 
\end{figure}

\end{document}